\documentclass[msom,nonblindrev]{informs3} 

\OneAndAHalfSpacedXII 

\usepackage{titletoc}
\usepackage{natbib}

 \bibpunct[, ]{(}{)}{,}{a}{}{,}%
  \def\bibfont{\fontsize{11}{12}\selectfont} 
 \def\bibsep{\smallskipamount} 
 \def\bibhang{24pt}%

\usepackage{hyperref}
\hypersetup{
	colorlinks=true,
	linkcolor=black,
	filecolor=black,
	urlcolor=black,
	citecolor=blue,
}

\usepackage{endnotes}
\let\footnote=\endnote

\newcommand{\NDC}{\ensuremath{\mathrm{NDC}}}
\newcommand{\MDC}{\ensuremath{\mathrm{MDC}}}
\newcommand{\VDD}{\ensuremath{\mathrm{VDD}}}

\newcommand{\data}{\ensuremath{\mathrm{data}}}
\newcommand{\UP}{\ensuremath{\mathrm{UP}}}

\usepackage{makecell}
\usepackage{caption} 
\usepackage{mathtools}
\usepackage{booktabs} 
\usepackage{amssymb}
\usepackage{pifont}
\usepackage[margin=1in]{geometry}
\usepackage{fix-cm}
\usepackage{multirow}
\usepackage{tabularx}
\usepackage{amsmath}
\let\oldproof\proof
\let\oldendproof\endproof
\renewenvironment{proof}{\oldproof\normalfont}{\oldendproof}

\usepackage{algorithm}
\usepackage{algpseudocode}
\usepackage{enumitem}
\usepackage{geometry}

\usepackage{tikz}
\usepackage{pgfplots}
\usepackage{subcaption}
\pgfplotsset{compat=1.17}

\usetikzlibrary{positioning,decorations.pathreplacing,arrows.meta,fit,backgrounds}
\TheoremsNumberedThrough     
\ECRepeatTheorems

\def\qed{\hspace*{10pt}\hfill{$\square$}\hfilneg\par}
\EquationsNumberedThrough    

\newcommand{\ben}{\begin{equation}}   \newcommand{\een}{\end{equation}}
\newcommand{\bea}{\begin{eqnarray}}   \newcommand{\eea}{\end{eqnarray}}
\newcommand{\beq}{\begin{eqnarray*}}  \newcommand{\eeq}{\end{eqnarray*}}
\newcommand{\PaperTitle}{The Source of Data Value: Consumer Disclosure, Access Design, and Platform Monetization}
\MANUSCRIPTNO{} 
\pgfplotsset{
    phase diagram/.style={
        width=1.05\linewidth,
        height=6.5cm,
        xmin=0, xmax=2.2,
        ymin=0, ymax=1.5,
        xlabel={$v$ (Valuation)},
        ylabel={$c$ (Data Cost)},
        xtick={0, 0.5, 1, 1.5, 2},
        xticklabels={0, $\frac{1}{2}$, $t$, $\frac{3}{2}$, $2t$},
        ytick={0, 0.5, 1, 1.5},
        yticklabels={0, $\frac{1}{2}$, $t$, $\frac{3}{2}$},
        axis lines=left,
        tick label style={font=\scriptsize},
        label style={font=\footnotesize},
        enlargelimits=false,
        clip=false 
    },
    phase diagram sw/.style={
        width=0.7\linewidth, 
        height=9cm,
        xmin=0, xmax=2.2,
        ymin=0, ymax=1.5,
        xlabel={$v$ (Valuation)},
        ylabel={$c$ (Data Cost)},
        xtick={0, 0.5, 1, 1.5, 2},
        xticklabels={0, $\frac{1}{2}$, $t$, $\frac{3}{2}$, $2t$},
        ytick={0, 0.5, 1, 1.5},
        yticklabels={0, $\frac{1}{2}$, $t$, $\frac{3}{2}$},
        axis lines=left,
        tick label style={font=\small},
        label style={font=\normalsize},
        enlargelimits=false,
        clip=true
    }
}
\begin{document}


\RUNAUTHOR{Liu, Lu, and Wang}

\RUNTITLE{\PaperTitle}

\TITLE{\PaperTitle}

\ARTICLEAUTHORS{%
\AUTHOR{Hexiao Liu}
\AFF{School of Management, Zhejiang University, \EMAIL{liuhexiao123@163.com}}
\AUTHOR{Lijian Lu}
\AFF{HKUST Business School, Hong Kong University of Science and Technology, \EMAIL{lijianlu@ust.hk}}
\AUTHOR{Mingzheng Wang}
\AFF{School of Management, Zhejiang University, \EMAIL{wangmzh@zju.edu.cn}}
} 

\ABSTRACT{%
Digital marketplaces increasingly generate revenue from two interrelated activities: transaction commissions and seller-facing data services. A seller may purchase data to boost revenue or to avoid competing with an informed rival. This study investigates how consumer disclosure rights influence these motives and the platform's incentive to provide data.
We develop a game-theoretic model of a commissioned platform, two differentiated sellers, and privacy-sensitive consumers. The platform's data value arises from a \emph{sales-revenue effect} due to bilateral data use and a \emph{competition-threat effect} stemming from asymmetric access. Under mandatory collection, the platform can monetize defensive adoption, even if bilateral personalization decreases seller revenue. Under voluntary disclosure,  anonymous purchasing and strictly positive privacy costs make unilateral disclosure unattractive in the baseline. Exclusive access then yields no usable information, restoring sellers' no-data rejection payoffs and eliminating defensive fees. Profitable provision consequently requires nonnegative bilateral revenue gains. Under full fee extraction, sellers retain their no-data net profits. This revenue alignment does not ensure welfare improvement: additional transactions can benefit consumers, whereas segmentation in an already covered market can raise anonymous prices and reduce welfare. A cap on anonymous offers at their no-data benchmark protects individual consumers and confines positive-mass provision to the transaction-creation region. Extensions examine access choice, privacy-cost heterogeneity, imperfect identification, and limited contractual credibility. 
Platforms should evaluate data services against both bilateral-use and seller-rejection outcomes. Sellers need to differentiate between revenue improvement and defensive investments. Regulators should distinguish equal-access rules from effective consumer control and monitor pricing in anonymous channels. Changes in privacy governance not only affect data availability but also alter the sources and distribution of its commercial value.
}


\KEYWORDS{marketplace operations; personalized pricing; data monetization; consumer disclosure; privacy governance}
\HISTORY{}

\maketitle

%


\section{Introduction}
Digital marketplaces intermediate product sales and control information that helps sellers target consumers. Amazon's Brand Tailored Promotions, for example, enables eligible brands to offer discounts of 10\%--50\% to selected audiences, including repeat purchasers, high-spending customers, and shoppers who have abandoned their carts \citep{toogood2026amazonbrandtailored}. Amazon also earns transaction-linked referral fees; its US schedule lists 15\% for Home and Kitchen, Office Products, and Toys and Games \citep{AmazonFees2026}. These practices place targeting capabilities and transaction intermediation within the same business. These practices motivate an operational question: \emph{how does consumers' ability to withhold information change what competing sellers will pay for a marketplace's data service?}

These considerations are crucial for platform investment, seller profitability, and the assessment of personalization. Sellers adoption can reflect two motives: improving pricing to increase revenue, or avoiding the loss they would face if a rival alone obtained access. A service can therefore attract paying subscribers even when widespread use reduces their product revenues. For a commissioned marketplace, this creates a potential conflict between transaction revenue and access fees.  A seller can pay to escape an unfavorable continuation even if universal adoption leaves the industry worse off. Evaluating the service requires two counterfactuals: bilateral use versus no data, and the outcome a seller faces after rejecting access. Adoption, conversion, and fee receipts alone cannot distinguish revenue improvement from defensive expenditure.

Consumer data rights make this distinction consequential beyond the seller's procurement decision. The GDPR's consent provisions and California's rights to opt out of specified data sales and sharing place different constraints on the use of personal information \citep{eu2016gdpr,california2020cpra}. A common operational issue is whether consumers can decline a particular use while continuing to purchase. Effective anonymous purchasing can preserve participation while denying a sole subscriber usable pricing information, thereby changing the outside option against which sellers evaluate a data contract. Privacy governance can thus affect not only information availability but also the contractual basis of data monetization. We isolate this mechanism where sellers depend on marketplace-provided information, and our governance regimes are economic benchmarks rather than complete representations of either statute.

\subsection{Our Approach and Contributions}
We develop a game-theoretic model of a platform that intermediates between
two horizontally differentiated sellers, earns an exogenous commission
share of product revenue, and chooses whether to offer a fee-based data
service. Subscribers can personalize prices using available information;
otherwise, they charge uniform prices. Consumers choose entry, disclosure,
and purchases, accounting for privacy costs. We compare no data collection
(NDC), mandatory data collection (MDC), and voluntary data disclosure (VDD), which preserves anonymous purchasing. The platform commits to sequential access offers and rejection continuations, whereas sellers cannot commit
to personalized offers before disclosure. Consumers choose entry and
disclosure before observing realized uniform prices. The reported provision
regions and welfare comparisons use this timing and the stated equilibrium
selection. Amazon's promotion tool motivates platform-controlled targeting
rather than these precise contractual and informational assumptions.

Our first contribution is to delineate a platform-specific breakdown of data's commercial value. We distinguish the \emph{sales-revenue effect}---the change in sellers’ product revenue when both use the data service rather than neither---from the \emph{competition-threat effect}---the revenue loss in the exclusive-access continuation after rejection, whether due to competition from an informed rival or reduced consumer participation. 
This decomposition clarifies how downstream outcomes determine access fees and commission income. Under MDC, sufficiently large and enforceable rejection losses can make provision profitable even when bilateral personalization reduces sellers’ revenue. The commitment analysis separates the existence of these losses from their extractability: when exclusive access requires a subsidy, weaker subsidy credibility reduces defensive fee extraction without altering the underlying revenue states.

Our second contribution establishes when consumer disclosure rights can mitigate the competition-threat component by altering the informational dynamics associated with unilateral adoption. Under VDD, disclosing to a sole subscriber
yields no additional purchasing surplus but incurs a strictly positive
privacy cost. Consumers therefore withhold information while retaining
anonymous purchasing, restoring the rejecting seller's no-data revenue.
VDD eliminates the threat effect, so provision requires nonnegative
bilateral seller-revenue gains. Allowing the platform to choose its access
structure also leaves exclusive access with zero incremental monetization
value. The substantive contribution is this upstream consequence of
disclosure: consumer control changes the disagreement payoffs supporting
platform contracts, rather than merely reducing the volume of data sold.

Third, we distinguish \emph{revenue alignment from consumer protection}.
VDD can expand coverage and benefit the platform and consumers in
aggregate, while full fee extraction leaves each seller at its no-data
net profit. In an already covered market, however, the selected
segmentation equilibrium can raise anonymous prices and gross seller
revenue without creating transactions, reducing consumer surplus and
welfare. An enforceable cap on anonymous prices at the no-data benchmark
addresses this remaining channel by preserving every consumer's no-data
purchasing option. With the platform free to withhold the service, the cap
protects consumers and leaves the platform and sellers weakly better off
than under no data, although it can reduce platform profit relative to
uncapped VDD. Disclosure rights and anonymous-price protection therefore
address distinct concerns: defensive access payments and revenue
extraction through the anonymous channel.

Extensions examine endogenous access, contractual credibility, strictly
positive heterogeneous privacy costs, and reveal-or-no-signal
identification. A conditional appendix bound relates threat value to the
quantity and quality of usable unilateral information under explicit
regularity assumptions, including restoration of the no-data continuation
when that information disappears. Numerical counterfactuals illustrate
provision incentives and distributional trade-offs; they are model-based
scenarios, not estimates of service-specific demand or causal effects.

\subsection{Literature Review}\label{sec:literature}
Our analysis connects research on data monetization, personalized-price
competition, and consumer privacy through the role of disclosure in the
outside options supporting platform data contracts. 

\textbf{Data value and platform monetization.} A growing body of literature examines how information changes decisions and competition, making its commercial value depend on how it is allocated and used \citep[e.g.,][]{bimpikis2019information,montes2019value,drakopoulos2023providing,wernerfelt2025estimating,guo2025selling,cong2026effects,ray2026bargaining}. For instance, \citet{bimpikis2019information} study information sales to competing firms, 
\citet{galperti2023privacy} show that privacy constraints can redistribute the marginal values of data records in an information-design problem, while \cite{sun2024value} empirically quantify the value of tracking data and data-informed decisions. Concurrently, related platform studies examine information
provision and monetization alongside downstream operations \citep[e.g.,][]{cosguner2017behavioral,choe2024softening,liu2025data,yao2025reputation,bhargava2026strategic,yan2026collect,markovich2026competing,zheng2026data}. 

\citet{montes2019value} provide a close contracting antecedent: information
prices depend on the payoff difference between acquiring data and facing
an informed rival, and the supplier compares exclusive with nonexclusive
sales. Their consumers incur costs to conceal information, whereas ours
incur costs when information is disclosed and becomes usable. Combined
with anonymous purchasing and noncommitment to personalized offers, this
difference can make unilateral access uninformative and restore the
rejecting seller's no-data payoff. We use this change in the disagreement
outcome to characterize the fees and provision incentives of a marketplace
that also earns transaction commissions.

\citet{hu2025information} already combine commission and information
revenue on a platform that competes with a third-party seller.
\citet{choe2025bright} study privacy management in a digital business with
usage-based revenue and an exogenous return per unit of shared data,
showing how anonymous purchasing and the revenue structure affect welfare.
Our platform solely intermediates between two independent sellers, and
its data-service revenue is determined by their adoption decisions and
disagreement payoffs. The distinguishing issue is how consumer control
changes the returns supporting data provision, rather than the presence
of two revenue sources itself.

\textbf{Personalized pricing and competition.} 
A classic result is that personalized pricing intensifies competition and can create a prisoner's dilemma: each firm prefers to personalize unilaterally, although both would benefit from uniform pricing \citep[e.g.,][]{thisse1988strategic,chen2001individual,choe2018pricing}. Subsequent work shows how coverage, differentiation, targeting accuracy, and consumer
heterogeneity change its profitability 
\citep{pazgal2008behavior,shin2010customer,rhee2017behavior,
amaldoss2019charm,rhodes2024personalized,li2024personalized,
li2025endogenous,du2025each}. Our contribution is complementary: holding product positions fixed, we study how an information-supplying platform  prices access to the resulting competitive environment. The threat effect is an explicit disagreement-payoff component of platform monetization, not a claim that defensive personalization is novel. The commissioned intermediary also differs from a stand-alone data broker because data use affects its operating revenue from product transactions.

A closely related literature examines how consumer data control shapes personalized pricing. Strategic privacy choices can benefit sellers by segmenting demand and softening competition \citep{chen2020competitive,kim2025behavior}, but may also intensify competition, induce price concessions, or degrade information quality \citep{ali2023voluntary,bimpikis2023data,li2023beating,cong2026effects}. Existing studies generally assume that sellers directly obtain consumer information or acquire it from an external data holder \citep{montes2019value,argenziano2026data}. We instead study sellers that depend on the marketplace for both customer access and the data capability needed to personalize. This platform dependence is precisely what makes seller-side data provision and monetization operational decisions rather than simply seller-side pricing choices.

\textbf{Consumer privacy, disclosure, and anonymity.}
Recently, a substantial literature examines how consumers respond strategically to personalized pricing by hiding, erasing, manipulating, or voluntarily disclosing information \citep{dinev2006extended,montes2019value,ichihashi2020online,fu2026privacy,ke2023privacy,kim2025behavior,li2025dark,lin2025choice,wu2025secondary}. 
\citet{chen2026personalized} provide the
closest unilateral-disclosure mechanism: a sole tracking firm cannot
attract privacy-sensitive consumers through personalized offers because
it cannot commit to compensating them for sunk privacy costs. We build
on this mechanism to determine platform access fees and provision. Their
inclusion of privacy-insensitive consumers also clarifies the scope of
our complete threat-elimination result, which maintains strictly positive
privacy costs throughout the population.

\citet{ali2023voluntary} show how verifiable disclosure can amplify
competition: relatively flexible consumers reveal information, while
those with stronger preferences pool. Their firm-specific disclosure
technology permits selective revelation to different sellers; our
consumers authorize a platform-controlled pricing capability and incur
intrinsic privacy costs. We connect the resulting disclosure choices to
upstream access contracts. \citet{li2023beating} instead study data
manipulation and its implications for informativeness, collection scope,
and transparency. In our baseline, consumers withhold verifiable
information rather than contaminate it, so the relevant margin is whether
a usable pricing signal becomes available.

Two further comparisons distinguish the welfare and operational
contributions. \citet{cong2026effects} analyze data management across
collection and application markets, including pricing externalities and
price commitments that protect firm profitability. Our anonymous-price
benchmark instead preserves consumers' no-data purchasing options within
a platform access-contracting environment. \citet{wang2026impact} examine
a platform that competes as a first-party seller and designs fuzzy
segments in response to anonymous users. Here, the platform does not sell
products, disclosure is endogenous, and the central decisions concern
data-service provision and access fees. These distinctions explain why
protecting sellers' rejection payoffs can eliminate defensive monetization
without eliminating consumer harm from segmentation.

{\bf Roadmap.} The remainder of the paper is organized as follows. Section~\ref{sec:model} presents the marketplace model and data-value
framework. Section~\ref{sec:equilibrium} derives equilibria and stakeholder
outcomes across governance regimes. Section~\ref{sec:extension} examines
extensions and model-based counterfactuals, and
Section~\ref{sec:conclusion} concludes with managerial and policy
implications. Technical details and proofs appear in the appendix.

\section{Model and The General Value of Platform Data}\label{sec:model}
\textbf{Operational setting and data-governance regimes.}
A platform connects a unit mass of consumers with two horizontally differentiated sellers, indexed by $i\in\{1,2\}$. It earns an exogenous commission rate $\gamma\in[0,1)$ on product sales. Throughout, \emph{sales revenue} means gross product revenue before commissions and access fees. A seller's net profit is its retained share $(1-\gamma)$ of this revenue minus its data-service payment. The platform's profit combines commissions and access fees; data-service operating costs are normalized to zero in the baseline. Keeping these quantities distinct is essential: an increase in seller revenue need not increase the seller's net profit or social welfare.

The platform controls consumer information that sellers cannot independently reproduce in the baseline. It can provide a data service that enables subscribing sellers to infer individual preferences and personalize prices. Nonsubscribers can offer only uniform prices. The service is a platform-controlled capability, not necessarily a sale of raw records: an audience interface, permissioned analytics environment, API, or platform-executed offer could implement the required information access. The maintained economic restriction is that sellers lack a comparable independent source of the price signal. This is most relevant to sellers dependent on the marketplace for both customers and targeting information. 

The data-governance regime is exogenous; the platform optimizes provision and contracts within that regime. Under \emph{no data collection (NDC)}, consumer information cannot be used for the seller-facing service, so sellers rely on anonymous uniform pricing. Under \emph{mandatory data collection (MDC)}, data from entering consumers are available for the service and consumers cannot preserve a meaningful anonymous purchasing channel. This regime also represents take-it-or-leave-it data collection in which continued marketplace participation effectively requires data use. Under \emph{voluntary data disclosure (VDD)}, consumers can authorize this use or decline it while continuing to purchase through the uniform-price channel. ``Anonymous'' means unidentifiable for the modeled personalized-price decision; it does not mean that a marketplace can process payment or fulfill orders without any operational information. When the data service is withheld or neither seller adopts, no pricing-related privacy cost is incurred and the NDC outcome applies.

\noindent\textbf{Research objectives.}
We examine three linked decisions: when a platform should provide a seller-facing data service; how access fees and commissions translate downstream data use into platform profit; and how consumer disclosure rights change seller adoption, market coverage, and the distribution of surplus. The first objective distinguishes seller-revenue improvement from defensive willingness to pay, helping managers assess whether demand for a data product indicates better marketplace performance. The second identifies when the platform's data business supports or undermines its transaction business. The third evaluates anonymous purchasing as an operational design choice, separating protection against an informed rival from changes in trade and privacy losses. These objectives require joint analysis of platform contracts, seller pricing, and consumer choices rather than a fixed per-record return or an exogenous disclosure rate.

\noindent\textbf{Sequence of events and information.}
The game separates marketplace access rules from subsequent pricing and consumer decisions. The platform first chooses whether to offer the service. If offered under regime $g\in\{\MDC,\VDD\}$, it publicly commits to an offer order, a history-dependent fee schedule, and the access configuration following each acceptance or rejection. Each seller is first with probability one-half. Labeling the realized offerees 1 and 2, the schedule is $F^g=(F_1^g,F_{2|AC}^g,F_{2|RE}^g)$, where seller 1 decides first, and seller 2 observes that decision and receives the applicable offer. All fees may be real-valued. In particular, a negative rejection-contingent fee subsidizes activation of exclusive access. Bilateral adoption is the baseline equilibrium-path configuration; Section~\ref{sec:endogenous_exclusive_access} also allows exclusive provision on the equilibrium path. Figure~\ref{fig:sequence} summarizes the subsequent decisions.

\begin{figure}[!ht]
\vspace{-0.15in}
\scriptsize 
\caption{Sequence of Events under VDD}\label{fig:sequence}

\begin{tikzpicture}[>=stealth,thick]

\draw[->] (0,0) -- (14.5,0) node[right]{Time};

\node[circle,fill=black,inner sep=2pt,
 label=above:{\parbox{2.9cm}{\centering Stage 1:\\ Platform commits to offer order, contingent fee schedule $\mathcal F^g$, and fallback access}}] 
(s1) at (0.5,0) {};

\node[circle,fill=black,inner sep=2pt,
 label=below:{\parbox{2.3cm}{\centering Stage 2.1:\\ Seller 1 accepts or rejects}}] 
 (s2a) at (2.8,0) {};

\node[circle,fill=black,inner sep=2pt,
 label=above:{\parbox{2.3cm}{\centering Stage 2.2:\\ Seller 2 observes the first decision, and accepts or rejects}}]
 (s2b) at (5.1,0) {};

\node[circle,fill=black,inner sep=2pt,
 label=below:{\parbox{2.8cm}{\centering Stage 3:\\ Sellers simultaneously set uniform prices $p_{i,\varnothing}$}}] 
 (s3) at (6.8,0) {};

\node[circle,fill=black,inner sep=2pt,
 label=above:{\parbox{2.8cm}{\centering Stage 4:\\ Consumers choose entry and disclosure, anticipating equilibrium prices}}] 
 (s4a) at (8.8,0) {};

\node[circle,fill=black,inner sep=2pt,
label=below:{\parbox{3cm}{\centering Stage 5:\\ Informed sellers observe competing uniform prices and choose personalized offers $p_i(x)$}}] 
 (s4b) at (11,0) {};

\node[circle,fill=black,inner sep=2pt,
label=above:{\parbox{3cm}{\centering Stage 6:\\ Consumers observe applicable prices and decide whether to purchase}}] 
 (sf) at (13,0) {};

\end{tikzpicture}
\vspace{-0.2in}
\end{figure}

Consumers do not observe realized uniform prices before entry and disclosure. Informed sellers subsequently observe the rival's realized uniform price when choosing personalized offers, as in the Appendix's best-response calculations. A hidden uniform-price deviation therefore leaves entry and disclosure fixed, while personalized offers can respond in asymmetric-access continuations. Rational expectations require anticipated and realized equilibrium prices to coincide. Sellers cannot commit to personalized offers before information becomes usable. Consent after observing posted anonymous prices requires a different analysis. Sequential uniform and personalized pricing is related to \cite{chen2026personalized,montes2019value,hu2025information}; the precise information sets above are maintained assumptions, not claims about universal marketplace practice. 

Full implementation requires honoring fees and access rights after every history, including subsidized rejection continuations. Sellers accept when indifferent, and consumers observe active access. Backward induction solves contracting; pricing and consumer choices satisfy sequential optimality and rational-expectations consistency. The resulting fees exhaust participation constraints under this protocol, providing an extraction benchmark rather than an unrestricted optimal mechanism or a description of observed marketplace contracts. 

\noindent\textbf{Two sources of data value.} Summarize the selected downstream continuation under regime $g$ by four gross-revenue states: $R_N^g$ when neither seller adopts the service; $R_B^g$ when both adopt; $R_I^g$ for the sole adopter under exclusive access; and $R_U^g$ for its nonadopting rival. These states incorporate entry, disclosure, and pricing responses and exclude commissions and fees. Define
\begin{equation}
\text{\emph{sales-revenue effect} } G^g := R_B^g-R_N^g,
\quad
\text{and \emph{competition-threat effect} } T^g := R_N^g-R_U^g.
\label{eq:GT}
\end{equation}
The \emph{sales-revenue effect} $G^g$ measures whether bilateral data use increases or lowers each seller's product revenue relative to the no-data outcome. It captures productive gains from better targeting or transaction creation, but it may be negative when personalization intensifies price competition or discourages participation. The \emph{competition-threat effect} $T^g$ measures the seller's revenue loss in the exclusive access continuation following rejection. This loss may arise from being the uninformed seller when the rival alone has usable data (competitive channel), or from reduced consumer participation (participation channel). If $T^g>0$, rejecting access can be costly even when bilateral use lowers both sellers' revenues. It is therefore a disagreement-payoff effect: a seller may value access not because bilateral data use is profitable, but because refusing access is worse.

This distinction is operationally important. Seller adoption is not, by itself, evidence that a data product creates downstream revenue. A marketplace may observe strong willingness to pay even when bilateral personalization reduces seller sales, provided that asymmetric access would reduce the rejecting seller's revenue even further. Conversely, data may improve seller revenue but be difficult for the platform to monetize if sellers can obtain equivalent information elsewhere or if governance prevents asymmetric informational advantage. The four-state representation makes these possibilities explicit without relying on the Hotelling pricing environment used later.

\subsection{Seller Adoption and Platform Data Value}\label{subsec:value}
We first retain the full-implementation benchmark of the specified contracting protocol. It identifies the fees consistent with bilateral adoption given the selected revenue states and promised rejection continuations.

\begin{lemma}[Seller adoption under full implementation]\label{lem:fees}
If the platform can commit to the announced rejection continuation, it can induce both sellers to adopt by charging: $$F_1^g=F_{2\mid AC}^g=(1-\gamma)(G^g+T^g), \quad F_{2\mid RE}^g=(1-\gamma)(R_I^g-R_N^g).$$
\end{lemma}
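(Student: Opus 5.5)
We argue by backward induction over the two-stage sequential offer protocol, taking the selected downstream continuation values $R_N^g,R_B^g,R_I^g,R_U^g$ as given. Each history of acceptance decisions maps to one of the four revenue states, and a seller's net payoff in state $s$ is $(1-\gamma)R_s^g$ minus any fee it pays. The platform commits to honoring the announced access configuration after every history. So each seller's participation constraint compares two numbers: net profit from accepting given the prescribed continuation, and net profit from rejecting given the promised fallback.

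\textbf{Seller 2's decisions.} Start with seller 2 after seller 1 has accepted. If seller 2 accepts, both adopt and it earns $(1-\gamma)R_B^g-F_{2\mid AC}^g$. If it rejects, the committed continuation is exclusive access for seller 1, so seller 2 is the uninformed rival and earns $(1-\gamma)R_U^g$. Acceptance, with ties broken in favor of acceptance, requires $F_{2\mid AC}^g\le (1-\gamma)(R_B^g-R_U^g)$. By \eqref{eq:GT}, $R_B^g-R_U^g=(R_B^g-R_N^g)+(R_N^g-R_U^g)=G^g+T^g$, so the stated fee makes the constraint bind.

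Next take seller 2 after seller 1 has rejected. The platform can offer seller 2 exclusive access at fee $F_{2\mid RE}^g$, or promise no provision if seller 2 also rejects, which yields $(1-\gamma)R_N^g$. Accepting yields $(1-\gamma)R_I^g-F_{2\mid RE}^g$. The maximal acceptable fee is therefore $(1-\gamma)(R_I^g-R_N^g)$. This fee may be negative, which is permitted because fees are real-valued; in that case it is a subsidy that activates exclusive access. With this fee, seller 2 accepts off path, so seller 1's rejection leads to exclusive access for the rival.

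\textbf{Seller 1's decision.} Given the continuations above, seller 1 anticipates two outcomes. If it accepts, seller 2 also accepts, and seller 1 earns $(1-\gamma)R_B^g-F_1^g$. If it rejects, seller 2 takes exclusive access, and seller 1 earns $(1-\gamma)R_U^g$. So $F_1^g=(1-\gamma)(G^g+T^g)$ is the largest fee that induces acceptance. Since each seller is first with probability one-half and the offer schedule is symmetric in labels, the same fees apply whichever seller moves first.

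\textbf{Main subtlety.} The delicate step is the off-path rejection branch. The threat that $R_U^g$ obtains after rejection is credible only because the platform commits to honoring $F_{2\mid RE}^g$ even when it is a subsidy. This is what "full implementation" supplies. Without that commitment, seller 1's outside option would revert to $R_N^g$ and the $T^g$ component could not be extracted.

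We would close by noting that the fees exhaust each participation constraint under tie-breaking toward acceptance. Hence they are the extraction benchmark for this protocol, and bilateral adoption is the resulting subgame-perfect path.
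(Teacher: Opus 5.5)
Your proposal is correct and follows the paper's own proof essentially step for step: you bind seller 2's participation constraint after acceptance against $(1-\gamma)R_U^g$ and after rejection against $(1-\gamma)R_N^g$, then bind seller 1's constraint against the induced rejection payoff $(1-\gamma)R_U^g$. One small refinement to your closing remark: the outside option reverts to $R_N^g$ without commitment only in the subsidy region $R_I^g<R_N^g$, since when $R_I^g\ge R_N^g$ the exclusive-access continuation is self-enforcing (as Proposition~\ref{prop:limited_commitment} formalizes).
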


Lemma~\ref{lem:fees} clarifies why seller willingness to pay can exceed the direct revenue improvement from bilateral data use. The on-path fee includes both the bilateral revenue change and the rejection loss. The off-path fee makes the second seller willing to activate exclusive access after the first rejects. When this requires a subsidy, commitment to pay it is part of implementation, even if the continuation generates little trade.

\begin{proposition}[Platform data value]\label{prop:data_value_decomposition}
Under full implementation, the platform's incremental profit from providing the data service rather than withholding it is:
\begin{equation}
    \begin{aligned}\label{eq:data_value_decomposition}
V_{\mathrm{data}}^{g}
&\;=\;\underbrace{2\gamma G^g}_{\text{commission-based value}}
+\underbrace{
\left[2(1-\gamma)G^g+2(1-\gamma)T^g\right]}_{\text{data monetization value}} \;=\;2G^g+2(1-\gamma)T^g.
\end{aligned}
\end{equation}
\end{proposition}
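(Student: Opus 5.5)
The argument compares platform profit on the bilateral-adoption path with its profit when the service is withheld, using the fees from Lemma~\ref{lem:fees}.

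\textbf{Withholding benchmark.} If the platform withholds the service, the NDC continuation applies. Each seller earns gross revenue $R_N^g$, so the platform's profit is pure commission income $2\gamma R_N^g$. Here I take $R_N^g$ to coincide with the NDC revenue, as stated in the model: when the service is withheld or neither seller adopts, the NDC outcome applies.

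\textbf{Provision with full implementation.} By Lemma~\ref{lem:fees}, the schedule $F_1^g=F_{2\mid AC}^g=(1-\gamma)(G^g+T^g)$ induces both sellers to accept on path. The off-path fee $F_{2\mid RE}^g$ is never paid on the equilibrium path, so it does not enter realized profit; it only sustains the rejection continuation that gives the threat term its force.

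I also need these fees to be the largest sustainable ones, so that they exhaust participation. I would check this briefly. Given the first seller accepts, the second seller compares $(1-\gamma)R_B^g-F_{2\mid AC}^g$ with $(1-\gamma)R_U^g$. This binds exactly at $(1-\gamma)(R_B^g-R_U^g)=(1-\gamma)(G^g+T^g)$. The first seller anticipates the second's response after rejection: the second accepts exclusive access at fee $F_{2\mid RE}^g$, leaving the rejecter with $(1-\gamma)R_U^g$. So the first seller's constraint is the same, and the fees are pinned down by Lemma~\ref{lem:fees}.

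On path, both sellers earn $R_B^g$. Platform profit is therefore
\[
2\gamma R_B^g + 2(1-\gamma)(G^g+T^g).
\]

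\textbf{Difference.} Subtracting the withholding benchmark gives
\[
V_{\mathrm{data}}^g=2\gamma(R_B^g-R_N^g)+2(1-\gamma)(G^g+T^g)=2\gamma G^g+2(1-\gamma)G^g+2(1-\gamma)T^g .
\]
The commission-based part is $2\gamma G^g$ and the monetization part is the fee revenue $2(1-\gamma)(G^g+T^g)$. Collecting terms gives $2G^g+2(1-\gamma)T^g$.

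The only delicate point is justifying that the rejection continuation never materializes, so its possible subsidy cost is excluded from $V_{\mathrm{data}}^g$. This follows because sellers accept when indifferent (tie-breaking) and the platform can commit to the announced continuation (full implementation).
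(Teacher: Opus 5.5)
Your proposal is correct and follows the paper's proof essentially line by line. Withholding yields commission income $2\gamma R_N^g$, bilateral provision yields $2\gamma R_B^g$ plus the on-path fees $2(1-\gamma)(G^g+T^g)$ from Lemma~\ref{lem:fees}, and the difference rearranges to $2G^g+2(1-\gamma)T^g$; your re-check of the binding participation constraints just repeats the proof of that lemma.
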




The decomposition provides a practical diagnostic for identifying the source of data monetization. The sales-revenue effect enters through commissions and fees, while the threat effect supports fees through sellers' retained revenue. A platform may profitably offer a data service even when bilateral data use reduces seller revenue ($G^g<0$), provided sellers are sufficiently willing to pay to avoid an asymmetric informational disadvantage ($
(1-\gamma)T^g>-G^g$). Adoption does not establish improved revenue, and $G^g>0$ need not imply higher welfare if it comes from repricing existing trade. Under Lemma~\ref{lem:fees}, each seller retains $(1-\gamma)R_U^g$. When $T^g=0$, the platform extracts the bilateral revenue gain and seller net profit equals the no-data benchmark.

We next relax commitment only when the announced rejection continuation requires $F_{2|RE}^g<0$. In that event, the platform may revise the continuation to no access; all other offers and access rights remain enforceable. Let $\eta^g\in[0,1]$ be seller 1's ex ante perceived probability that the platform honors the subsidy. Define $\xi^g(\eta^g)=1$ if $R_I^g\ge R_N^g$, and $\xi^g(\eta^g)=\eta^g$ otherwise. Thus, a continuation requiring no subsidy remains enforceable. Here credibility is a reduced-form primitive and service costs remain zero. Appendix~\ref{app:sec_commitment_microfoundation} provides the implementation-cost microfoundation.

\begin{proposition}[Data value under limited commitment]
\label{prop:limited_commitment}
With limited implementation credibility:
{\rm (A)} The platform can induce bilateral adoption with $F_1^g=(1-\gamma)[G^g+\xi^g(\eta^g)T^g]$, $F_{2\mid AC}^g=(1-\gamma)(G^g+T^g)$, and $F_{2\mid RE}^g=(1-\gamma)(R_I^g-R_N^g)$. {\rm (B)} The platform's data value is $V_{data}^g(\eta^g)=2G^g+[1+\xi^g(\eta^g)](1-\gamma)T^g$. {\rm (C)} When a subsidy is required ($R_I^g<R_N^g$), the data value weakly increases with the credibility if $T^g\ge 0$ and decreases otherwise.
\end{proposition}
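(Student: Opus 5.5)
We prove Proposition~\ref{prop:limited_commitment} by backward induction over the sequential offer protocol, exactly as for Lemma~\ref{lem:fees}, but with seller 1's rejection payoff replaced by an expectation over whether the promised exclusive-access continuation is honored. Seller 2 moves after observing seller 1 and faces offers that are enforceable at that node. After acceptance by seller 1, seller 2 compares $(1-\gamma)R_B^g-F_{2\mid AC}^g$ with $(1-\gamma)R_U^g$. Because accepting when indifferent is assumed, this gives $F_{2\mid AC}^g=(1-\gamma)(R_B^g-R_U^g)=(1-\gamma)(G^g+T^g)$. After rejection by seller 1, seller 2 compares $(1-\gamma)R_I^g-F_{2\mid RE}^g$ with $(1-\gamma)R_N^g$, which gives $F_{2\mid RE}^g=(1-\gamma)(R_I^g-R_N^g)$. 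When $R_I^g<R_N^g$ this fee is a subsidy. Once paid and activated, it is honored with probability $\eta^g$; otherwise the platform reverts to no access and revenues are $R_N^g$.

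Next I compute seller 1's expected rejection payoff. If $R_I^g\ge R_N^g$, no subsidy is needed, so the continuation is enforceable and seller 1 obtains $(1-\gamma)R_U^g$. If $R_I^g<R_N^g$, seller 1 obtains $(1-\gamma)R_U^g$ with probability $\eta^g$ and $(1-\gamma)R_N^g$ otherwise. Using $\xi^g$, both cases can be written as
\begin{equation*}
(1-\gamma)\bigl[R_N^g-\xi^g(\eta^g)T^g\bigr].
\end{equation*}
Seller 1 accepts whenever $(1-\gamma)R_B^g-F_1^g$ is at least this amount. The binding fee is therefore $F_1^g=(1-\gamma)[G^g+\xi^g(\eta^g)T^g]$, which establishes part (A). One point needs checking here: seller 1's acceptance must not depend on whether seller 2 would in fact accept off path. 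That holds because seller 2's indifference rule resolves that branch.

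For part (B), on the bilateral-adoption path the platform earns $2\gamma R_B^g+F_1^g+F_{2\mid AC}^g$. Withholding the service yields $2\gamma R_N^g$. The difference is
\begin{equation*}
2\gamma G^g+(1-\gamma)\bigl[2G^g+(1+\xi^g)T^g\bigr]=2G^g+\bigl[1+\xi^g(\eta^g)\bigr](1-\gamma)T^g.
\end{equation*}
The off-path subsidy is never paid on the equilibrium path, so it does not enter the platform's value. For part (C), when $R_I^g<R_N^g$ we have $\xi^g=\eta^g$. Then $\partial V_{\mathrm{data}}^g/\partial\eta^g=(1-\gamma)T^g$, which is nonnegative when $T^g\ge0$ and nonpositive otherwise, since $\gamma<1$.

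The main obstacle is the modeling step in the second paragraph: justifying that seller 1 evaluates rejection with $\eta^g$ as a pure lottery over $\{R_U^g,R_N^g\}$. Two things must be argued for this. First, reneging leaves the no-data continuation outcome $R_N^g$ intact, with no privacy costs and NDC prices. Second, seller 2's off-path acceptance decision is unaffected, or is indifferent, because it is based on the contract as announced. I would handle this by invoking the stated convention that a revised continuation means no access, so the NDC outcome applies, and by keeping $F_{2\mid RE}^g$ at its full-commitment level.
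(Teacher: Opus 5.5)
Your proposal is correct and follows the paper's proof essentially step for step: seller 2's binding constraints after acceptance and after rejection, seller 1's rejection payoff written as the $\xi^g(\eta^g)$-weighted lottery over $(1-\gamma)R_U^g$ and $(1-\gamma)R_N^g$, the on-path fee-plus-commission accounting for (B), and the derivative $(1-\gamma)T^g$ for (C). One small sharpening: since $\gamma<1$, for $T^g<0$ that derivative is strictly negative, which gives the stated ``decreases'' and not merely ``nonpositive.''
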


Limited commitment operates through seller 1's rejection payoff, which becomes $(1-\gamma)[R_N^g-\xi^g(\eta^g)T^g]$. Once seller 1 accepts, seller 2 faces an informed rival upon rejection regardless of subsidy credibility. When $T^g>0$, the first offeree pays weakly less and retains weakly more profit. The randomized offer order preserves ex ante symmetry.
Credibility affects extraction from the given revenue states, separating the \emph{existence} of competitive vulnerability from the platform's \emph{ability to monetize} it. In the subsidy region, greater credibility increases data value when $T^g>0$. Hence, a platform seeking to monetize the threat effect has an incentive to strengthen commitment through enforceable arrangements, such as escrow payments, third-party audits, or verifiable access records. Appendix~\ref{sec:threat_elimination} further provides a conditional bound on competition-threat value, clarifying when privacy or access governance can eliminate it.

\subsection{Hotelling Microfoundation}\label{subsec:hotelling}
The remainder of the paper leverages a Hotelling microfoundation to endogenize revenue states, consumer disclosure, prices, and market coverage, thereby establishing a tractable link between the general data-value framework and observable marketplace outcomes. Consumers have unit demand and preferences $x\sim U[0,1]$. Sellers 1 and 2 are located at 0 and 1. Purchasing from them incurs mismatch costs $tx$ and $t(1-x)$, respectively. All consumers have product valuation $v>0$, and marginal production costs are zero. Dividing all monetary variables by $t>0$ normalizes the mismatch parameter to one; $v$ and the privacy cost below are therefore normalized monetary quantities. Let $\delta(x)$ indicate whether consumer $x$'s information is used in the pricing subgame. The baseline privacy cost is $c>0$, incurred once when that information becomes usable, regardless of which seller is eventually chosen. A seller with usable data charges $p_i(x)$; otherwise it charges $p_{i,\varnothing}$. Writing $p_i$ for the applicable price, the consumer's ex ante net utilities from purchases are
\[
U_1(x)=v-p_1-x-c\delta(x),
\qquad
U_2(x)=v-p_2-(1-x)-c\delta(x).
\]
Staying outside the gives zero.  Once information has been used, privacy cost is sunk: not purchasing gives $-c\delta(x)$ at the purchase stage. Consumers therefore compare expected net entry utility with zero initially and applicable offers with the sunk-cost no-purchase alternative subsequently. Positive privacy cost and the absence of advance personalized-price commitment are central to the unilateral-disclosure incentive.

Consumers enter and purchase when indifferent. Under VDD they opt in when indifferent between disclosure and nondisclosure. When both products yield the same maximal purchase utility and exactly one seller has usable information about that consumer, she chooses that seller. Otherwise, she chooses the lower-mismatch seller, with the midpoint tie assigned to seller 1. We study pure-strategy rational-expectations equilibria, imposing symmetry for symmetric access. If multiple such pricing equilibria exist, we apply the appendix's selection: maximal gross revenue for each seller under bilateral access and maximal gross revenue for the uninformed seller under exclusive access. Access fees are sunk, so this also maximizes the corresponding continuation profit. The selected equilibrium determines reported revenues and provision regions. The alternatives in Appendix~\ref{app:sec_MDC_subgame} and Appendix~\ref{app:sec_VDD_subgame} preserve $T^{MDC}>0$ and $T^{VDD}=0$, while revenues and some provision decisions can change. In particular, the high-valuation segmentation outcome below depends on selection.

Market coverage is the mass of consumers who purchase, not the mass who enter or disclose. Consumer surplus integrates realized utility net of prices and any incurred privacy costs. In the zero-cost baseline, social welfare equals consumer surplus plus platform profit and both sellers' net profits; commissions and access fees cancel as transfers. These definitions separate transaction creation from changes in revenue or its distribution.

\subsection{Scope, Extensions, and Limitations}\label{sec:scope}
The Hotelling analysis below uses a deliberately parsimonious baseline to isolate the link between consumer disclosure, asymmetric data access, and platform monetization. It assumes homogeneous privacy costs, perfect identification, bilateral access on the equilibrium path and full implementation of announced access continuations. Section~\ref{sec:extension} relaxs the first three assumptions by introducing two types of privacy costs, a reveal-or-no-signal information structure and endogenous exclusive data provision. The reveal-or-no-signal information structure keeps consumers anonymous after failed identification and charges the privacy cost only upon successful identification; it does not serve as a general model for noisy or biased predictions. Appendix~\ref{app:sec_commitment_microfoundation} endogenizes the limited-commitment friction introduced above through implementation costs and the cost of reneging on a promised continuation. These extensions differentiate changes in downstream revenue states from the platform's capacity to extract defensive payments. However, the qualitative results do not suggest that all baseline thresholds or welfare rankings withstand arbitrary heterogeneity, information structures, or commitment frictions.

Several constraints remain. The platform remains an intermediary, with symmetric fixed-position sellers, exogenous commission, and no independent signals or direct nonprice benefits from disclosure. The introduction of first-party information, inference about anonymous buyers, costly opt-outs, service improvements, and advance price commitments may alter unilateral disclosure or segmentation and require separate downstream analysis.

\section{Product-Market Equilibrium}\label{sec:equilibrium}
This section characterizes and compares equilibrium outcomes across three data-governance regimes, revealing their impact on seller adoption, platform monetization, market participation, and welfare. All proofs are in Online Appendix~\ref{app:sec_proof_main}.

\subsection{No Data Collection: Benchmark}\label{sec:no_data}
Under NDC, both sellers choose uniform prices simultaneously $p_{1,\varnothing},p_{2,\varnothing}$. Seller 1 serves consumers satisfying  $0\le x\le \min\{v-p_{1,\varnothing},\tfrac{1+p_{2,\varnothing}-p_{1,\varnothing}}{2}\}$, with a symmetric condition for seller 2. Let $r_i^{\NDC}$ denote gross sales revenue, $D^{\NDC}$ market coverage, $\Pi_p^{\NDC}$ platform profit, $CS^{\NDC}$ consumer surplus, and $SW^{\NDC}$ social welfare.

\begin{proposition}[NDC benchmark]\label{prop:without_data}
Under NDC, all consumers enter the platform:
\[
(p_{i,\varnothing}^{\NDC},r_i^{\NDC},\Pi_p^{\NDC},CS^{\NDC},SW^{\NDC})=
\begin{cases}
\left(\tfrac{v}{2},
\tfrac{v^2}{4},
\tfrac{\gamma v^2}{2},
\tfrac{v^2}{4},
\tfrac{3v^2}{4}\right),
&{\rm if \ } v< 1\\
\left(v-\tfrac12,
\tfrac{1}{2}v-\tfrac{1}{4},
\gamma( v-\tfrac{1}{2}),
\tfrac14,
v-\tfrac14\right),
&{\rm if \ } 1\le v<\tfrac32\\
\left(1,
\tfrac{1}{2},
\gamma,
v-\tfrac54,
v-\tfrac14\right),
&{\rm if \ } v\ge\tfrac32
\end{cases}.
\]
Seller profit is $\pi_i^{\NDC}=(1-\gamma)r_i^{\NDC}$ and market coverage is $D^{\NDC}=\min\{v,1\}$.
\end{proposition}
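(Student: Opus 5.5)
We treat the result as a standard symmetric Hotelling pricing game with an outside option, and solve it by computing seller 1's best response to an arbitrary rival price and then imposing symmetry. Entry is immediate: under NDC no data are used, so $\delta(x)=0$ and no privacy cost arises. A consumer who expects a nonnegative purchase utility enters, and one who expects none loses nothing by entering and buying nothing. Entering weakly dominates, so by the tie rule all consumers enter.

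Given $p_2=p_{2,\varnothing}$, seller 1's demand is
\[
D_1(p_1)=\min\Bigl\{v-p_1,\tfrac{1+p_2-p_1}{2}\Bigr\},
\]
truncated to $[0,1]$. This is continuous and piecewise linear in $p_1$ with a kink at $p_1=2v-1-p_2$. The monopoly branch has revenue $p_1(v-p_1)$, maximized at $p_1=v/2$. The competitive branch has revenue $p_1(1+p_2-p_1)/2$, maximized at $p_1=(1+p_2)/2$. At the kink, the left and right derivatives of revenue differ, so a range of prices can satisfy the first-order conditions there.

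We then impose symmetry $p_1=p_2=p$ and split into three cases.

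\textbf{Case (i), local monopolies, $v<1$.} Take $p=v/2$. The local-monopoly demands are $v/2$ each, so $v/2+v/2=v<1$ and the two markets do not touch. The monopoly optimum is therefore unconstrained. Revenue is $v^2/4$ and coverage is $v$. Consumer surplus is
\[
2\int_0^{v/2}\Bigl(\tfrac{v}{2}-x\Bigr)\,dx=\tfrac{v^2}{4}.
\]

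\textbf{Case (iii), full competition, $v\ge 3/2$.} The standard Hotelling price is $p=1$. The indifferent consumer at $x=1/2$ gets $v-1-\tfrac12\ge 0$, so the market is covered and the competitive branch applies. Revenue is $1/2$. Consumer surplus is
\[
v-1-2\int_0^{1/2}x\,dx=v-\tfrac54.
\]

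\textbf{Case (ii), kinked equilibrium, $1\le v<3/2$.} The candidate is $p=v-\tfrac12$, at which the midpoint consumer is exactly indifferent. We verify local optimality at the kink. Raising the price puts seller 1 on the monopoly branch, with right derivative $v-2p=1-v\le 0$. Lowering the price puts it on the competitive branch, with left derivative $(1+p_2-2p)/2=(3/2-v)/2>0$. So $p=v-\tfrac12$ is a best response. Revenue is $(v-\tfrac12)/2$. Consumer surplus is
\[
2\int_0^{1/2}\Bigl(\tfrac12-x\Bigr)\,dx=\tfrac14.
\]

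The main obstacle is equilibrium multiplicity in the kinked region. Symmetric prices $p$ in an interval satisfy both one-sided conditions: we need $v-2p\le 0$ and $(1+p-2p)/2\ge 0$ with $p\le v-\tfrac12$, i.e. roughly $p\in[v/2,\,v-\tfrac12]$ intersected with $p\le 1$. We resolve this with the paper's selection rule of maximal gross revenue for each seller, which picks the highest such price, $p=v-\tfrac12$. We also check that it stays below the competitive price 1 exactly when $v<3/2$.

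We must also rule out global deviations, since revenue is concave on each branch but the branches must be compared. Because both branches are concave and the kink is the maximizer on each side under the derivative signs above, the kink point is a global optimum. Undercutting to steal the whole market is dominated by concavity of the competitive branch. The same comparison confirms that $p=1$ is a global best response in case (iii) and $v/2$ in case (i).

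It remains to assemble the totals. Platform profit is $\gamma(r_1+r_2)=2\gamma r_i$, seller profit is $(1-\gamma)r_i$, and welfare is $CS+2r_i$ since commissions are transfers. This yields $3v^2/4$ in case (i) and $v-\tfrac14$ in cases (ii) and (iii). Coverage is $\min\{v,1\}$.
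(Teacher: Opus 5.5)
Your approach is the same as the paper's. You split seller 1's revenue at the kink $p_1=2v-1-p_2$ into the participation-constrained and competition-constrained branches, which are the paper's regions $p_1+p_2\ge 2v-1$ and $p_1+p_2\le 2v-1$. This gives the symmetric candidates $v/2$, $v-\tfrac12$ and $1$. Your local checks, the global-deviation argument and the totals are all correct.

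The one genuine error is the paragraph you call the main obstacle: for $1\le v<\tfrac32$ there is no interval of symmetric equilibria. The one-sided derivative conditions you impose apply only when seller 1's own price sits at its kink $2v-1-p_2$. Under symmetry that happens only at $p=v-\tfrac12$. At any other symmetric $p$, revenue is differentiable at $p_1=p$, so the first-order condition must hold with equality. On the monopoly branch this gives $p=v/2$, which requires $p>v-\tfrac12$, i.e.\ $v<1$. On the competitive branch it gives $p=1$, which requires $p<v-\tfrac12$, i.e.\ $v>\tfrac32$. Neither is available in case (ii). For example, at $v=1.2$ and $p_1=p_2=0.6$, raising $p_1$ to $0.8$ lifts revenue from $0.30$ to $0.32$. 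The continuum you have in mind exists only among asymmetric pairs with $p_1+p_2=2v-1$ and $p_1,p_2\in[v/2,2v/3]$, and the paper's symmetry restriction rules these out.

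You should therefore replace the selection step with this uniqueness argument. Apply it likewise in cases (i) and (iii), where $v/2$ and $1$ are the only symmetric candidates. This matters: the paper states its revenue-maximizing selection only for bilateral and exclusive access, not for the no-data game. Had the multiplicity been real, invoking that rule would not have delivered the proposition. With the uniqueness argument in place, the rest of your proof goes through and matches the paper's.
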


Entry is not purchase: free anonymous entry is weakly optimal even for consumers who eventually do not buy. When $v<1$, uniform pricing leaves central consumers unserved, so data-enabled offers may create additional transactions. For $1\le v<3/2$, the market is covered but marginal participation constrains prices. For $v\ge3/2$, competition pins the uniform price at one. These cases distinguish the possibility of trade creation from repricing an already covered market.

\subsection{Mandatory Data Collection: Monetizing Competitive Vulnerability} \label{sec:no_protection}
The adoption subgames are NN (neither adopts), PP (both adopt), PN (only seller 1 adopts), and NP (only seller 2 adopts). Let $r^{PPM}$ denote each seller's revenue under PP and $r_2^{PNM}$ the uninformed seller's revenue under PN. By symmetry and \eqref{eq:GT}, the sales-revenue effect is $G^{\MDC}=r^{PPM}-r^{\NDC}$, and the threat effect is $T^{\MDC}=r^{\NDC}-r_2^{PNM}$. 
\begin{proposition}[MDC equilibrium]\label{prop:equilibrium_mdc}
Under MDC: 
\begin{itemize}
    \item[\rm{(A)}] If the platform provides the data service, both sellers adopt and the equilibrium-path fees are $F_1^{\MDC}=F_{2|AC}^{\MDC} = (1-\gamma)(r^{PPM}-r_2^{PNM})$;
    \item[\rm{(B)}] $G^{\MDC}<0$ and $T^{\MDC}>0$, the platform provides the data service if and only if $\gamma\le 1+G^{\MDC}/T^{\MDC}$;
    \item[\rm{(C)}] Conditional on monetization, consumer entry occurs for $x\in\Omega^{\MDC}$, where $\Omega^{\MDC}=[\max\{0,1-v+c\},\min\{1,v-c\}]$
    and personalized prices are
    $    p_1^{\MDC}(x)=\max\{1-2x,0\}$ and $
    p_2^{\MDC}(x)=\max\{2x-1,0\}$.
\end{itemize}  
\end{proposition}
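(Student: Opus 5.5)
The plan is to compute the three downstream continuations NN, PP, and PN by backward induction, and then plug the resulting revenue states into Lemma~\ref{lem:fees} and Proposition~\ref{prop:data_value_decomposition}.

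\textbf{NN continuation.} Neither seller adopts, so under MDC no pricing-relevant information becomes usable. Hence the NDC outcome of Proposition~\ref{prop:without_data} applies, and $R_N^{\MDC}=r^{\NDC}$.

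\textbf{PP continuation.} Both sellers adopt and observe every entering consumer's location. Consider consumer $x$. Bertrand-type competition at each location drives the farther seller's price to zero, and the nearer seller captures the mismatch difference. Given the tie rules and the midpoint assignment, this yields $p_1(x)=\max\{1-2x,0\}$ and $p_2(x)=\max\{2x-1,0\}$. We must check that the valuation constraint does not bind.
\begin{itemize}
\item A consumer $x<1/2$ buying from seller 1 gets $v-(1-2x)-x=v-1+x$, which equals her utility $v-(1-x)$ from seller 2 at price zero.
\item She therefore enters iff $v-(1-x)-c\ge 0$, that is, $x\ge 1-v+c$.
\item Symmetrically, a consumer $x\ge 1/2$ enters iff $x\le v-c$.
\end{itemize}
This gives $\Omega^{\MDC}$ in part (C). Because privacy cost is sunk after entry, consumers who enter do buy. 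When $v$ is large, one might worry that the near seller would prefer to exceed $1-2x$. It cannot: the rival's zero price caps it. Integrating $1-2x$ over $\Omega^{\MDC}\cap[0,1/2]$ gives
\[
r^{PPM}=\int_{\max\{0,1-v+c\}}^{1/2}(1-2x)\,dx\le \tfrac14 .
\]
Comparing this with $r^{\NDC}$ case by case yields $G^{\MDC}<0$. For example, when $v\ge 3/2$ we have $r^{PPM}\le 1/4<1/2$. When $v<1$, the integral is at most $(v-c-1/2)_+^2$-type terms, which are dominated by $v^2/4$. The middle range is handled similarly.

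\textbf{PN continuation.} This is the main obstacle. Seller 2 posts a uniform price $p_{2,\varnothing}$, and consumers enter anticipating prices. Seller 1 then personalizes, setting
\[
p_1(x)=\max\{p_{2,\varnothing}+(1-x)-x,\,0\}
\]
subject to $p_1(x)\le v-x$ among entrants. Consumers served by seller 1 obtain the same utility as buying from seller 2, net of $c$. The rival-preferred region is $x$ near 1, where $p_{2,\varnothing}+1-x$ is small. I would derive seller 2's best response, recognizing that seller 1 poaches exactly the consumers with $1-2x+p_{2,\varnothing}>0$, so seller 2 serves only consumers to whom it is weakly nearer after the price adjustment. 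Seller 2 effectively faces a residual demand region near 1. Its revenue is bounded by what remains after seller 1 undercuts, and entry is also reduced by $c$. The goal is to show $r_2^{PNM}<r^{\NDC}$ in all three valuation cases. This strict inequality is what delivers $T^{\MDC}>0$. I would rely on the appendix's selection of the uninformed seller's maximal revenue and bound that revenue above directly, rather than compute it fully: its served mass is at most $1/2$ minus the poached set, and its price is capped by the NDC price.

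\textbf{Assembling the result.} Part (A) follows from Lemma~\ref{lem:fees} because $G+T=r^{PPM}-r_2^{PNM}$. For part (B), the platform provides the service iff $V^{\MDC}_{\data}=2G+2(1-\gamma)T\ge 0$. With $T>0$, this is equivalent to $\gamma\le 1+G/T$.
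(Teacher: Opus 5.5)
Your overall architecture is the paper's: compute the NN, PP and PN continuations, then apply Lemma~\ref{lem:fees} and Proposition~\ref{prop:data_value_decomposition}. The PP step and the assembly of (A) and (B) are fine.

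\textbf{The gap is in the PN step.} That step is what delivers $T^{\MDC}>0$, which is what licenses dividing by $T^{\MDC}$ in (B). Your bound rests on the claim that the uninformed seller's uniform price is capped by its NDC price, and this is false. For $v<1$ and $v-c\ge\tfrac34$, the PN continuation has $p_{2,\varnothing}^{PNM}=\tfrac12>\tfrac{v}{2}=p_{i,\varnothing}^{\NDC}$.

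Even granting the cap, ``served mass at most $(1-p)/2$ and $p\le v/2$'' only yields $r_2^{PNM}\le v(2-v)/8$. That is not below $r^{\NDC}=v^2/4$ when $v\le\tfrac23$. The poaching bound $p(1-p)/2\le\tfrac18$ on its own also does not suffice for $v\le 1/\sqrt2$. So your argument does not establish $T^{\MDC}>0$ at low valuations. This is exactly where the paper's threat is participation-driven, since $\Omega^{PNM}=\varnothing$ for $v-c<\tfrac34$.

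\textbf{The missing ingredient is the privacy-cost participation constraint.} You mention it (``entry is also reduced by $c$'') but never turn it into a bound. Under MDC every entrant incurs $c$. By voluntary entry and rational expectations, any consumer who buys from seller 2 at price $p$ therefore satisfies $v-p-(1-x)-c\ge0$. Seller 2's demand is then at most $[v-c-p]_+$, and its revenue at most $(v-c)_+^2/4<v^2/4$. Combined with your poaching bound, this gives $r_2^{PNM}\le\min\{(v-c)_+^2/4,\tfrac18\}<r^{\NDC}$ in every valuation case. Use the first term for $v\le1$ and the second for $v>1$, where $r^{\NDC}\ge\tfrac14$. This is the argument the paper itself uses in its heterogeneous-cost extension.

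\textbf{Existence of the PN continuation.} A bound presupposes that a PN continuation equilibrium exists, so that $R_U^{\MDC}$ and $R_I^{\MDC}$ are defined; $R_I^{\MDC}$ is needed for $F^{\MDC}_{2\mid RE}$. The paper's proof avoids both issues by computing that equilibrium explicitly in Lemma~\ref{app:lem_mdc_PN}. There $p_{2,\varnothing}^{PNM}=\tfrac12$, with entry set $[\max\{0,\tfrac32-(v-c)\},1]$ when $v-c\ge\tfrac34$ and empty otherwise. Hence $r_2^{PNM}\in\{0,\tfrac18\}$, and the case table for $G^{\MDC}$ and $T^{\MDC}$ follows by substitution.

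\textbf{A minor point.} At $v=1$ your bound $r^{PPM}\le\tfrac14$ equals $r^{\NDC}$, so strictness of $G^{\MDC}<0$ there requires using $c>0$.
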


Mandatory data use lowers bilateral seller revenue, but the revenue lost after rejecting access can be larger still. Under the selected exclusive-access continuation, Appendix \ref{app:sec_MDC_subgame} shows that the rejection loss is participation-driven when $v-c<3/4$, while it arises from personalized poaching for $v-c\ge3/4$. In the former case, exclusive access need not benefit either seller: the platform sells avoidance of an unfavorable continuation. Lower commissions increase the retained revenue at stake and strengthen the defensive fee component. Data provision can consequently benefit the platform even as it reduces the commission base.

\begin{proposition}[MDC stakeholder effects]\label{prop:profit_campare_M_N}
Whenever the platform monetizes data under MDC, platform profit weakly increases relative to NDC, while market coverage weakly decreases. Consequently, seller profit and social welfare are weakly reduced. The consumer surplus can rise only in the regions in which stronger personalized-price competition dominates privacy and participation losses: $1<v<\tfrac{3}{2}$ and $v-c>1$, or $v\ge \tfrac{3}{2}$ and $c<\tfrac{1}{2}$.
\end{proposition}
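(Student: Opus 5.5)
Under MDC with bilateral adoption, I will compute each stakeholder quantity directly and compare it with the NDC benchmark of Proposition~\ref{prop:without_data}. The conditions for monetization come from Proposition~\ref{prop:equilibrium_mdc}(B).

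\textbf{Platform profit.} This part is immediate. By Proposition~\ref{prop:data_value_decomposition}, platform profit under monetization equals $\Pi_p^{\NDC}+V^{\MDC}_{\data}$. The platform provides the service only when $V^{\MDC}_{\data}=2G^{\MDC}+2(1-\gamma)T^{\MDC}\ge 0$, which is exactly the condition $\gamma\le 1+G^{\MDC}/T^{\MDC}$. So platform profit weakly rises.

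\textbf{Seller profit.} Under Lemma~\ref{lem:fees}, each seller retains $(1-\gamma)R_U^{\MDC}=(1-\gamma)r_2^{PNM}$. Since $T^{\MDC}>0$, this is strictly below $(1-\gamma)r^{\NDC}$, so seller profit falls.

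\textbf{Coverage and surplus.} Proposition~\ref{prop:equilibrium_mdc}(C) gives the entry set $\Omega^{\MDC}=[\max\{0,1-v+c\},\min\{1,v-c\}]$ and the personalized prices $p_1(x)=\max\{1-2x,0\}$, $p_2(x)=\max\{2x-1,0\}$. Every entrant then buys: her utility net of the sunk cost $c$ is at least $v-\min\{x,1-x\}-c$ less the price, and this is nonnegative on $\Omega^{\MDC}$. Coverage is therefore $D^{\MDC}=|\Omega^{\MDC}|$. To see $D^{\MDC}\le\min\{v,1\}$, use $v-c<v$. When $v<1$, the interval length is at most $\max\{0,2(v-c)-1\}$, which is bounded by $v$ because $v<1$. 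When $v\ge1$, the length is trivially at most $1$.

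Consumer surplus is $\int_{\Omega^{\MDC}}\bigl(v-c-\max\{x,1-x\}\bigr)\,dx$. For entrant $x\le 1/2$, the payment is $1-2x$ and the mismatch is $x$, so the net utility is $v-c-(1-x)$. Thus each consumer effectively receives the surplus from her rival seller minus $c$. I will compute this integral piecewise according to whether $v-c\ge1$ (full entry) or $v-c<1$ (entry set $[1-v+c,\,v-c]$, nonempty only if $v-c\ge1/2$).

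\textbf{Welfare.} Social welfare is total match value minus privacy costs: $SW^{\MDC}=\int_{\Omega^{\MDC}}(v-\min\{x,1-x\}-c)\,dx$. This is at most the NDC welfare $\int_0^{D}(v-\cdot)$ because there are fewer trades and each trade bears a cost. I will verify the inequality case by case against $3v^2/4$ and $v-1/4$ using the closed forms.

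\textbf{Consumer-surplus comparison.} This is the main obstacle. With full entry ($v-c\ge1$), $CS^{\MDC}=v-c-3/4$. Against $CS^{\NDC}=1/4$ for $1\le v<3/2$, the MDC surplus is higher iff $v-c>1$. Against $CS^{\NDC}=v-5/4$ for $v\ge3/2$, it is higher iff $c<1/2$. (The condition $c<1/2$ together with $v\ge 3/2$ gives $v-c>1$, so full entry holds there.)

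For partial entry, I need to show $CS^{\MDC}\le CS^{\NDC}$. Here $CS^{\MDC}=(v-c-1/2)^2$ is a quadratic. If $v<1$, then $(v-c-1/2)^2<v^2/4$ because $v-c-1/2<v/2$. If $v\ge1$ and $v-c<1$, then $CS^{\MDC}<1/4\le CS^{\NDC}$. Finally, I will check that the region $v<1$ never yields a gain, and the boundary $v=1$ gives equality at most, so the "only in these regions" claim holds.
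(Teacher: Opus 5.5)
Your proposal is correct and follows essentially the same route as the paper: platform profit from the provision condition, seller profit from $(1-\gamma)R_U^{\MDC}$ with $T^{\MDC}>0$, and piecewise closed-form comparisons of coverage, $CS$, and $SW$ across the full-entry and partial-entry cases. One caution: the ``fewer trades'' heuristic does not by itself bound welfare when $v<1$, because MDC serves central consumers whom NDC leaves unserved. That step is carried by the closed-form check $3(v-c-\tfrac12)^2<3v^2/4$, or alternatively by an explicit argument that NDC serves the highest-match-value consumers.
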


Proposition~\ref{prop:profit_campare_M_N} reveals a tension between platform monetization and marketplace performance. MDC cannot expand coverage relative to NDC in this model. Low personalized prices can attract central consumers, but mandatory privacy losses can drive other consumers out. The platform benefits only because it provides the service when doing so is profitable. Neither this platform gain nor bilateral adoption establishes improved marketplace performance. Prices and access fees redistribute surplus; privacy losses and lost transactions are real welfare costs.

\subsection{Voluntary Disclosure: Eliminating Threat-Based Adoption}\label{sec:vdd}
Under VDD, consumers retain anonymous purchasing. Define $r^{PPV}$ as bilateral seller revenue and $r_2^{PNV}$ as the uninformed seller's revenue under unilateral adoption. Then, analogous to MDC, the sales-revenue effect is $G^{\VDD}=r^{PPV}-r^{\NDC}$ and the threat effect is $T^{\VDD}= r^{\NDC}-r_2^{PNV}$. We use the following provision convention: the platform withholds the service when the opt-in set is empty and otherwise provides it when incremental profit is nonnegative. The boundary cases with only a singleton opt-in set are therefore retained, although they generate no positive mass of data or strictly positive data value. 
\begin{proposition}[VDD equilibrium and threat elimination]
\label{prop:equilibrium_vdd}
Under VDD:
\begin{itemize}
    \item[{\rm(A)}] If the platform provides the service, both sellers adopt and the equilibrium-path fees are
    $F_1^{\VDD}=F_{2\mid AC}^{\VDD}=(1-\gamma)(r^{PPV}-r^{\NDC})$;
    \item[{\rm(B)}] The threat effect $T^{\VDD}=0$, the platform provides the service if and only if the opt-in set is nonempty and the sales-revenue effect is non-negative $(G^{\VDD}\ge 0)$, which occurs when $(v,c)\in \mathcal P_1\cup \mathcal P_2\cup \mathcal P_3$, where the regions $\mathcal P_i$ are defined in the proof; 
    \item[{\rm(C)}] Conditional on provision, for all $i=1,2$, the uniform prices are
\[
p_{i,\varnothing}^{\VDD}=
\begin{cases}
\max\{\tfrac v2,2v-c-1\},&{\rm if\ } (v,c)\in \mathcal P_1\cup\mathcal P_3,\\
p_{i,\varnothing}^{\NDC},&{\rm if\ } (v,c)\in \mathcal P_2,
\end{cases}
\]
and the opt-in set is 
$$\Omega^{\VDD}=\begin{cases}
    \left[
\max\left\{0,1-v+c\right\},\,
\min\left\{1,v-c\right\}
\right],& {\rm if\ } (v,c)\in \mathcal P_1\cup \mathcal P_3\\
\{\tfrac{1}{2}\},& {\rm if\ } (v,c)\in \mathcal P_2
\end{cases}.$$
The opt-in consumers receive $p_1^{\VDD}(x)=\max\{1-2x,0\}$ and
$p_2^{\VDD}(x)=\max\{2x-1,0\}$. All other consumers opt-out and face the uniform prices above. For $x\in\Omega^{\VDD}$, $p_i^{\VDD}(x)<p_{i,\varnothing}^{\VDD}$.
\end{itemize}
\end{proposition}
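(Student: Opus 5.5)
The proof goes by backward induction through the VDD adoption subgames NN, PN/NP, and PP, and then uses Lemma~\ref{lem:fees} and Proposition~\ref{prop:data_value_decomposition} for the contracting stage.

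\textbf{Step 1: the exclusive-access continuation PN.} In PN, seller 1 alone can use disclosed information. Personalized offers are chosen at Stage 5, after disclosure and after observing $p_{2,\varnothing}$, and seller 1 cannot commit in advance. So for a disclosing consumer $x$, seller 1 offers the price that just matches her best anonymous alternative, net of the sunk privacy cost, or the best alternative's surplus against zero. This gives her exactly the purchase utility she would get by staying anonymous, and disclosing only adds the strict cost $c>0$. Disclosure is therefore strictly dominated, and the opt-in set is empty.

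Given no disclosure, the pricing subgame coincides with NDC. I will argue that, under the selection favoring the uninformed seller, the NDC prices of Proposition~\ref{prop:without_data} remain an equilibrium. This gives $r_2^{PNV}=r^{\NDC}$, so $T^{\VDD}=0$. It also gives $R_I^{\VDD}=R_N^{\VDD}$, so the off-path fee $F_{2|RE}^{\VDD}=0$ and no subsidy is needed.

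\textbf{Step 2: contracting.} Substituting $T^{\VDD}=0$ into Lemma~\ref{lem:fees} yields part (A): $F_1=F_{2|AC}=(1-\gamma)(r^{PPV}-r^{\NDC})$. Proposition~\ref{prop:data_value_decomposition} then gives $V_{\data}^{\VDD}=2G^{\VDD}$. Combined with the stated provision convention, this yields the "iff" in (B): the platform provides the service exactly when the opt-in set is nonempty and $G^{\VDD}\ge0$.

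\textbf{Step 3: the bilateral subgame PP.} This step is the main obstacle. I conjecture a segmentation equilibrium with the following structure.
\begin{itemize}
\item Disclosing consumers get Bertrand-type personalized prices $p_1(x)=\max\{1-2x,0\}$ and $p_2(x)=\max\{2x-1,0\}$. These are independent of the anonymous prices because the informed rival also prices personalizedly.
\item Consumer $x$ obtains $v-\max\{x,1-x\}-c$ by disclosing. Disclosure is therefore chosen when this weakly exceeds her anonymous utility $\max\{v-p_\varnothing-\min\{x,1-x\},0\}$.
\item The opt-in set is an interval around $1/2$. When anonymous purchase is unattractive to central consumers, it equals $[\max\{0,1-v+c\},\min\{1,v-c\}]$.
\item Anonymous buyers sit at the extremes, and each seller chooses $p_\varnothing$ as a local monopolist on them. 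The relevant candidates are the interior optimum $v/2$ and the kink $2v-c-1$, at which the marginal anonymous consumer is indifferent to disclosure. This gives $p_\varnothing=\max\{v/2,\,2v-c-1\}$.
\end{itemize}
I must verify that no uniform-price deviation is profitable; hidden deviations leave disclosure fixed, which simplifies the check. I must also pin down the alternative regime in which only $x=1/2$ discloses, where prices revert to NDC.

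Next I will compute $r^{PPV}$ as personalized revenue plus anonymous revenue, compare it with $r^{\NDC}$ case by case in $v$ (the cases $v<1$, $1\le v<3/2$, and $v\ge3/2$), and define $\mathcal P_1,\mathcal P_2,\mathcal P_3$ as the $(v,c)$ regions where $G^{\VDD}\ge0$ and the respective equilibrium obtains. Part (C) then follows by reading off prices and opt-in sets. Finally, I will check $p_i(x)<p_{i,\varnothing}$ on $\Omega^{\VDD}$ from $\max\{1-2x,0\}\le$ the personalized bound and the kink condition. The algebraic region comparison and the equilibrium selection, in the maximal-revenue sense, are where I expect most of the effort.
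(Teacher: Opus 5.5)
Your route is the paper's. Lemma~\ref{app:lem_vdd_pn} (no disclosure under unilateral access) gives $T^{\VDD}=0$; Lemma~\ref{lem:fees} and Proposition~\ref{prop:data_value_decomposition} give (A) and $V_{\data}^{\VDD}=2G^{\VDD}$; and the bilateral characterization (Lemma~\ref{app:lem_vdd_pp}) signs $G^{\VDD}$ region by region.

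\textbf{Step~1.} One inference there fails. Showing that the NDC prices \emph{remain an equilibrium} does not give $r_2^{PNV}=r^{\NDC}$ under a selection that \emph{maximizes} the uninformed seller's revenue. You would need the NDC outcome to be the selected one, and for $1<v<\tfrac32$ it is not.
\begin{itemize}
\item With no disclosure, the continuation is exactly the NDC uniform-price game. For $1<v<\tfrac32$ that game has a continuum of asymmetric kink equilibria with $p_1+p_2=2v-1$ and $p_i\in[\max\{\tfrac v2,\tfrac{4v}{3}-1\},\min\{\tfrac{2v}{3},\tfrac{3v}{2}-1\}]$.
\item Seller~2's revenue on this kink is $p_2(v-p_2)$, which is strictly decreasing for $p_2\ge v/2$.
\item The symmetric price $v-\tfrac12$ lies strictly inside that interval. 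So the uninformed-seller selection yields $R_U^{\VDD}>r^{\NDC}$, hence $T^{\VDD}<0$, contradicting (B).
\end{itemize}
The paper never routes the argument through that selection. It simply identifies the informationally empty continuation with the symmetric NDC benchmark, and you need to adopt that convention explicitly. A minor point: a disclosing consumer gets \emph{at most}, not exactly, her anonymous purchase utility, since she loses seller~1's uniform offer. Strict dominance is unaffected.

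\textbf{Step~3.} This is where the content of (B)--(C) lies, and the local-monopolist picture misses what generates $\mathcal P_3$. You need three ingredients.
\begin{itemize}
\item[(i)] The rational-expectations fixed point. With anticipated price $p^e$, a consumer at $x\le\tfrac12$ discloses iff $x\ge1-v+c$, $x\ge(1+c-p^e)/2$, and $p^e\ge c$. For an interior set you must show that the only consistent boundary is $\hat x=1-v+c$, and that the competition constraint on anonymous demand cannot bind. This is what delivers $p_{i,\varnothing}=\max\{\tfrac v2,v-\hat x\}$.
\item[(ii)] A non-local deviation check, in which a seller undercuts across the disclosure interval to capture the rival's anonymous tail. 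This never pays for $v\le\tfrac32$, or for $v>\tfrac32$ with $c<1$. For $v>\tfrac32$ and $c\ge1$ it binds, producing $c\ge2v-3$ and $v-1+h_-(v)\le c\le v-1+h_+(v)$.
\item[(iii)] Multiplicity for $v>\tfrac32$, $c\ge1$. The no-disclosure equilibrium at $p_{i,\varnothing}=1$ (disclosure gain $2x-c\le0$) coexists with the interior one. Revenue maximization selects the interior one iff $\tfrac14+(v-1)(1-v+c)\ge\tfrac12$, which is exactly $G^{\VDD}\ge0$.
\end{itemize}
Without (ii)--(iii), defining the $\mathcal P_i$ as the regions ``where $G^{\VDD}\ge0$ and the respective equilibrium obtains'' is circular.

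You also need exhaustiveness. Full disclosure ($v>1$, $c\le v-1$) gives $r^{PPV}=\tfrac14<r^{\NDC}$. Interior disclosure with $v>1$ and $v-1<c<\min\{v-\tfrac12,1\}$ gives $G^{\VDD}<0$. So neither enters the provision set.
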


The central mechanism operates in the unilateral-adoption subgame, not just on the bilateral-adoption path. With only one informed seller, disclosure does not improve available purchase surplus: after information becomes usable, that seller can price against the uninformed alternative. Disclosure also incurs the strictly positive privacy cost. No consumer therefore discloses, anonymous purchasing preserves participation, and both sellers face the no-data revenue state. This restores the rejecting seller's payoff and eliminates both channels of $T^{\VDD}$. The platform's data value becomes $V_{\data}^{\VDD}=2G^{\VDD}$, independent of $\gamma$ in the zero-cost, fixed-commission benchmark. Provision is aligned with seller gross revenue, full fee extraction and the separate welfare accounting remain essential.

Under bilateral access, consumers near the midpoint may obtain enough personalized competition to disclose, while more strongly matched consumers prefer anonymity. The provision regions in Figure~\ref{fig:data_provision} have distinct economic meanings. In the interior of $\mathcal{P}_1$, personalization reaches consumers unserved under NDC and anonymity retains endpoint demand. At its zero-mass boundary and throughout $\mathcal{P}_2$, only the midpoint consumer discloses at a zero personalized price, leaving aggregates at their NDC values.  In $\mathcal{P}_3$, NNDC already covers the market, so higher revenue comes from segmentation and higher anonymous prices. This last outcome depends on the maintained revenue-maximizing selection: Remark~\ref{app:rem_vdd_PP} in Appendix~\ref{app:sec_VDD_subgame} describes a coexisting nondisclosure equilibrium that restores NDC outcomes. Threat elimination survives this multiplicity, and positive-revenue segmentation is not common to every equilibrium.

\begin{figure}[!ht]
    \centering
     \caption{Regimes of Data Monetization ($\gamma=0.15$)}
     \includegraphics[width=0.52\textwidth]{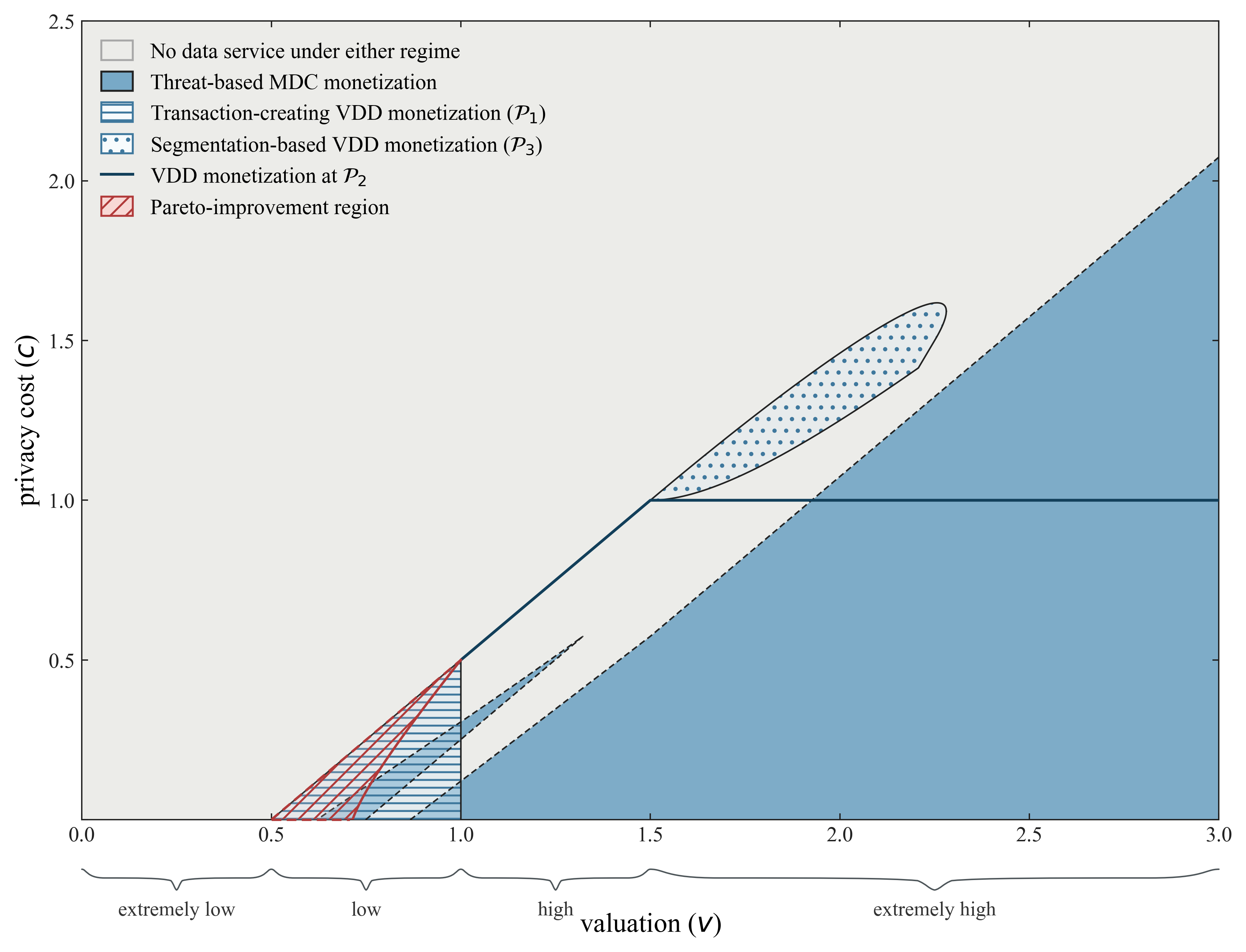}
    \label{fig:data_provision}
    \footnotesize
    \begin{flushleft}
    \footnotesize
    \vspace{-0.1in}
        \textit{Notes.} The ``Pareto-improvement region" compares the platform, each seller, and consumers in aggregate.
    \end{flushleft}
    \vspace{-0.15in}
\end{figure}

\subsubsection{Prices, Coverage, Profits, and Welfare} We now examine the effects of voluntary data disclosure regulations, including the impact on prices and market coverage, profits of platform and sellers, consumer surplus, and social welfare. Withholding the service reproduces NDC. The following comparisons state explicitly when provision is required under one or both regimes.

\begin{proposition}[Impacts on Price and Market Coverage]\label{prop:price_coverage_cmp} Conditional on data provision under VDD and MDC:  
\begin{enumerate}
    \item [{\rm (A)}] VDD opt-in consumers receive the same personalized prices under both regimes 
    ($p_i^{\VDD}(x)=p_i^{\MDC}(x), i\in\{1,2\}$).  If $(v,c)\notin\{(v,c)|v>3/2,c=1\}$, VDD opt-out consumers would not enter under MDC. If $v>3/2,c=1$, only the consumer at $x=\tfrac12$ opts in under VDD, whereas the MDC entry set is $\Omega^{\MDC}=[\max\{0,2-v\},\min\{1,v-1\}]$. 
    \item [{\rm (B)}] Relative to NDC, $p_{i,\varnothing}^{\VDD}\ge p_{i,\varnothing}^{\NDC}$, strictly on
$\{(v,c):2/3<v\le1,\ c<3v/2-1\}\cup\mathcal{P}_3$.
For opt-in consumers, $p_i^{\VDD}(x)\le p_{i,\varnothing}^{\NDC}$ except when $2/3<v\le1$ and $0<c<3v/4-1/2$. In that case, seller 1's personalized price exceeds its NDC price for $x\in[1-(v-c),(2-v)/4)$, and seller 2's exceeds its NDC price for $x\in((2+v)/4,v-c]$; all other opt-in consumers face weakly lower prices.
    \item[{\rm (C)}] Market coverage satisfies $D^{\VDD}=\min\left\{D^{\NDC}+D^{\MDC},\,1\right\}$.
\end{enumerate}
\end{proposition}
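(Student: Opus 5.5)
The argument is a direct comparison of the closed-form outcomes in Propositions~\ref{prop:without_data}, \ref{prop:equilibrium_mdc}, and \ref{prop:equilibrium_vdd}, organized by the provision regions $\mathcal P_1,\mathcal P_2,\mathcal P_3$.

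\textbf{Part (A).} The personalized schedules $p_i^{\VDD}(x)$ and $p_i^{\MDC}(x)$ are both $\max\{1-2x,0\}$ for seller 1 and $\max\{2x-1,0\}$ for seller 2, so they coincide. For opt-out consumers I will use the opt-in sets. On $\mathcal P_1\cup\mathcal P_3$, $\Omega^{\VDD}=[\max\{0,1-v+c\},\min\{1,v-c\}]=\Omega^{\MDC}$, so every VDD opt-out lies outside the MDC entry set. The reason is that the disclosure-participation inequality $v-c-\min\{x,1-x\}\ge0$, which follows from the personalized price $|1-2x|$ plus mismatch, is identical in the two regimes. On $\mathcal P_2$ the opt-in set is $\{1/2\}$. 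Here I need to check that $\Omega^{\MDC}$ is also $\{1/2\}$, which requires $v-c=1/2$. The one exception is $\mathcal P_2$ with $c=1$ and $v>3/2$: the MDC entry set $[2-v,v-1]$ is nondegenerate because the disclosure condition compares with the uniform price $1$ rather than with zero surplus. I will therefore read $\mathcal P_2$ off the proof of Proposition~\ref{prop:equilibrium_vdd}. On its $v>3/2$ branch, opting in is optimal only if $c\le \min\{x,1-x\}\cdot 2+\ldots$, which pins down $c=1$. Its remaining branch is the zero-mass boundary $v-c=1/2$ of $\mathcal P_1$. This case bookkeeping is the step I expect to be most delicate, since it depends on the exact description of $\mathcal P_2$.

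\textbf{Part (B).} The anonymous-price claim follows from $\max\{v/2,2v-c-1\}\ge v/2$ and from the corresponding formulas on $\mathcal P_3$ against Proposition~\ref{prop:without_data}. The inequality is strict exactly when $2v-c-1>v/2$, i.e.\ $c<3v/2-1$, intersected with the part of $\mathcal P_1$ where $v\le1$. This requires $v>2/3$ for $c>0$. On $\mathcal P_3$ it is strict by construction of the segmentation outcome. For opt-in consumers I will compare $|1-2x|$ with $p^{\NDC}_{i,\varnothing}$:
\begin{itemize}
\item For $v\ge1$, the NDC price is at least $1/2$ while opt-in consumers satisfy $|1-2x|\le 1-2(1-v+c)$, which yields the weak inequality.
\item For $v<1$, seller 1's price $1-2x$ exceeds $v/2$ iff $x<(2-v)/4$. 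Opt-in consumers start at $x=1-v+c$, so a nonempty violating interval requires $1-v+c<(2-v)/4$, i.e.\ $c<3v/4-1/2$, which is feasible only for $v>2/3$. Seller 2's interval follows by symmetry.
\end{itemize}

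\textbf{Part (C).} Coverage is a case computation:
\begin{itemize}
\item On $\mathcal P_1$ with $v<1$, VDD serves the endpoint anonymous segments of total mass $v$ (since $p_{\varnothing}\ge v/2$ keeps anonymous demand at most $v-p$ per side, matching NDC up to the opt-in overlap). It also serves the opt-in interval of length $D^{\MDC}=\max\{0,2(v-c)-1\}$. I will verify that these sets are disjoint, using that anonymous buyers are those with $x<1-v+c$, and that the union has measure $\min\{v+D^{\MDC},1\}$.
\item On $\mathcal P_2$, and on $\mathcal P_3$ (where $v\ge1$ and $D^{\NDC}=1$), coverage is $1$ or $v$ and the formula holds trivially once $D^{\MDC}$ is measure zero or the market is already full.
\end{itemize}
The main obstacle is confirming that the anonymous price $2v-c-1$ makes the marginal anonymous buyer exactly $x=v-p=1-v+c$, so that anonymous demand abuts the opt-in set with no gap or overlap.
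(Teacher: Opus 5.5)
Your route is the paper's: read sets and prices off Propositions~\ref{prop:without_data}, \ref{prop:equilibrium_mdc} and \ref{prop:equilibrium_vdd} region by region, and get (C) from the closed form $D^{\VDD}=\min\{1,v+[2v-2c-1]_+\}$. Two steps, however, fail as written.

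In (B), your split into ``$v\ge1$'' and ``$v<1$'' puts $v=1$ on the wrong side. The points $(1,c)$ with $c\le\tfrac12$ lie in $\mathcal P_1$. There $p^{\NDC}_{i,\varnothing}=\tfrac12$ and the opt-in set is $[c,1-c]$, so seller~1's price $1-2x$ exceeds $\tfrac12$ on $[c,\tfrac14)$ whenever $c<\tfrac14$. This is exactly the statement's exception at $v=1$. It also lies within the proposition's scope: at $c=0.1$, $\gamma=0.15$, VDD provides with $G^{\VDD}=0$ and MDC provides as well. Your justification ($p^{\NDC}_{i,\varnothing}\ge\tfrac12$ against $|1-2x|\le 2v-2c-1$) therefore does not yield the weak inequality. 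To fix it, fold $v=1$ into your $\mathcal P_1$ computation. Treat $\mathcal P_2$ via the single zero-price consumer at $x=\tfrac12$. On $\mathcal P_3$, compare with the exact value $p^{\NDC}_{i,\varnothing}=1\ge 1-2x$ rather than the lower bound $\tfrac12$, as the paper does.

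In (C), you claim that anonymous buyers have total mass $v$ and are disjoint from an opt-in interval of length $D^{\MDC}=2v-2c-1$. This is inconsistent when $c<\tfrac{3v}{2}-1$, since such sets would cover $3v-2c-1>1$. In that case $p^{\VDD}_{i,\varnothing}=2v-c-1>\tfrac v2$. The left anonymous segment is $[0,1-v+c)$, and symmetrically on the right, for a total mass $2(1-v+c)<v$. These segments abut the opt-in set, so coverage is exactly $1$. Only for $c\ge\tfrac{3v}{2}-1$ is the anonymous mass $v$. There it is separated from the opt-in set by the (possibly empty) unserved gap $(\tfrac v2,1-v+c)$, giving coverage $v+(2v-2c-1)\le 1$. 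Your ``main obstacle'' remark identifies the first case, but the case split must be made explicit. Equivalently: VDD buyers are the union of NDC buyers and MDC entrants, which overlap exactly when $v+D^{\MDC}>1$.

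Finally, the heuristic in (A) has two problems. The inequality should read $v-c-\max\{x,1-x\}\ge0$, not $\min$. More importantly, it is not why the sets coincide. The VDD opt-in condition also requires $x\ge(1+c-p_{i,\varnothing})/2$, i.e., disclosure must beat the anonymous option. This constraint is slack at the selected price $\max\{\tfrac v2,2v-c-1\}$ on $\mathcal P_1\cup\mathcal P_3$. At $v>\tfrac32$, $c=1$, $p_{i,\varnothing}=1$, however, it forces $x=\tfrac12$, and that is the true source of the exception, not the MDC entry condition. Since you read the sets off Proposition~\ref{prop:equilibrium_vdd}, where $\mathcal P_2=\{v>1,\ c=\min\{v-\tfrac12,1\}\}$ is given explicitly, this affects only your explanation, not the argument.
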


Voluntary disclosure changes who remains available for uniform pricing. The anonymous group is more concentrated among strongly matched consumers, which can raise its uniform price. Some consumers may then disclose their purchase because personalization is cheaper than the \emph{new} anonymous price, even though it is more expensive than the \emph{NDC} price. Figure~\ref{fig:price_coverage_mdc_ndc_vdd} illustrates this distinction. Individual optimality within the VDD therefore does not establish a gain relative to a different governance regime.

\begin{figure}[!ht]
\vspace{-0.14in}
    \centering
    \caption{Equilibrium Prices ($v=0.8,c=0.06,\gamma=0.15$)}\label{fig:price_coverage_mdc_ndc_vdd}
     \includegraphics[width=0.5\textwidth]{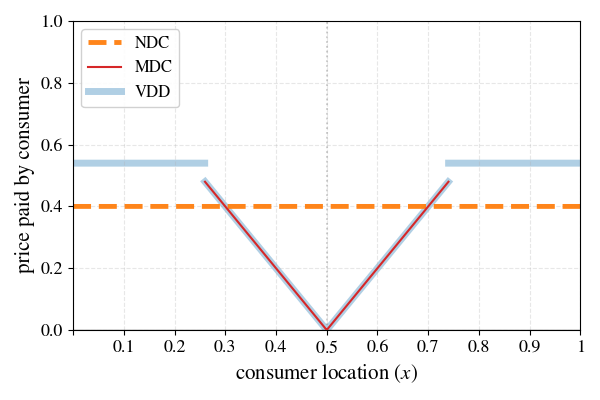}
    \begin{flushleft}
    \vspace{-0.14in}
        \footnotesize
        \textit{Notes.} MDC and VDD prices are conditional on provision, and consumers outside the MDC entry set receive no MDC offer.
    \end{flushleft}
\vspace{-0.25in}
\end{figure}

The coverage identity integrates complementary demand sources.  For $v<1$, NDC sustains anonymous endpoint purchases but leaves central consumers unserved.  MDC can serve central consumers, by offering personalized competition, but loses others due to privacy costs. VDD accommodates both groups, as Figure~\ref{fig:market_coverage} illustrates. In the high-valuation region, full coverage instead means preserving the number of transactions while changing prices and privacy losses.

\begin{figure}[!ht]
\vspace{-0.1in}
    \centering
     \caption{Market Coverages under Alternative Governance Regimes when $v<1$}\label{fig:market_coverage}
     \includegraphics[width=\textwidth]{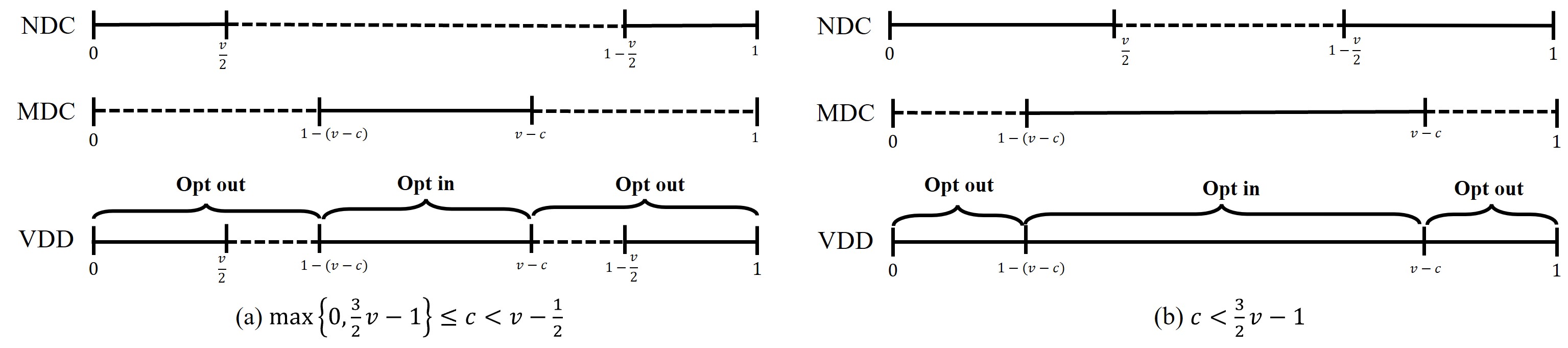}
    \begin{flushleft}
        \footnotesize
        \textit{Notes.} Conditional on provision, solid segments represent purchases; dashed segments represent no purchase.
    \end{flushleft}
    \vspace{-0.25in}
\end{figure}

\begin{proposition}[Impacts on Profits]\label{prop:profit_campare_vdd}
Conditional on data provision under VDD: 
{\rm{(A)}} $\pi_i^{\VDD} = \pi_i^{\NDC} \ge \pi_i^{\MDC}$ for $i\in \{1,2\}$;
{\rm{(B)}} $\Pi_p^{\VDD}\ge \Pi_p^{\NDC}$; {\rm{(C)}}  $\Pi_p^{\VDD}\ge\Pi_p^{\MDC}$ if and only if 
\[
\gamma\ge \gamma^{VM}(v,c):= \max\{0,1-\tfrac{r^{PPV}-r^{PPM}}{T^{\MDC}}\}.
\]
\end{proposition}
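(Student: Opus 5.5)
The plan is to work entirely from the general fee structure in Lemma~\ref{lem:fees} and the data-value decomposition in Proposition~\ref{prop:data_value_decomposition}, specialized with Propositions~\ref{prop:equilibrium_mdc} and~\ref{prop:equilibrium_vdd}.

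\textbf{Part (A): sellers' net profits.} Under full implementation each seller retains $(1-\gamma)R_U^g$, as noted after Proposition~\ref{prop:data_value_decomposition}. I would verify this directly. On the bilateral path a seller's net profit is $(1-\gamma)R_B^g-F^g$. Substituting $F^g=(1-\gamma)(G^g+T^g)$ gives $(1-\gamma)(R_B^g-R_B^g+R_N^g-R_N^g+R_U^g)=(1-\gamma)R_U^g$.

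\begin{itemize}
\item Under VDD, Proposition~\ref{prop:equilibrium_vdd}(B) gives $T^{\VDD}=0$, so $R_U^{\VDD}=r^{\NDC}$. Hence $\pi_i^{\VDD}=(1-\gamma)r^{\NDC}=\pi_i^{\NDC}$.
\item Under MDC, $\pi_i^{\MDC}$ is either $(1-\gamma)r_2^{PNM}$ when the platform provides, or $\pi_i^{\NDC}$ when it withholds. Since $T^{\MDC}>0$ gives $r_2^{PNM}<r^{\NDC}$, in both cases $\pi_i^{\MDC}\le\pi_i^{\NDC}$. This also matches Proposition~\ref{prop:profit_campare_M_N}.
\end{itemize}

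\textbf{Part (B): platform profit against NDC.} Provision under VDD occurs only when $G^{\VDD}\ge0$, and then $V_{\data}^{\VDD}=2G^{\VDD}\ge0$. It follows that $\Pi_p^{\VDD}=\Pi_p^{\NDC}+2G^{\VDD}\ge\Pi_p^{\NDC}$.

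\textbf{Part (C): VDD against MDC.} The statement implicitly compares $\Pi_p^{\VDD}$ with the profit attainable from providing under MDC. So I would compare the two profits from provision:
\[
\Pi_p^{\VDD}-\Pi_p^{\MDC}=2G^{\VDD}-2G^{\MDC}-2(1-\gamma)T^{\MDC}=2\left[(r^{PPV}-r^{PPM})-(1-\gamma)T^{\MDC}\right].
\]
Because $T^{\MDC}>0$ by Proposition~\ref{prop:equilibrium_mdc}(B), this difference is nonnegative if and only if $\gamma\ge 1-(r^{PPV}-r^{PPM})/T^{\MDC}$. Since $\gamma\in[0,1)$, taking the maximum with $0$ gives exactly $\gamma^{VM}$: when the ratio exceeds one, every $\gamma\ge0$ qualifies.

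\textbf{Main obstacle.} The delicate step is the case where MDC would withhold, that is, when $\gamma>1+G^{\MDC}/T^{\MDC}$ so that $\Pi_p^{\MDC}=\Pi_p^{\NDC}$. There I would check consistency as follows. If MDC withholds, then $2G^{\MDC}+2(1-\gamma)T^{\MDC}<0$. Together with $G^{\VDD}\ge0$, this gives $(r^{PPV}-r^{PPM})>(1-\gamma)T^{\MDC}$, so the threshold condition holds automatically and the inequality agrees with (B). The equivalence therefore holds whether or not MDC provides, provided one reads "conditional on data provision" as applying to both regimes. I would state this convention explicitly, consistent with the preamble of Proposition~\ref{prop:price_coverage_cmp}.
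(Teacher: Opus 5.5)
Your proposal is correct and follows essentially the same route as the paper. Both arguments obtain seller profits from the full-extraction fee (each seller keeps $(1-\gamma)R_U^g$), derive (B) from $V_{\data}^{\VDD}=2G^{\VDD}\ge 0$, and reduce (C) to $G^{\VDD}\ge G^{\MDC}+(1-\gamma)T^{\MDC}$. The paper handles endogenous MDC provision in one step by writing $\Pi_p^{\MDC}=\Pi_p^{\NDC}+2\max\{G^{\MDC}+(1-\gamma)T^{\MDC},0\}$ and noting that $G^{\VDD}\ge 0$ makes the maximum irrelevant; this is exactly your separate check for the case where MDC withholds.
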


VDD protects the seller's disagreement payoff. Because the platform extracts the entire bilateral revenue gain through the fee, seller net profits equal NDC profits; they do not generally increase above that benchmark. The platform's comparison with MDC depends on the revenue gains retained under VDD relative to the defensive fees available under MDC. A higher commission reduces the latter advantage, producing the threshold in Proposition~\ref{prop:profit_campare_vdd}. The platform need not voluntarily prefer effective consumer control at a low commission, so a privately profitable commitment to honor opt-out is not automatic.

Consumer surplus and welfare require separate comparisons with NDC and MDC. The former asks whether data use itself improves outcomes; the latter isolates the consequences of disclosure rights when both regimes use data.

\begin{proposition}[Impacts on Consumer Surplus and Social Welfare]\label{prop:cs_sw_cmp} Conditional on data provision under VDD:
\begin{enumerate}
    \item[\rm(A)] Consumer surplus satisfies \(CS^{\VDD}\ge CS^{\NDC}\) on 
    $(v,c)\in\{(v,c): \tfrac{1}{2}<v\le\tfrac{\sqrt{2}}{2},c\le v-\frac{1}{2}\}\cup  \{(v,c): \tfrac{\sqrt{2}}{2}<v\le 1,v-\tfrac{3}{4}+\tfrac{\sqrt{2v^2-1}}{4}\le c\le v-\frac{1}{2}\}\cup \mathcal P_2 $, and \(CS^{\VDD}<CS^{\NDC}\) otherwise. Similarly, social welfare satisfies \(SW^{\VDD}\ge SW^{\NDC}\) on $(v,c)\in\{(v,c):\frac{1}{2}<v\le\frac{6+\sqrt{6}}{10},c\le v-\frac{1}{2}\} \cup \{(v,c):\tfrac{6+\sqrt{6}}{10}<v\le 1,c\in(0,\tfrac{2v-1-\sqrt{10v^2-12v+3}}{4}]\cup[\tfrac{2v-1+\sqrt{10v^2-12v+3}}{4},
    v-\tfrac{1}{2}]\} \cup \mathcal P_2$, and \(SW^{\VDD}<SW^{\NDC}\) otherwise.
    \item[\rm(B)] If the service is also provided under MDC, $CS^{\VDD}>CS^{\MDC}$ and $SW^{\VDD}>SW^{\MDC}$.
\end{enumerate}
\end{proposition}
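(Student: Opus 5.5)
The plan is to compute consumer surplus and welfare in closed form region by region, using the equilibrium objects in Propositions~\ref{prop:without_data}, \ref{prop:equilibrium_mdc} and \ref{prop:equilibrium_vdd}. Under VDD with provision, $(v,c)\in\mathcal P_1\cup\mathcal P_2\cup\mathcal P_3$. Write $a=\max\{0,1-v+c\}$ and $b=\min\{1,v-c\}$ for the endpoints of $\Omega^{\VDD}$, and $p_\varnothing=\max\{v/2,2v-c-1\}$ for the anonymous price.

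\textbf{Consumer surplus under VDD.} $CS^{\VDD}$ is the sum of two parts. The first is the opt-in part,
\[
\int_a^b\bigl(v-c-\max\{x,1-x\}-|1-2x|\bigr)\,dx=\int_a^b\bigl(v-c-\tfrac12\bigr)\,dx \quad\text{by symmetry and the simplification } \max\{x,1-x\}+|1-2x|\ldots,
\]
which I would evaluate directly: for $x\le 1/2$ the price is $1-2x$ plus mismatch $x$, giving net utility $v-c-1+x$. The second is the anonymous endpoint part, served at $p_\varnothing$ on $[0,\min\{v-p_\varnothing,a\}]$ and symmetrically. On $\mathcal P_2$ only the midpoint opts in, so all aggregates equal NDC; this gives the trivial inclusion of $\mathcal P_2$ in both weak inequalities. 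On $\mathcal P_3$ ($v\ge1$ with segmentation) the market is covered under NDC, anonymous prices rise, and privacy costs are incurred. I would show that $CS$ and $SW$ strictly fall there. For $SW$, transactions are unchanged, while privacy costs and possibly mismatch reallocation are added: welfare equals $\int(v-\text{mismatch})-c\,|\Omega|$ and mismatch is unchanged under the tie rule. So $SW$ falls strictly whenever $|\Omega|>0$, and $CS$ falls because $CS=SW-\Pi_p-2\pi_i$ with sellers at NDC profit and the platform gaining.

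\textbf{The region $v<1$ (inside $\mathcal P_1$).} Here I split by which branch of $p_\varnothing$ binds ($c\ge 3v/2-1$ or not) and by whether $\Omega$ reaches the endpoints. I then form the differences $\Delta CS=CS^{\VDD}-v^2/4$ and $\Delta SW=SW^{\VDD}-3v^2/4$ as quadratics in $c$ for fixed $v$. Welfare is simpler via the accounting identity: total surplus equals integrated $(v-\text{mismatch})$ over purchasers, minus $c\,|\Omega|$. The zero sets should produce the stated boundaries, namely $c=v-\tfrac34+\tfrac{\sqrt{2v^2-1}}{4}$ for $CS$ and the roots $\tfrac{2v-1\pm\sqrt{10v^2-12v+3}}{4}$ for $SW$. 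The thresholds $v=\sqrt2/2$ and $v=(6+\sqrt6)/10$ are where the discriminants vanish or where a root crosses $c=0$ or $c=v-\tfrac12$. The constraint $c\le v-1/2$ should coincide with $\Omega$ being nonempty and reaching the center, and $v\le1/2$ is excluded because then $\Omega$ is empty. I expect this piecewise sign analysis to be the main obstacle. In particular, I need to verify that the sign pattern within each sub-case matches the stated union, including the $2/3<v\le1$ branch switch of $p_\varnothing$ noted in Proposition~\ref{prop:price_coverage_cmp}(B). I would first check the quadratic orientation at the endpoints $c\to0$ and $c=v-1/2$, then locate the roots.

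\textbf{Part (B).} I would use Proposition~\ref{prop:price_coverage_cmp}(A): opt-in consumers face identical personalized prices and privacy cost in both regimes, and opt-out VDD consumers would not enter under MDC, where they get zero. Under VDD these opt-out consumers instead obtain nonnegative surplus from anonymous purchase, and strictly positive surplus on a positive mass, since their entry is profitable and endpoint consumers at price $p_\varnothing<v$ gain. The set of MDC entrants who opt in under VDD gets identical utility. This yields $CS^{\VDD}>CS^{\MDC}$, with the exceptional case $v>3/2$, $c=1$ handled separately: there MDC entrants get zero net of cost at the boundary, while VDD consumers get NDC-level surplus $v-5/4>0$. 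For $SW$, seller and platform profits sum to total revenue less fees. Under VDD, coverage is $\min\{D^{\NDC}+D^{\MDC},1\}>D^{\MDC}$, and each added transaction yields nonnegative surplus $v-\text{mismatch}$ with no privacy cost. Hence $SW^{\VDD}>SW^{\MDC}$ follows from identical treatment of common consumers plus positive surplus from the additional anonymous trades.
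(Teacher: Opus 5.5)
Your part (A) follows the paper's route. $\mathcal P_2$ and the boundary $c=v-\tfrac12$ reproduce NDC. On $\mathcal P_3$ the allocation is unchanged, so $\Delta_{VN}^{SW}=-c(2v-2c-1)$ and $\Delta_{VN}^{CS}=\Delta_{VN}^{SW}-2G^{\VDD}<0$, which are the paper's expressions. On $\mathcal P_1$ with $c<\tfrac{3v}{2}-1$ the differences are the paper's quadratics: $4\Delta_{VN}^{CS}=8a^2-4a+1-v^2$ with $a=1-v+c$, and $4\Delta_{VN}^{SW}=8c^2+(4-8v)c-1+4v-3v^2$.

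The sign analysis you defer does go through. The lower $CS$ root (when real) lies below $1-v$, so only $a\ge(1+\sqrt{2v^2-1})/4$ matters. For $(6+\sqrt6)/10\le v\le1$, both $SW$ roots $c_\pm=(2v-1\pm\sqrt{10v^2-12v+3})/4$ lie in $[0,\tfrac{3v}{2}-1]$. Hence the branch $c\ge\tfrac{3v}{2}-1$, where both differences are perfect squares, sits inside the stated sets.

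Fix one slip. Your displayed opt-in integrand double counts: mismatch is $\min\{x,1-x\}$, so net utility is $v-c-\max\{x,1-x\}$, not the constant $v-c-\tfrac12$. Your follow-up formula $v-c-1+x$ is the correct one.

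Part (B) takes a genuinely different route. The paper reads $\Delta_{VM}^{CS}\in\{v^2/4,\,y^2\}$ and $\Delta_{VM}^{SW}\in\{3v^2/4,\,y(2v-y)\}$ off its closed-form tables. Your consumer-by-consumer comparison works differently: disclosers face identical offers and privacy costs, MDC non-entrants keep their nonnegative anonymous surplus under VDD, and welfare is gross match surplus minus privacy costs. It yields the same numbers, since $y^2$ and $y(2v-y)$ are just the anonymous buyers' net and gross surplus, and it needs no tables. It is also the argument the paper itself uses for the heterogeneous-cost and UP results.

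Its weak spot is the exceptional case $v>\tfrac32$, $c=1$ in $\mathcal P_2$. There MDC's entry set $[\max\{0,2-v\},\min\{1,v-1\}]$ has positive mass, but only the midpoint discloses under VDD. Three things break:
\begin{enumerate}
\item Your premise that common consumers are treated identically is false in this case.
\item Your remark that MDC entrants ``get zero net of cost'' is wrong: interior entrants get $v-1-\max\{x,1-x\}>0$. It also does not bound $CS^{\MDC}$.
\item For $v\ge2$, MDC does provide whenever $\gamma\le\tfrac13$, yet your welfare step fails because $D^{\MDC}=1=D^{\VDD}$, so there are no additional trades.
\end{enumerate}
You can repair it pointwise. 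Each consumer gets $v-1-\min\{x,1-x\}\ge[v-1-\max\{x,1-x\}]_+$ under VDD, strictly off the midpoint, and avoided privacy costs give $SW^{\VDD}-SW^{\MDC}\ge c\,D^{\MDC}>0$. Alternatively, do as the paper does: VDD reproduces NDC on $\mathcal P_2$, and the computations behind Proposition~\ref{prop:profit_campare_M_N} put both $CS^{\MDC}$ and $SW^{\MDC}$ strictly below NDC there.
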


Additional trade is the main source of welfare gains over NDC. When baseline coverage is sufficiently low, serving central consumers can outweigh privacy losses. As $v$ approaches one, fewer incremental transactions remain available, and the higher anonymous price can reduce consumer surplus. The welfare effect can be nonmonotone in $c$: low privacy costs make disclosure inexpensive, while sufficiently high costs discourage disclosure and limit aggregate privacy losses; intermediate costs can be most problematic when the remaining coverage gain is small.

The high-valuation segmentation region provides a useful boundary to the main mechanism. In $\mathcal{P}_3$, the market was already covered under NDC, so higher seller revenue does not reflect additional trade. Anonymous-price increases redistribute surplus and disclosure creates real privacy losses. Consumer surplus and welfare are therefore below NDC, notwithstanding $G^{\VDD}\ge0$. The zero-mass cases in $\mathcal{P}_2$ leave aggregates unchanged. This is why eliminating defensive monetization cannot by itself establish welfare improvement.

When both regimes provide the service, VDD preserves the personalized offers of disclosers while allowing other consumers to purchase anonymously instead of exiting. This explains the improvement over MDC. Provision remains endogenous: if only VDD provides, MDC equals NDC and part (A) applies; if only MDC provides, VDD equals NDC and Proposition~\ref{prop:profit_campare_M_N} applies; if neither provides, all outcomes coincide.

\begin{proposition}[Pareto Improvement]\label{prop:3win}
VDD yields a Pareto improvement over both NDC and MDC, evaluated for the platform, each seller, and consumers in aggregate, if and only if $(v,c)\in \{(v,c):\tfrac{1}{2}<v\le \tfrac{\sqrt{2}}{2},c< v-\tfrac{1}{2}\}\cup \{(v,c):\tfrac{\sqrt{2}}{2}<v<1,v-\tfrac{3}{4}+\tfrac{\sqrt{2v^2-1}}{4}\le c<v-\tfrac{1}{2}\}$ and $\gamma\ge\gamma^{\mathrm{win}}(v,c)$, where
$$
\gamma^{\text{win}}(v,c)=
\begin{cases}
\tfrac{(3v-2c-2)^2}{v^2},
&
\text{if } \tfrac{2}{3}<v<1,c\le \tfrac{3v}{2}-1,\\
0,
&
\text{otherwise}.
\end{cases}
$$
\end{proposition}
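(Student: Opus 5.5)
The plan is to assemble the result from the stakeholder comparisons already established, rather than re-solve the game. Throughout, I will use the reading that a Pareto improvement requires every listed party (the platform, each seller, and consumers in aggregate) to be weakly better off against both benchmarks, with at least one strict gain. If the platform withholds the service under VDD, every outcome coincides with NDC and no strict gain is possible. So I restrict attention to $(v,c)$ at which VDD provision occurs, that is, $(v,c)\in\mathcal P_1\cup\mathcal P_2\cup\mathcal P_3$ by Proposition~\ref{prop:equilibrium_vdd}. Throughout, the MDC benchmark is the equilibrium outcome under MDC: the MDC-with-provision outcome when $\gamma\le 1+G^{\MDC}/T^{\MDC}$, and NDC otherwise, by Proposition~\ref{prop:equilibrium_mdc}(B).

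\textbf{Sellers and consumers.} Proposition~\ref{prop:profit_campare_vdd}(A) gives $\pi_i^{\VDD}=\pi_i^{\NDC}\ge\pi_i^{\MDC}$, so sellers never block the comparison, and they never supply the strict gain against NDC. For consumers against NDC, Proposition~\ref{prop:cs_sw_cmp}(A) gives exactly the set on which $CS^{\VDD}\ge CS^{\NDC}$. Against MDC, consumers are strictly better off by Proposition~\ref{prop:cs_sw_cmp}(B) when MDC provides, and otherwise the comparison reduces to the NDC one.

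\textbf{Strictness and the parameter region.} I then remove the cases with no strict gain.
\begin{itemize}
\item On $\mathcal P_2$, and on the zero-mass boundary $c=v-\tfrac12$ of $\mathcal P_1$, all aggregates equal their NDC values. By Proposition~\ref{prop:profit_campare_vdd}(B) and the fee formula with $G^{\VDD}=0$, platform profit is also unchanged, so there is no strict gain against NDC.
\item On $\mathcal P_3$, consumer surplus falls below NDC.
\end{itemize}
Intersecting with the consumer region of Proposition~\ref{prop:cs_sw_cmp}(A) leaves exactly the two stated $(v,c)$ sets, with strict $c<v-\tfrac12$. On these sets coverage strictly expands by Proposition~\ref{prop:price_coverage_cmp}(C). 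I will check that $G^{\VDD}>0$ in the interior of $\mathcal P_1$, which makes the platform's gain over NDC strict. The remaining distinction between weak and strict inequality in the consumer condition then matches the stated boundary.

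\textbf{The platform against MDC.} Proposition~\ref{prop:profit_campare_vdd}(B) gives $\Pi_p^{\VDD}\ge\Pi_p^{\NDC}$. If MDC does not provide, the MDC benchmark equals NDC and nothing more is needed. If MDC provides, Proposition~\ref{prop:profit_campare_vdd}(C) requires $\gamma\ge\gamma^{VM}(v,c)$. Since $\Pi_p^{\MDC}\ge\Pi_p^{\NDC}$ exactly when MDC provides, the joint requirement is $\Pi_p^{\VDD}\ge\max\{\Pi_p^{\NDC},\Pi_p^{\MDC}\}$. I will show this is equivalent to $\gamma\ge\gamma^{VM}$ on the region; that requires verifying that $\gamma^{VM}\le 1+G^{\MDC}/T^{\MDC}$ whenever $\gamma^{VM}>0$, so that the MDC provision cutoff never binds before $\gamma^{VM}$.

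\textbf{Main obstacle: computing $\gamma^{VM}$.} The hard step is reducing $\gamma^{VM}=\max\{0,1-(r^{PPV}-r^{PPM})/T^{\MDC}\}$ to $\gamma^{\text{win}}$ on this region, using the explicit revenue states from the appendix subgames.
\begin{itemize}
\item Since $v<1$ and $v-c>\tfrac12$, the MDC and VDD opt-in sets are $[1-v+c,\,v-c]$.
\item $r^{PPM}$ is the integral of $1-2x$ over $[1-v+c,\tfrac12]$.
\item $r^{PPV}$ adds the anonymous endpoint revenue at price $\max\{v/2,\,2v-c-1\}$.
\item $T^{\MDC}=v^2/4-r_2^{PNM}$, where $r_2^{PNM}$ is taken from the participation-driven branch because $v-c<3/4$ here.
\end{itemize}
I would split on which term attains the maximum in the anonymous price:
\begin{itemize}
\item When $c>\tfrac{3v}{2}-1$, the anonymous price is $v/2$ and the endpoint revenue is unaffected. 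I expect $r^{PPV}-r^{PPM}\ge T^{\MDC}$, giving $\gamma^{VM}=0$.
\item When $c\le\tfrac{3v}{2}-1$, the anonymous price is $2v-c-1$. Substituting and simplifying should yield $1-(r^{PPV}-r^{PPM})/T^{\MDC}=(3v-2c-2)^2/v^2$.
\end{itemize}
This is a routine but error-prone polynomial simplification. I will verify it at a test point such as $v=0.8$, $c=0.06$ before committing.
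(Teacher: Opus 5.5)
Your proposal is correct and follows essentially the same route as the paper. Sellers are handled by Proposition~\ref{prop:profit_campare_vdd}(A), consumers by Proposition~\ref{prop:cs_sw_cmp} with the zero-mass $\mathcal P_2$ and $c=v-\tfrac12$ cases excluded for lack of a strict gain, and the platform by the threshold $\gamma^{VM}$ from Proposition~\ref{prop:profit_campare_vdd}(C). Your explicit reduction fills in a step the paper only asserts: on the region, $v-c<\tfrac34$, so $T^{\MDC}=v^2/4$, and with $a=1-v+c$ one gets $1-4a(v-a)/v^2=(3v-2c-2)^2/v^2$. The provision-cutoff check you flag is automatic, since $1+G^{\MDC}/T^{\MDC}-\gamma^{VM}=G^{\VDD}/T^{\MDC}\ge 0$.
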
  

These gains occur in the transaction-creating region, as illustrated in Figure \ref{fig:data_provision}. Sellers retain their NDC net profits, the platform captures additional revenue, and the aggregate consumer surplus is not decreasing. The statement does not claim that all individual buyers are better off: Proposition~\ref{prop:price_coverage_cmp} identifies possible price increases for some consumers. The policy implication is consequently conditional, not a universal recommendation for data collection or a blanket preference for VDD over NDC. 

\subsubsection{Protecting the Anonymous Purchase Option}
The distinction between voluntary disclosure and protection of the anonymous channel motivates a separate design benchmark. Impose an enforceable cap $p_{i,\emptyset}\le p_{i,\emptyset}^{\NDC}$, $i\in\{1,2\}$, under VDD. The platform announces the benchmark before seller adoption and enforces it on anonymous offers. All other baseline assumptions remain unchanged. We refer to this rule as `UP' and let $U^g(x)$ denote consumer $x$'s utility.

\begin{proposition}[Anonymous-Price Protection] \label{cor:up}
When VDD is implemented with an enforceable price cap ($p_{i,\varnothing}\le p_{i,\varnothing}^{\NDC}$, $i=1,2$): 
\begin{itemize}
    \item[{\rm(A)}] The platform provides the data service if and only if $(v,c)\in\{(v,c)|\tfrac{1}{2}<v\le 1, c\le v-\tfrac{1}{2},4c^2\ge5v^2-4v\}\cup \mathcal P_2$.
    \item[{\rm(B)}] $U^{\UP}(x)\ge U^{\NDC}(x)$ for $x\in[0,1]$ and $SW^{\UP}\ge SW^{\NDC}$.
    \item[{\rm(C)}] Conditional on service provision under both UP and MDC, $CS^{\UP}\ge\max\{CS^{\NDC},CS^{\MDC}\}$ and $\pi_i^{\UP}=\pi_i^{\NDC}\ge\pi_i^{\MDC}$ for $i=1,2$. Moreover, UP yields a Pareto improvement over both NDC and MDC, evaluated for the platform, each seller, and consumers in aggregate, if and only if service provision under UP induces positive-mass disclosure and $\gamma\ge\gamma^{UM}(v,c):=\max\{0,1-(r^{\UP}-r^{\MDC})/T^{\MDC}\}$.
\end{itemize}
\end{proposition}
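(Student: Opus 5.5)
The proof re-solves the VDD subgames with the cap added as a constraint on anonymous prices, and then applies the contracting machinery of Lemma~\ref{lem:fees} and Proposition~\ref{prop:data_value_decomposition}.

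\textbf{Step 1: the unilateral-adoption subgame.} Under the cap, the argument behind $T^{\VDD}=0$ in Proposition~\ref{prop:equilibrium_vdd} still applies.
\begin{itemize}
\item With a sole informed seller, disclosure adds no purchase surplus but costs $c>0$.
\item The cap only restricts uniform prices, so it does not change this disclosure incentive.
\item No consumer discloses, the NDC continuation is restored, and $T^{\UP}=0$.
\end{itemize}
Hence $V^{\UP}_{\data}=2G^{\UP}$. Fees leave each seller at $(1-\gamma)r^{\NDC}$, which gives $\pi_i^{\UP}=\pi_i^{\NDC}\ge\pi_i^{\MDC}$ via Proposition~\ref{prop:profit_campare_vdd}(A).

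\textbf{Step 2: the bilateral subgame.} Given a uniform price $p\le p^{\NDC}_{\varnothing}$, the opt-in set is again determined by comparing the personalized prices $\max\{1-2x,0\}$ net of $c$ with the anonymous surplus. The cases split as follows.
\begin{itemize}
\item On $\mathcal P_1$ with $v\le 2/3$ or $c\ge 3v/2-1$, the VDD uniform price is $v/2=p^{\NDC}$. The cap does not bind and $G^{\UP}=G^{\VDD}$.
\item Where $2v-c-1>v/2$, the cap binds and I set $p_{i,\varnothing}=v/2$.
\item On $\mathcal P_3$, the cap binds at $p^{\NDC}$.
\end{itemize}

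In the capped cases I recompute the selected equilibrium with the uniform price fixed at the cap. This requires checking that no seller wants to lower its uniform price below the cap. I expect this to hold because the uncapped optimum lies above the cap and revenue is concave below it.

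Revenue then equals:
\begin{itemize}
\item endpoint anonymous sales at $v/2$ on $[0,1-v+c)$, giving $\tfrac{v}{2}(1-v+c)$ per seller;
\item plus personalized revenue $\int_{1-v+c}^{1/2}(1-2x)\,dx=\tfrac{(v-c-1/2)^2}{1}$.
\end{itemize}
These expressions need care: the anonymous segment is bounded by whichever of the disclosure and participation constraints binds.

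Imposing $G^{\UP}\ge0$, i.e.\ $r^{\UP}\ge v^2/4$, should reduce to the quadratic condition $4c^2\ge 5v^2-4v$ together with $1/2<v\le1$ and $c\le v-1/2$, the latter being nonemptiness of opt-in. On $\mathcal P_3$, capping at $p^{\NDC}=1$ with segmentation gives revenue below $1/2$, so the service is withheld. On $\mathcal P_2$ the singleton opt-in gives $G=0$, and the service is retained by the provision convention. This establishes (A).

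\textbf{Step 3: consumers and welfare.} For (B), each consumer keeps the option to buy anonymously at a price at most $p^{\NDC}$. She discloses only if that gives weakly more utility, so $U^{\UP}(x)\ge U^{\NDC}(x)$ pointwise, and trivially so when the service is withheld. Welfare is then the sum of three terms:
\begin{itemize}
\item consumer surplus, which is weakly higher;
\item seller profits, which equal NDC;
\item platform profit, which is $\Pi^{\NDC}_p+2G^{\UP}\ge\Pi^{\NDC}_p$.
\end{itemize}
So $SW^{\UP}\ge SW^{\NDC}$.

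\textbf{Step 4: part (C).} I integrate (B) to get $CS^{\UP}\ge CS^{\NDC}$. For the comparison with MDC, I use that MDC entrants receive the same personalized prices but bear $c$, while non-entrants get zero. Every UP consumer can replicate the MDC outcome (disclose at the same price) or do better anonymously, so $CS^{\UP}\ge CS^{\MDC}$.

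For the platform, the MDC comparison mirrors Proposition~\ref{prop:profit_campare_vdd}(C):
\[
\Pi_p^{\UP}-\Pi_p^{\MDC}=2(r^{\UP}-r^{\MDC})-2(1-\gamma)T^{\MDC},
\]
which is nonnegative iff $\gamma\ge\gamma^{UM}$.

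The Pareto improvement over NDC requires the improvement to come from the platform or consumers. With zero-mass disclosure ($\mathcal P_2$ or a boundary case), everything equals NDC, so there is no strict gain. With positive-mass disclosure, I need $G^{\UP}>0$ or a strict consumer gain, which I verify from the Step 2 formulas.

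\textbf{Main obstacle.} I expect the hardest step to be the capped bilateral equilibrium in Step 2. It requires verifying that the capped uniform price is indeed the selected equilibrium, including deviations against personalized responses, and deriving exactly the boundary $4c^2=5v^2-4v$.
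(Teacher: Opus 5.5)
Your overall structure matches the paper's. Unilateral nondisclosure gives $T^{\UP}=0$ and $V^{\UP}_{\data}=2G^{\UP}$, part (B) follows from the preserved anonymous option, and part (C) follows from replicating MDC offers plus the fee formulas. The problem is Step~2, which is where (A) comes from. In the capped region ($\tfrac23<v\le1$, $c<\tfrac{3v}{2}-1$) you keep the uncapped disclosure boundary $1-v+c$. But the threshold is $\hat x=\max\{1-v+c,(1+c-p_\varnothing)/2\}$, and the two terms coincide only at the uncapped price $2v-c-1$. At $p_\varnothing=v/2$, a consumer at $x\le\tfrac12$ gains $2x-1+\tfrac v2-c$ from disclosing, which is negative on $[1-v+c,\tfrac{2+2c-v}{4})$. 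The cheaper anonymous channel pulls these consumers out of the disclosure set, so your configuration violates consumer optimality. The correct boundary is $\hat x^{\UP}=\tfrac{2+2c-v}{4}$, which is at most $v/2$ here, so participation does not bind. This gives $r^{\UP}=\tfrac v2\hat x^{\UP}+(\tfrac12-\hat x^{\UP})^2=\tfrac{4v-v^2+4c^2}{16}$ and $G^{\UP}=\tfrac{4c^2+4v-5v^2}{16}$, which is exactly the source of the condition $4c^2\ge5v^2-4v$. Your formula instead gives $G=c^2+(1-\tfrac{3v}{2})c+\tfrac{(1-v)^2}{4}$, which does not reduce to that boundary. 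At $v=0.85$, $c=0.25$ it equals $-0.000625$, whereas the correct $G^{\UP}\approx0.0023>0$ and the point lies in the provision region of (A).

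The case structure also needs repair. Because the cap changes the bilateral outcome, you cannot restrict attention to $\mathcal P_1\cup\mathcal P_2\cup\mathcal P_3$. At $p^{\NDC}_\varnothing$ the largest disclosure gain is at $x=\tfrac12$ and equals $\min\{v-\tfrac12,1\}-c$, so positive-mass disclosure occurs iff $c<\min\{v-\tfrac12,1\}$. For $v>1$ this set lies outside the uncapped provision region but still has to be checked; for $v\ge1$ one finds $G^{\UP}=[c^2-(p^{\NDC}_\varnothing)^2]/4<0$. Your treatment of $\mathcal P_3$ reaches the right conclusion by the wrong mechanism. There $c>1\ge p_\varnothing$, so nobody discloses under the cap: there is no segmentation, $G^{\UP}=0$, and the service is withheld only because the opt-in set is empty.

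For the equilibrium check you flag, concavity is not the right tool. A deviation below $c$ crosses the disclosure interval and captures part of the rival's anonymous tail. When the price-dependent boundary binds, bound deviation revenue by $p\hat x^{\UP}$ for $p\ge c$, and by $p[\hat x^{\UP}+(c-p)/2]\le c\hat x^{\UP}<p^{\NDC}_\varnothing\hat x^{\UP}$ for $p<c$, using $\hat x^{\UP}\ge c/2$. Also note that any symmetric price below the cap admits a profitable small increase, which pins down the selected price. Finally, strictness in the Pareto claim needs no formulas. Positive-mass provision forces $v<1$, so a positive measure of central consumers who are unserved under NDC obtain positive utility, giving $CS^{\UP}>CS^{\NDC}$.
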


The cap preserves every consumer's NDC anonymous-purchase option. Disclosure must compensate for privacy cost relative to that protected alternative. This rules out segmentation-based monetization and confines positive-mass provision to transaction creation, while preserving seller benchmark profits under full extraction. Unlike the aggregate consumer comparison for uncapped VDD, part (B) protects each consumer relative to NDC. UP is an enforceable policy benchmark, not the platform's privately optimal rule: uncapped VDD can support greater extraction. Applying the cap would require specifying and enforcing a defensible no-data price. That counterfactual benchmark is determined inside the model, and the analysis does not estimate it from marketplace observations or resolve its practical implementation.

\section{Extensions and Counterfactuals}\label{sec:extension}
In this section, we first validate the robustness of our key findings by examining three extensions that test specific restrictions of the baseline model, and then conduct model-based counterfactual analyses to further illustrate these results. 

\subsection{Platform Choice of Access Structure}\label{sec:endogenous_exclusive_access}
The baseline commits to bilateral access on the equilibrium path, with exclusivity only after rejection. Broad access serves as a design benchmark motivated by concerns about discriminatory access to platform resources \citep{EUMetaKustomer2022,EUDataAct2023}. We allow bilateral, exclusive, or no access through fees, retaining sellers' acceptance and rejection rights. Full commitment, zero provision costs, and baseline equilibrium selection keep the revenue states unchanged. Alternative continuation selections affect access boundaries through revenues.

\begin{proposition}[Optimal Access Structure]
\label{app:prop_endogenous_exclusive_access}
If the platform can choose whether to induce bilateral or exclusive data access:
\begin{itemize}
    \item[{\rm{(A)}}]
    For $g\in\{MDC,VDD\}$, bilateral data access is optimal if $G^g+(1-\gamma)T^g\ge 0$ and $2R_B^g\ge R_I^g+R_U^g$. Exclusive data access is optimal if $R_I^g-R_U^g\ge 2\gamma T^g$ and $R_I^g+R_U^g> 2R_B^g$. The platform can induce exclusive access by committing to 
    \[
    F_1^g=(1-\gamma)(R_I^g-R_U^g), \quad F_{2\mid RE}^g=(1-\gamma)(R_I^g-R_N^g), \quad F_{2\mid AC}^g>(1-\gamma)(G^g+T^g).
    \]
    \item[{\rm{(B)}}] Under MDC, $R_I^{MDC}+R_U^{MDC}<2R_N^{MDC}$. Hence, neither bilateral nor exclusive data access increases total seller sales revenue relative to NDC. When the platform implements exclusive data access, 
    $\pi_i^{MDC}<\pi_i^{NDC},i\in\{1,2\}$ and $SW^{MDC}<SW^{NDC}$. Moreover, conditional on exclusive data access under MDC, $CS^{MDC}>CS^{NDC}$ only if $(v,c)\in \left\{(v,c)\,|1<v<\tfrac32\right\}\cup \{(v,c)\,|\tfrac32\le v\le\tfrac74,c<v-\tfrac12-\sqrt{2v-5/2}\}\cup\left\{(v,c)\,\middle|\,v>\tfrac74,\,c<\tfrac14\right\}$.
    \item[{\rm{(C)}}] Under VDD, the platform never offers exclusive data access.
\end{itemize}
\end{proposition}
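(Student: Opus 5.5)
Part (A) reduces to comparing the platform's incremental profit under the two access structures, treating the four revenue states $(R_N^g,R_B^g,R_I^g,R_U^g)$ as given. Parts (B) and (C) then come from substituting the Hotelling revenue states computed for Propositions~\ref{prop:equilibrium_mdc} and \ref{prop:equilibrium_vdd}.

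\textbf{Part (A).} Bilateral access yields $V^g_{\mathrm{bil}}=2G^g+2(1-\gamma)T^g$ by Proposition~\ref{prop:data_value_decomposition}.

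For exclusive access, I first build the fees by backward induction.
\begin{itemize}
\item Setting $F_{2\mid AC}^g>(1-\gamma)(G^g+T^g)$ makes seller 2 strictly reject after seller 1 accepts. Seller 1's acceptance then leads to the exclusive state, worth $(1-\gamma)R_I^g-F_1^g$.
\item Keeping $F_{2\mid RE}^g=(1-\gamma)(R_I^g-R_N^g)$ makes seller 2 accept after a rejection. Seller 1's rejection payoff is therefore $(1-\gamma)R_U^g$.
\item Seller 1 is then indifferent exactly at $F_1^g=(1-\gamma)(R_I^g-R_U^g)$, and accepts by the tie-breaking rule.
\end{itemize}

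The platform then earns $\gamma(R_I^g+R_U^g)+(1-\gamma)(R_I^g-R_U^g)$. Subtracting the no-data profit $2\gamma R_N^g$ gives
\[
V^g_{\mathrm{exc}}=R_I^g-R_U^g-2\gamma T^g .
\]
Comparing the two values:
\begin{itemize}
\item $V^g_{\mathrm{bil}}-V^g_{\mathrm{exc}}=2R_B^g-R_I^g-R_U^g$, after expanding $G^g$ and $T^g$ in terms of revenues. The $\gamma$ terms cancel, which still needs a line of algebra to confirm.
\item Nonnegativity of each option against withholding gives $G^g+(1-\gamma)T^g\ge0$ for bilateral access and $R_I^g-R_U^g\ge2\gamma T^g$ for exclusive access.
\end{itemize}
These two comparisons give exactly the sufficient conditions stated in (A).

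\textbf{Part (B).} The key inequality is $R_I^{\MDC}+R_U^{\MDC}<2R_N^{\MDC}$. I would verify it case by case using the Appendix's selected PN continuation, separately on the participation-driven region ($v-c<3/4$) and the poaching region ($v-c\ge3/4$). Each case is then checked against the three NDC branches of Proposition~\ref{prop:without_data}.

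The economic reason it holds is that with exclusive access the informed seller's personalized prices are bounded by the uninformed rival's price plus the mismatch difference. Privacy costs also shrink entry, so the sum of the two revenues cannot exceed $2R_N$.

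Together with $G^{\MDC}<0$, this shows that neither access structure raises total seller revenue. The remaining claims follow from it:
\begin{itemize}
\item \emph{Seller profit.} Under exclusive access seller 1 nets $(1-\gamma)R_U^{\MDC}$ and seller 2 nets $(1-\gamma)R_U^{\MDC}$. Since $T^{\MDC}>0$, both are below $\pi^{\NDC}$.
\item \emph{Welfare.} Total surplus falls because coverage does not rise and privacy costs are incurred.
\item \emph{Consumer surplus.} I compute $CS$ under PN directly from the PN entry set and prices and compare it with $CS^{\NDC}$, branch by branch. This yields the stated regions, with the boundary $c<v-\tfrac12-\sqrt{2v-5/2}$ coming from a quadratic comparison in the middle branch.
\end{itemize}
I expect this case-by-case verification to be the main obstacle, because the PN continuation has several regimes: participation-constrained versus poaching, and whether the uninformed seller's price is limited by $v$.

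\textbf{Part (C).} Under VDD, Proposition~\ref{prop:equilibrium_vdd} gives $T^{\VDD}=0$ and shows that no consumer discloses under unilateral access. Hence $R_I^{\VDD}=R_U^{\VDD}=R_N^{\VDD}$ and $V^{\VDD}_{\mathrm{exc}}=0$.

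Exclusive access therefore yields no strictly positive value. It also fails the strict condition $R_I+R_U>2R_B$ whenever the service is offered at all, since provision requires $G^{\VDD}\ge0$. Where $G^{\VDD}<0$, withholding is weakly optimal, and the provision convention (withhold if there are no data) rules exclusive access out.
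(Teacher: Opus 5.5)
\textbf{Overall.} Your plan follows the paper's proof closely: the same exclusive-access fee construction, the same differences $R_I^g-R_U^g-2\gamma T^g$ and $2R_B^g-R_I^g-R_U^g$, and the same case check of $R_I^{\MDC}+R_U^{\MDC}<2R_N^{\MDC}$ against the NDC branches. It also uses the same seller-profit argument, a direct $CS$ computation from the PN entry set (each entrant obtains $v-c-\tfrac32+x$), and $T^{\VDD}=0$ together with the provision convention for (C). Two steps are incomplete, however.

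\textbf{Best exclusive arrangement in (A).} The ``bilateral is optimal'' half of (A) requires that your arrangement be the platform's \emph{best} exclusive arrangement, not merely one of them. You should note that $(1-\gamma)R_U^g$ is the harshest rejection payoff the platform can impose on the first offeree, since $R_U^g\le R_N^g$. You must also rule out the other on-path exclusive configuration, in which the first offeree rejects and the second becomes the sole adopter. That path yields $\gamma(R_I^g+R_U^g)+(1-\gamma)(R_I^g-R_N^g)$, which is $(1-\gamma)T^g\ge0$ less than yours. The paper checks exactly this, and it is where $T^g\ge0$ for $g\in\{\MDC,\VDD\}$ is used.

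\textbf{Welfare in (B).} The phrase ``coverage does not rise and privacy costs are incurred'' does not by itself give $SW^{\MDC}<SW^{\NDC}$. You also need that NDC serves the highest-surplus consumers through the nearer seller when $v<1$, and that it is first-best when $v\ge1$. The paper avoids the $v<1$ case entirely. It first shows that $2R_B^{\MDC}\ge R_I^{\MDC}+R_U^{\MDC}$ if and only if $v-c\le(3+\sqrt6)/4$. Hence exclusive access is chosen only when $v-c>1$. There NDC is first-best, while exclusive MDC adds privacy costs and sends consumers in $(\tfrac12,\tfrac34)$ to the farther seller.

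\textbf{Consumer surplus.} Your unrestricted $CS$ comparison is fine for an ``only if'' claim. On $1<v<\tfrac32$ it yields the smaller set $c<v-\tfrac12-\tfrac{\sqrt2}{2}$, which lies inside the stated band. For $v\le1$ it always gives $CS^{\MDC}<CS^{\NDC}$.

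\textbf{Part (C).} The cleanest argument is the paper's: exclusive access under VDD always has an empty opt-in set, so the provision convention excludes it outright. This also covers the tie at $G^{\VDD}=0$. Relying on the failure of (A)'s strict condition does not work on its own, because that condition is only sufficient for exclusive optimality.
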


Proposition \ref{app:prop_endogenous_exclusive_access} shows that platform endogenous access allocation preserves the source of value distinction and the incentive of platform data monetization. Under MDC, the platform may prefer exclusive access when it generates more total sales revenue than bilateral personalization. Nevertheless, total sales revenue under either arrangement remains below NDC. Profitable monetization under MDC therefore still relies on the threat effect. Under endogenous data access, seller profit and social welfare are still reduced, and the consumer surplus rise only under large valuation and low privacy cost. Under VDD, consumers refuse disclosure when only one seller has access, making both the rejection-contingent threat and deliberate exclusivity economically ineffective. Thus, the platform can profitably provide the service only through bilateral sales-revenue gains under VDD.

\subsection{Heterogeneous Privacy Cost}\label{sec:hete_cost}
In the baseline model, we assume that all consumers have a homogeneous privacy cost. Here, we extend the model to consider consumers with heterogeneous privacy concerns. Let privacy cost be $c'\in\{c_H,c_L\}$, where $c_H\ge c_L>0$. The high-cost group has mass $\rho$ and the low-cost group mass $1-\rho$. Raising $\rho$ raises average privacy cost, holding the two costs fixed. Locations remain uniformly distributed within each group. The extension retains strictly positive privacy costs and does not encompass arbitrary multidimensional or correlated preferences.

\begin{proposition}[Robustness to heterogeneous privacy costs]
\label{prop:heterogeneous_privacy}
Under heterogeneous privacy costs, whenever the pure-strategy rational-expectations continuation equilibria exist:
\begin{itemize}
    \item[{\rm (A)}] Under MDC, $G^{\MDC-h}<0$ and $T^{\MDC-h}>0$, and the platform provides the data service if and only if $G^{\MDC-h}+(1-\gamma)T^{\MDC-h}\ge0$. Conditional on provision, $D^{\MDC-h}\le D^{\NDC}$ and $SW^{\MDC-h}<SW^{\NDC}$. Moreover, $CS^{\MDC-h}\ge CS^{\NDC}$ if and only if $CS_L^{\MDC-h}\ge CS^{\NDC}$ and $\rho\le\rho'$, where $CS_L^{\MDC-h}$ is consumer surplus per unit mass of low-cost consumers and $\rho'$ is defined in the proof.
    \item[{\rm (B)}] Under VDD, $T^{\VDD-h}=0$ and $V_{data}^{\VDD-h}=2G^{\VDD-h}$. The platform provides the service if and only if $G^{\VDD-h}\ge0$ and the opt-in set is nonempty. Every symmetric bilateral VDD equilibrium satisfies $p_{i,\varnothing}^{PPV-h}\ge p_i^{\NDC}$ for $i\in\{1,2\}$.
    \item[{\rm (C)}] If both regimes provide the service, $D^{\VDD-h}\ge D^{\MDC-h}$, $CS^{\VDD-h}\ge CS^{\MDC-h}$, and $SW^{\VDD-h}\ge SW^{\MDC-h}$. Moreover, $\pi_i^{\VDD-h}=\pi_i^{\NDC}\ge\pi_i^{\MDC-h}$, while $\Pi_p^{\VDD-h}\ge\Pi_p^{\MDC-h}$ if and only if 
    \[\gamma\ge\max\left\{0,1-(G^{\VDD-h}-G^{\MDC-h})/T^{\MDC-h}\right\}.\]
\end{itemize}
\end{proposition}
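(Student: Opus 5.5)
The plan is to rebuild the four revenue states $(R_N,R_B,R_I,R_U)$ for the two-type population and then reuse the general framework: Lemma~\ref{lem:fees} for fees and Proposition~\ref{prop:data_value_decomposition} for data value. Under NN nothing changes from Proposition~\ref{prop:without_data}, since no privacy cost is incurred, so $R_N=r^{\NDC}$ for both regimes. The remaining subgames are solved in the Hotelling environment, type by type. A consumer of type $k\in\{H,L\}$ at location $x$ faces the personalized-offer stage exactly as in the homogeneous case. The reason is that the privacy cost is sunk once information becomes usable, so the Bertrand-type personalized prices $p_1(x)=\max\{1-2x,0\}$ and $p_2(x)=\max\{2x-1,0\}$ from Propositions~\ref{prop:equilibrium_mdc} and \ref{prop:equilibrium_vdd} remain best responses against any realized rival uniform price. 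Heterogeneity therefore enters only through entry and disclosure sets. These are unions of type-specific intervals $\Omega_k$ obtained by substituting $c_k$ for $c$, weighted by masses $\rho$ and $1-\rho$.

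\emph{Part (A).} Under MDC there is no anonymous channel. Bilateral entry for type $k$ is $[\max\{0,1-v+c_k\},\min\{1,v-c_k\}]$, and each seller's PP revenue is a $\rho$-mixture of the homogeneous values.
\begin{itemize}
\item To show $G<0$, I compare each component with $r^{\NDC}$ using the homogeneous argument behind Proposition~\ref{prop:equilibrium_mdc}(B): prices are at most the NDC price and entry is smaller.
\item To show $T>0$, I solve PN with the selection that maximizes the uninformed seller's revenue. The uninformed seller's uniform price must serve a demand that is reduced both by poaching (informed seller 1 undercuts on every entrant where $1-2x$-type competition permits) and by the privacy-cost exit of each type. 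I then bound its revenue strictly below $r^{\NDC}$. This is the main obstacle: with two types, the uninformed seller's demand is piecewise in $p_{2,\varnothing}$ with kinks at both $c_H$ and $c_L$, and existence is only assumed. My plan is a direct revealed-preference bound. Whatever $p_{2,\varnothing}$ it chooses, its demand at that price is at most its NDC demand, because entry fell and the informed rival weakly undercuts. Its revenue is therefore at most the NDC revenue at that price, and hence at most $r^{\NDC}$. Strictness comes from the positive mass of poached consumers.
\end{itemize}
With $G<0$ and $T>0$ established, the provision condition $G+(1-\gamma)T\ge0$ is Proposition~\ref{prop:data_value_decomposition}. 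Coverage satisfies $D\le D^{\NDC}$ because the type-specific entry intervals are subsets of $[1-v,v]\cap[0,1]$. Welfare loss follows by summing surplus: prices and fees are transfers, and what remains is trade lost together with the privacy costs $c_k>0$, so welfare falls strictly.

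For the consumer-surplus part of (A), write $CS^{\MDC-h}=\rho CS_H+(1-\rho)CS_L$ with $CS_H\le CS_L$. The claim then holds exactly when $CS_L\ge CS^{\NDC}$ and $\rho\le\rho':=(CS_L-CS^{\NDC})/(CS_L-CS_H)$.

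\emph{Part (B).} In the PN subgame, take a consumer of either type who discloses only to seller 1. Seller 1 then prices against $p_{2,\varnothing}$ and leaves her exactly the anonymous utility net of $c_k$. Since $c_k>0$, nondisclosure is strictly better for both types. The unilateral subgame is therefore the NDC game, so $R_U=R_N$ and $T=0$. Proposition~\ref{prop:data_value_decomposition} then gives $V=2G$, and provision follows under the stated convention.

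For the uniform-price bound, I argue by contradiction. Suppose $p_{i,\varnothing}<p^{\NDC}$ in a symmetric PP equilibrium. Disclosers are central consumers, so the anonymous pool is the endpoint segments. Anonymous demand is then locally at least as elastic-free as under NDC, with fewer marginal central consumers, so a seller's revenue from the anonymous pool is maximized at a price at least the NDC price. I check this separately in the three NDC cases of Proposition~\ref{prop:without_data}.

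\emph{Part (C).} When both regimes provide, each type's VDD purchasers contain its MDC entrants plus anonymous buyers, which gives $D^{\VDD-h}\ge D^{\MDC-h}$. The same personalized prices apply, so for consumer surplus I compare consumer by consumer. Opt-outs obtain at least $0$, which is what they would receive under MDC exit or entry, since they chose anonymity. Welfare follows because privacy costs under VDD are weakly lower and there is more trade. Seller profits use (B), $\pi=(1-\gamma)R_U=\pi^{\NDC}$, together with $R_U^{\MDC-h}<R_N$. The platform comparison $2G^{\VDD-h}\ge 2G^{\MDC-h}+2(1-\gamma)T^{\MDC-h}$ rearranges to the stated $\gamma$ threshold.
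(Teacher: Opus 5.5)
Your architecture matches the paper's. You rebuild the revenue states, reuse Lemma~\ref{lem:fees} and Proposition~\ref{prop:data_value_decomposition}, aggregate bilateral MDC outcomes type by type, use unilateral nondisclosure for $T^{\VDD-h}=0$, and compare consumer by consumer in (C). \textbf{The genuine gap is the welfare claim in (A).} You account for the MDC--NDC difference as lost trade plus privacy costs. That is correct for $v>1$, where NDC already serves everyone at the closer seller, but not for $v<1$. There NDC serves only $[0,v/2]\cup[1-v/2,1]$, whereas bilateral MDC serves $[1-v+c_k,v-c_k]$, which contains central consumers who do not buy under NDC. MDC therefore creates some trade while destroying other trade, and your accounting does not show that the net is negative. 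The paper closes this with the type-specific computation: welfare per unit mass of type $k$ is $3[v-c_k-\tfrac12]_+^2<3v^2/4$. Alternatively, you can use a rearrangement argument. NDC already serves the mass-$v$ set of consumers with the largest match surplus $v-\min\{x,1-x\}\ (\ge v/2>0)$. MDC serves a set of smaller mass $[2(v-c_k)-1]_+<v$ and also incurs $c_k$ per entrant.

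Smaller points:
\begin{itemize}
\item Your justification for $G^{\MDC-h}<0$ is that personalized prices are at most the NDC price. This is false for $\tfrac23<v\le1$, $c_k<\tfrac{3v}{4}-\tfrac12$ (Proposition~\ref{prop:price_coverage_cmp}(B)). Instead, rely on the homogeneous comparison $r^{PPM}<r^{\NDC}$ for every $c>0$ and weight by $\rho$.
\item Your $\rho'=(CS_L-CS^{\NDC})/(CS_L-CS_H)$ is undefined when $CS_L=CS_H$, so define it casewise.
\item In (B), you claim anonymous revenue is maximized at a price at least $p_{\varnothing}^{\NDC}$. This is stronger than needed and can fail, because the rival also charges $p<p_{\varnothing}^{\NDC}$ and best responses rise with the rival's price. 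What you need is local. First note that $p<p_{\varnothing}^{\NDC}\le1$ and $c_k>0$ give every type a positive-length anonymous endpoint interval. Then show that a small upward deviation from the symmetric candidate, holding disclosure fixed, is profitable.
\item In (C), ``more trade'' alone does not deliver the welfare ranking. State that common buyers use the same closer seller under both regimes, and that each additional VDD buyer yields nonnegative surplus.
\end{itemize}

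Your revealed-preference route to $T^{\MDC-h}>0$ is valid and differs from the paper's. The paper bounds seller 2's demand by $[v-c_L-p]_+$ for $v\le1$ and by $(1-p)/2$ for $v>1$, then maximizes. You instead observe the following. Its PN demand, at most $[\min\{v-c_L-p,(1-p)/2\}]_+$, lies strictly below its NDC demand $\min\{v-p,(1-p+p_{\varnothing}^{\NDC})/2\}$ at the same own price against the rival's NDC price, because $c_L>0$ and $p_{\varnothing}^{\NDC}>0$. NDC best-response optimality then yields $R_U^{\MDC-h}<r^{\NDC}$ with no case split. Just make explicit that the comparison is against the rival's NDC price, since that is what licenses the last step.
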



Privacy-cost heterogeneity changes the distribution of consumer gains and losses without altering the source of data value. Under MDC, aggregate consumer surplus depends on the shares of low- and high-cost consumers, while social welfare remains below NDC. Under VDD, neither privacy type discloses to a single informed seller, so $T^{\VDD-h}=0$. When provision under both MDC and VDD, VDD weakly increases coverage, consumer surplus, and welfare relative to MDC, with the platform comparison governed by the extension's commission threshold. The unilateral nondisclosure result under VDD holds for all strictly positive privacy costs, whereas the bilateral pricing and stakeholder comparisons are conditional on the existence of the selected continuation equilibria.

\subsection{Imperfect Identifiation}\label{sec:imp_quality}
The baseline model assumes that the platform's data perfectly reveals consumer preference $x$. We now relax this assumption and replace certain identification with a reveal-or-no-signal technology. Let $q\in(0,1]$ be the probability that the service reveals a disclosing consumer's exact location to both sellers. Otherwise, it produces no usable signal and the consumer remains in the uniform-price channel. Privacy cost is incurred only following successful identification. This common-knowledge specification is denoted ``VDD-IQ". Thus, $q$ measures the frequency of exact identification, and the extension does not model the precision of a noisy location estimate.

\begin{proposition}[Robustness to imperfect identification]
\label{prop:imperfect_quality}
For any $q\in(0,1]$:
\begin{itemize}
    \item[{\rm (A)}] Under MDC, exclusive access may remain informative. The platform provides the data service if and only if $G^{\MDC}(q)+(1-\gamma)T^{\MDC}(q)\ge0$.
    \item[{\rm (B)}] Under VDD, no consumer discloses under exclusive access and $T^{\VDD}(q)=0$. Thus, $V_{data}^{\VDD}(q)=2G^{\VDD}(q)$, and the platform provides the data service if and only if $G^{\VDD}(q)\ge0$ and the equilibrium opt-in set is nonempty. 
\end{itemize}
\end{proposition}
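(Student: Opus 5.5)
The plan is to reduce both parts to the general data-value framework. The step that needs real work is the unilateral-access subgame under VDD-IQ; everything else follows from the earlier propositions once the revenue states are identified. Under full implementation, Lemma~\ref{lem:fees} and Proposition~\ref{prop:data_value_decomposition} apply to any regime once the four gross-revenue states $R_N^g,R_B^g,R_I^g,R_U^g$ are fixed. The $q$-technology only changes how these states are computed. So I would first note that, for each $q$, the selected continuations define $G^g(q)$ and $T^g(q)$ through \eqref{eq:GT}. Then $V_{\data}^g(q)=2G^g(q)+2(1-\gamma)T^g(q)$, and provision holds if and only if this is nonnegative. This gives part (A) directly: the MDC condition $G^{\MDC}(q)+(1-\gamma)T^{\MDC}(q)\ge0$ is just $V_{\data}^{\MDC}(q)\ge0$.

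For the qualifier ``exclusive access may remain informative'' in (A), I would exhibit informativeness rather than sign $T^{\MDC}(q)$ everywhere. Under MDC, entering consumers cannot withhold data. Hence a mass $q$ of entrants is identified, and the sole subscriber can poach them against the rival's uniform price exactly as in the baseline PN subgame of Appendix~\ref{app:sec_MDC_subgame}. The identified share is a positive mass whenever the entry set is nondegenerate. I would state that $T^{\MDC}(q)$ generally remains positive by continuity in $q$ around the baseline $q=1$ from Proposition~\ref{prop:equilibrium_mdc}(B). I would not claim a sign for all parameters, since the statement does not require one.

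Part (B) is the core. Fix unilateral access, with seller 1 informed, and take any consumer $x$ together with any conjectured continuation. If $x$ does not disclose, she faces the uniform prices $(p_{1,\varnothing},p_{2,\varnothing})$. If she discloses, then with probability $1-q$ she receives no signal, pays no privacy cost, and faces exactly the same uniform prices, so her payoff is unchanged. With probability $q$ she is identified and incurs $c>0$. Seller 1 then observes $p_{2,\varnothing}$ and, because it cannot commit in advance, sets
\[
p_1(x)=\max\{0,\;p_{2,\varnothing}+(1-x)-x\}
\]
when she buys from seller 2's side, or extracts up to the sunk-cost outside option. In either case her purchase utility is at most $\max\{v-p_{1,\varnothing}-x,\;v-p_{2,\varnothing}-(1-x),\;0\}$. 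This is the same bound as in the baseline VDD unilateral argument behind Proposition~\ref{prop:equilibrium_vdd}(B). Her expected gain from disclosing is therefore at most $-qc<0$, so disclosure is strictly dominated and no positive mass discloses.

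Without any disclosure, the continuation coincides with the no-data pricing game of Proposition~\ref{prop:without_data}. Any candidate deviation by the uninformed seller is evaluated with the disclosure set held fixed (empty), which is consistent with the timing. Hence $R_U^{\VDD}(q)=R_I^{\VDD}(q)=R_N$, which gives $T^{\VDD}(q)=0$. Substituting into the decomposition yields $V_{\data}^{\VDD}(q)=2G^{\VDD}(q)$. The provision convention then delivers the stated condition: the service is provided if and only if $G^{\VDD}(q)\ge0$ and the opt-in set is nonempty.

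The main obstacle is the identified-branch bound, specifically the tie-breaking and indifference conventions. Consumers opt in when indifferent, and an informed seller is chosen in ties. The inequality is strict only because $qc>0$, and the no-signal branch contributes exactly zero. I must also make sure that the informed seller's best response cannot leave an identified consumer strictly better off than under anonymity. This holds because the informed seller observes $p_{2,\varnothing}$ and matches her best anonymous alternative, and I would verify it case by case over the regions where the uniform prices are constrained by $v$ versus by competition.
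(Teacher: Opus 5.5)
Your proposal is correct and follows essentially the paper's route. Part (A) is read off the decomposition $V_{\data}^g=2G^g+2(1-\gamma)T^g$. Part (B) rests on the same bound $q[\max\{0,v-p_{2,\varnothing}-(1-x)\}-c]+(1-q)U^{out}(x)\le U^{out}(x)-qc$, which restores the NDC continuation and gives $T^{\VDD}(q)=0$. The paper's proof also characterizes the bilateral VDD-IQ equilibrium in order to compute $G^{\VDD}(q)$ for the counterfactuals. Your continuity remark on $T^{\MDC}(q)$ near $q=1$ is unproven, but as you note, the statement does not need it.
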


Under MDC, imperfect identification can leave exclusive access informative because consumers cannot refuse the modeled data use while purchasing anonymously. Under VDD, successful unilateral identification creates a privacy cost without improving the consumer's purchasing opportunity; failed identification preserves the anonymous option. Consumers therefore withhold disclosure from a single informed seller for every positive identification probability. This preserves threat elimination under the specified technology, without implying that all imperfect information structures have the same effect.


\subsection{Model-Based Counterfactuals}
Amazon Brand Tailored Promotions motivates platform control of audience eligibility and targeted discounts, while its US referral-fee schedule motivates $\gamma=0.15$ \citep{toogood2026amazonbrandtailored,AmazonFees2026}. We set $v=0.73$ and $c=0.08$ to illustrate a partially covered NDC market, threat-supported MDC provision, and transaction-creating VDD. These are normalized scenario parameters, \emph{not estimates} of Amazon demand or privacy preferences. No proprietary observations, treatment effects, access fees, or disclosure rates are inferred from the public program description. In particular, it does not document the model's sequential fee schedule or establish that consumers can transact on Amazon under the precise anonymous-consent arrangement assumed here. 

\vspace{-0.2in}
\begin{table}[!ht]
\centering
\caption{Baseline Model-Based Counterfactuals}\label{tab:numerical_counterfactual_results}
\normalsize
\small
\setlength{\tabcolsep}{5pt}
\renewcommand{\arraystretch}{1.2}
\begin{tabular}{lrrrrr}
\toprule
Outcome & NDC & MDC & VDD & MEA & VDD-IQ\\
\midrule
Service provided & No & Yes & Yes & No & Yes\\
Disclosure mass & 0.0000 & 0.3000 & 0.3000 & 0.0000 & 0.3000\\
Market coverage & 0.7300 & 0.3000 & 1.0000 & 0.7300 & 0.8500\\
Seller revenue ($r$) & 0.1332 & 0.0225 & 0.1555 & 0.1332 & 0.1443\\
Seller net profit & 0.1132 & 0.0000 & 0.1132 & 0.1132 & 0.1132\\
Platform profit & 0.0400 & 0.0450 & 0.0845 & 0.0400 & 0.0620\\
Consumer surplus ($CS$) & 0.1332 & 0.0225 & 0.1450 & 0.1332 & 0.1338\\
Social welfare ($SW$) & 0.3997 & 0.0675 & 0.4560 & 0.3997 & 0.4223\\
Sales-revenue effect ($G$) & 0.0000 & $-0.1107$ & 0.0223 & 0.0000 & 0.0110\\
Competition-threat effect ($T$) & 0.0000 & 0.1332 & 0.0000 & 0.0000 & 0.0000\\
\bottomrule
\end{tabular}
\vspace{0.1in}
\begin{flushleft}
    \footnotesize
    \textit{Notes.} Values are reported to four decimal places. Seller revenue and net profit are per seller; platform profit, consumer surplus, and welfare are marketplace totals. Disclosure denotes opt-in/disclosure mass, not successfully identified mass. VDD-IQ uses $q=0.5$. According to Proposition 1, commission-based values for NDC, MDC, VDD, MEA, and VDD-IQ are, respectively, $0.0000$, $-0.0332$, $0.0067$, $0.0000$, and $0.0033$; data-monetization values are $0.0000$, $0.0383$, $0.0379$, $0.0000$, and $0.0187$.
\end{flushleft}
\vspace{-0.2in}
\end{table}

Table \ref{tab:numerical_counterfactual_results} includes mandatory equal access (MEA), under which data access is activated only if both sellers adopt, and VDD-IQ with imperfect identification. The matched 30\% disclosure proportions under MDC and VDD highlight the importance of purchasing alternatives. Under MDC, disclosers constitute the entire purchasing population. At these parameters, the selected exclusive-access continuation has no entry, so the reported threat is participation-driven. Under VDD, nondisclosers continue purchasing and coverage reaches one. Seller net profit remains at NDC, while platform profit rises from $0.0400$ to $0.0845$. MEA also eliminates the exclusive-access threat but creates no positive bilateral revenue gain in this scenario, so the platform withholds the service. Equal access and voluntary disclosure therefore address different margins.

{\bf Conversion versus revenue lift.}
The first diagnostic examines whether higher conversion implies positive sales-revenue value. Let $\tau\in[0,1)$ be the targeted discount and $\ell\ge0$ the proportional purchase-rate lift. Given baseline revenue $R_{\mathrm{full}}=pD$, promotional revenue and its proportional change are 
\[
R_{\mathrm{disc}}(\tau,\ell)=(1-\tau)p(1+\ell)D, \quad \Delta R(\tau,\ell)=(1+\ell)(1-\tau)-1.
\]
The zero revenue change boundary is $\ell^*(\tau)=\tau/(1-\tau)$. For example, a $20\%$ lift with a $20\%$ discount reduces revenue by $4\%$. Conversion lift alone is therefore insufficient to measure the sales-revenue value of data and may overstate the value created by targeting. Data value should instead be evaluated by the resulting revenue change, captured in our framework by $G$. Moreover, if the platform charges a high data-service fee despite $G<0$, monetization may reflect the threat effect $T$ rather than revenue created by the data.

{\bf Threat availability and commission rate.}
We now examine the sensitivity of data monetization to the threat effect and platform commission. Let $\psi\in[0,1]$ denote the share of the baseline MDC threat available for monetization, so $T(\psi)=\psi T^{\MDC}$. Holding the sales-revenue effect fixed, full-commitment data value is
\[V_{\mathrm{data}}^{\MDC}(\psi,\gamma)=2G^{\MDC}+2(1-\gamma)\psi T^{\MDC}.\]
For $G^{\MDC}<0<T^{\MDC}$, provision requires
$\psi\ge-G^{\MDC}/[(1-\gamma)T^{\MDC}]$. This is a reduced-form stress test of threat availability. Its parameter $\psi$ is distinct from the subsidy-credibility parameter $\eta$ in Proposition 2: a concrete change in governance may also alter pricing, participation, and the revenue effect.

Using Table~\ref{tab:numerical_counterfactual_results}'s $G^{\MDC}=-0.1107$ and $T^{\MDC}=0.1332$, Figure~\ref{fig:threat_monetization}  shows that positive data value requires a threat share above approximately $0.978$ at the baseline commission. Thus, a roughly 2.2\% reduction in the available threat eliminates the provision incentive. Even with the full threat, data value becomes negative when $\gamma>0.169$. Higher commissions increase the platform’s exposure to lost product revenue while reducing sellers’ retained revenue available for defensive fees. Threat-supported provision therefore requires both substantial rejection-payoff leverage and a sufficiently low commission rate.\\

\begin{figure}[!ht]
 \vspace{-0.4in}
    \centering
    \caption{Sensitivity of Platform Data Value to Threat Availability and Commission Rate}
    \includegraphics[width=\textwidth]{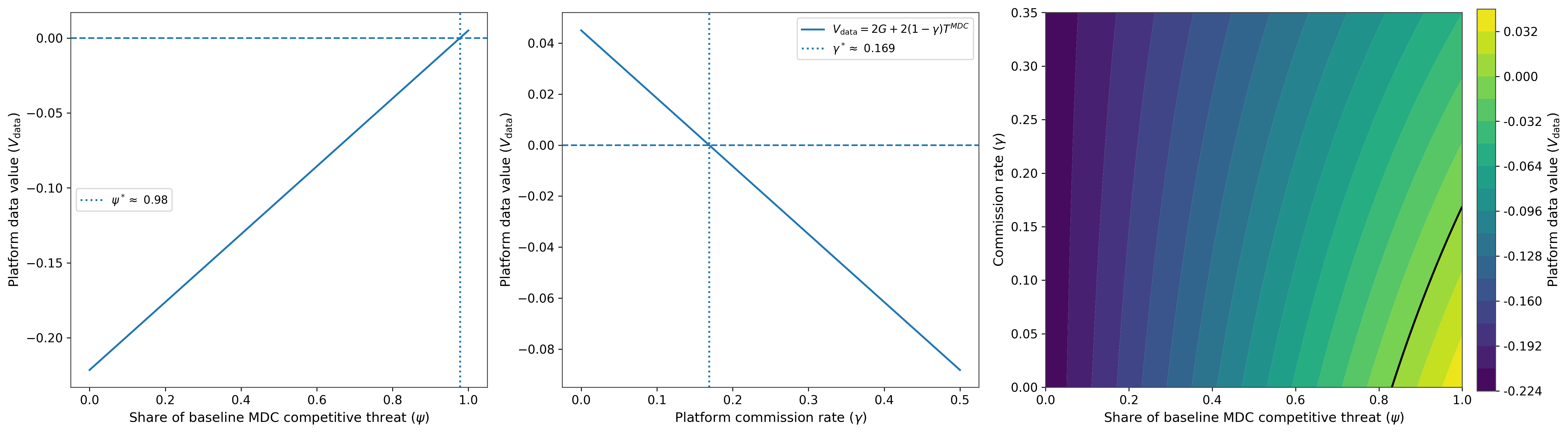}
    \label{fig:threat_monetization}
    \vspace{-0.3in}
\end{figure}

{\bf Identification probability under VDD.} Finally, we examine the identification requirements for profitable data provision and the market outcomes under VDD. Figures~\ref{fig:data_provision_imp}--\ref{fig:data_quality} use the reveal-or-no-signal extension, with privacy cost incurred only upon successful identification. Figure~\ref{fig:data_provision_imp} shows that higher privacy costs need not increase the identification probability required for positive data value, and higher valuation need not facilitate monetization. The bilateral analysis supporting Proposition~\ref{prop:imperfect_quality} establishes a symmetric equilibrium for $\tfrac12<v<1$, $0<c<v-\tfrac12$. Higher privacy costs can discourage consumers from replacing existing anonymous purchases with cheaper personalized offers, limiting revenue losses from personalization. Higher valuation expands NDC coverage and leaves fewer additional transactions to create. Figure~\ref{fig:data_quality}, at $v=0.73$, $c=0.08$, and $\gamma=0.15$, also illustrates that platform and consumer outcomes can move differently as exact identification becomes more likely. Eliminating defensive monetization does not by itself align those outcomes.

\begin{figure}[!ht]
\vspace{-0.1in}
\noindent
    \begin{minipage}[t]{0.49\textwidth}
    \vspace{0pt}
    \centering
    \captionsetup{font=footnotesize}
    \caption{Identification Thresholds for Positive\\ VDD-IQ
    Data Value ($\gamma=0.15$)}
    \label{fig:data_provision_imp}
    \includegraphics[width=\linewidth,height=2.3in]{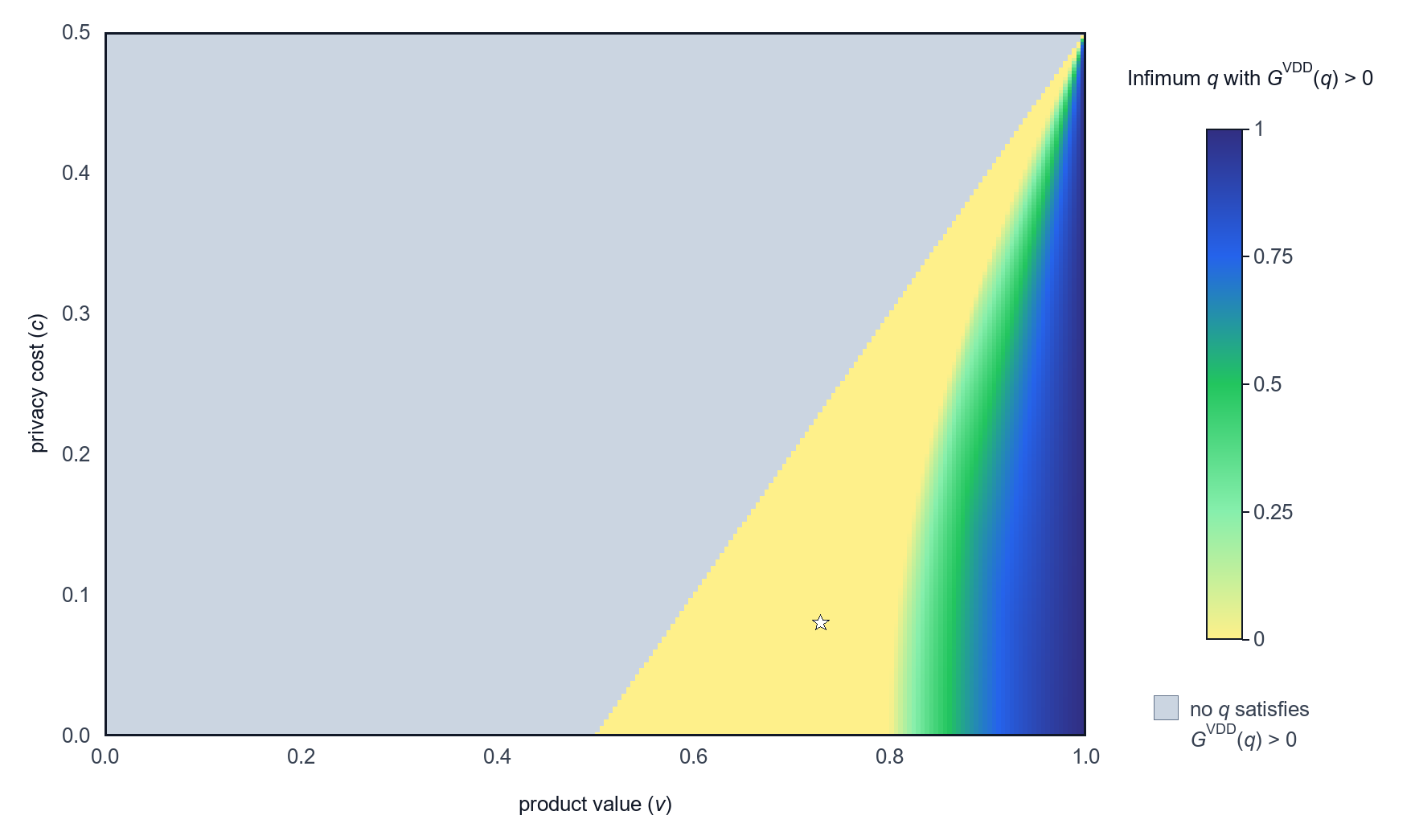}
    \end{minipage}
    \hfill
    \begin{minipage}[t]{0.49\textwidth}
    \vspace{0pt}
    \centering
    \captionsetup{font=footnotesize}
    \caption{Market Outcomes as Exact Identification \\
    Improves ($v=0.73,c=0.08,\gamma=0.15$)}
    \label{fig:data_quality}
    \includegraphics[width=\linewidth,height=2.3in]{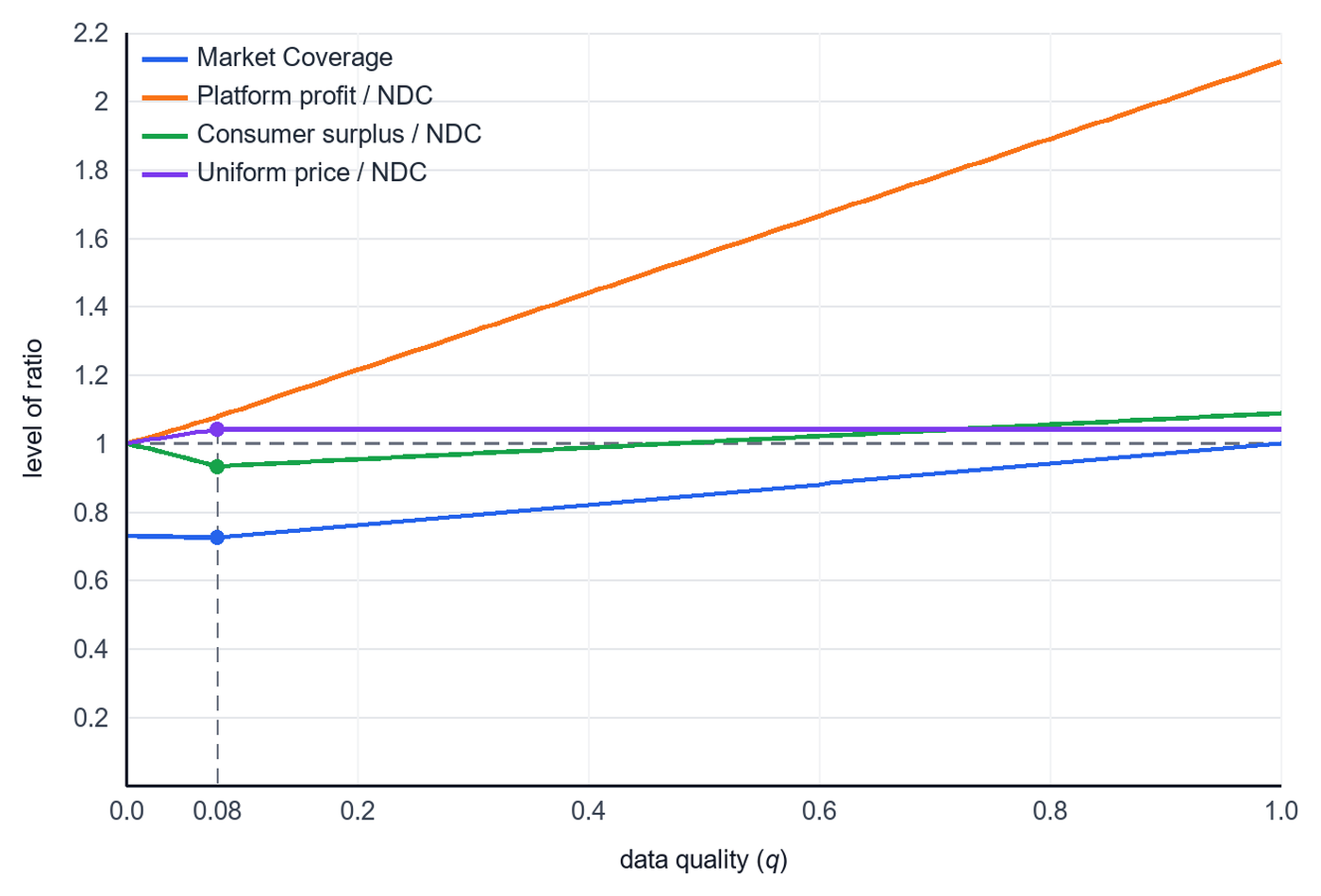}
    \end{minipage}
\vspace{-0.25in}
\end{figure}

\section{Conclusion and Managerial Implications}\label{sec:conclusion}
This paper shows how consumer disclosure rights reshape the disagreement
payoffs supporting data contracts in a commission-based marketplace.
The commercial value of platform-controlled information reflects both
changes in bilateral seller revenue and losses sellers avoid by purchasing
access. Mandatory collection can support defensive fees through asymmetric
competition or reduced participation following rejection, even when
bilateral personalization lowers seller revenue. Under baseline voluntary
disclosure, consumers withhold information from a sole subscriber while
retaining anonymous purchasing, restoring sellers' no-data rejection
payoffs. The platform can then monetize bilateral revenue gains, but not
an exclusive-access threat. This connects consumer control to upstream
contracting and service provision: privacy governance changes
\emph{why sellers pay for data}, not merely how much information is
available.

The analysis yields several managerial implications. \textbf{First, evaluate data services against both no-data and rejection
counterfactuals.}
Managers should distinguish the revenue change under bilateral use from
the loss a seller faces when its rival alone subscribes. Adoption and
conversion should be assessed alongside gross product revenue and retained
profit after commissions and access fees; deeper discounts can increase
purchases while reducing revenue. Sellers should likewise evaluate the
contractual consequences of refusal rather than equate rejecting a
subscription with restoring a no-data market. Where exclusive access
requires a subsidy, its credibility affects the defensive payments the
platform can extract. Under full fee extraction, VDD protects sellers'
no-data net profits but does not allow them to retain their gross revenue
gains.

\textbf{Second, evaluate access fees together with commission exposure.}
Holding downstream revenue states fixed, lower commissions increase
sellers' retained revenue at risk after rejection and strengthen
extractable defensive payments. Higher commissions reduce this fee
component and increase the platform's exposure to lost product revenue.
A data service can therefore generate access fees while shrinking the
commission base. Once VDD eliminates the threat, however, full fee
extraction makes the platform's incremental provision value independent
of the commission rate. Because commissions are exogenous in our model,
these results guide the evaluation of service provision and pricing at
a given rate; they do not identify an optimal commission policy.

\textbf{Third, preserve anonymous purchasing and protect its price.}
VDD can preserve or expand coverage because consumers need not disclose
to purchase. This protection requires effective refusal of pricing-related
data use, not merely a nominal opt-out. Yet anonymity alone does not
preserve no-data prices: in the selected high-valuation segmentation
equilibrium, anonymous prices rise without additional trade. An
enforceable cap at the no-data benchmark addresses this remaining channel
by preserving every consumer's no-data purchasing option. In the model,
the cap protects individual consumer utility and welfare while preserving
sellers' no-data net profits. The platform remains weakly better off than
under no data because it can withhold the service, although its profit can
fall relative to uncapped VDD. Implementation requires a defensible
benchmark announced before seller adoption and enforcement of the cap
on anonymous offers. Disclosure rights and price protection thus address
distinct sources of harm.

\textbf{Fourth, distinguish equal-access commitments from consumer
disclosure rights.}
Offering the service to both sellers does not eliminate defensive demand if rejection leaves the rival exclusively informed. A meaningful equal-access commitment must also prohibit exclusive access after rejection. Although such a rule removes the threat, it does not provide an anonymous purchasing option or ensure that bilateral data use raises revenue. In the reported counterfactual, mandatory equal access leads the platform to withhold the service, whereas VDD supports provision and full coverage. Access rules determine who may use information; disclosure rights also affect whether usable information is generated and whether consumers remain willing to purchase. Platforms and regulators should therefore evaluate these instruments separately, using seller revenue, retained profit, market coverage, consumer surplus, and privacy losses—not adoption or disclosure volume alone.

The results offer a key lesson for marketplace practitioners: the return on a platform's data service should be evaluated against relevant competitive alternatives, rather than solely based on seller revenue without the service. A seller may rationally opt to purchase data even if personalization lowers overall revenue, as refusing access could heighten its disadvantage against an informed rival. Therefore, sellers should distinguish between productive returns and defensive expenditures when evaluating data-service contracts and negotiating terms. Similarly, policymakers should be cautious in interpreting strong seller demand for data as evidence of efficiency, as it may reflect competitive vulnerability rather than productive information use.

The model deliberately isolates the interaction among platform-controlled information, personalized pricing, seller adoption, and consumer disclosure. Future research could incorporate heterogeneous sellers, independent first-party data, direct non-price benefits from disclosure, richer bargaining arrangements, and alternative uses of consumer information such as recommendations, advertising, ranking, and demand forecasting. These extensions may restore some value to unilateral data access under consumer control. More generally, our analysis suggests that the economically relevant measure of privacy governance is not disclosure volume alone, but the amount and quality of information that can be used to create an asymmetric competitive advantage.


\bibliographystyle{informs2020}
\bibliography{reference}

\clearpage
\newpage
\setcounter{page}{1}

\begin{APPENDICES}
\linespread{1.2259}\selectfont
\begin{center}
 \textbf{\large Online Appendix to\\
 \PaperTitle}
\end{center}

\renewcommand{\thesection}{EC.\arabic{section}}
\renewcommand{\thesubsection}{\thesection.\arabic{subsection}}
\renewcommand{\thesubsubsection}{\thesubsection.\arabic{subsubsection}}

\renewcommand{\theHsection}{OA.\arabic{section}}
\renewcommand{\theHsubsection}{\theHsection.\arabic{subsection}}
\renewcommand{\theHsubsubsection}{\theHsubsection.\arabic{subsubsection}}

\counterwithin{lemma}{section}
\counterwithin{proposition}{section}
\counterwithin{equation}{section}
\counterwithin{figure}{section}
\counterwithin{table}{section}

\renewcommand{\thelemma}{\thesection.\arabic{lemma}}
\renewcommand{\theproposition}{\thesection.\arabic{proposition}}
\renewcommand{\theequation}{Eq-\thesection.\arabic{equation}}
\renewcommand{\thefigure}{\thesection.\arabic{figure}}
\renewcommand{\thetable}{\thesection.\arabic{table}}

\vspace{1cm}
This Appendix presents the pricing-subgame equilibria of the baseline model in Appendix \ref{app:sec_subgames}, supplementary Analysis of Platform Data Value in Appendix \ref{app:sec_general_model}, proofs of the main results in Appendix \ref{app:sec_proof_main}, and proofs of additional results and tables in Appendix \ref{app:sec_proof_app}.

\section{Pricing Subgame Equilibria in Baseline Model}\label{app:sec_subgames}

\subsection{MDC Pricing Subgame Equilibria}\label{app:sec_MDC_subgame}

\textbf{1. The ``NN" Subgame.} When neither seller purchases the data service, both use uniform pricing and consumers incur no privacy cost. The equilibrium is identical to the NDC benchmark in Proposition~\ref{prop:without_data}.

\textbf{2. The ``PP" Subgame.} Both sellers offer personalized prices $p_i(x)$ to consumers at $x\in[0,1]$. Let $\Omega^{PPM}$ denote the equilibrium entry set and $p_i^{PPM}(x)$, $i\in\{1,2\}$, the equilibrium pricing strategies.

\begin{lemma}\label{app:lem_mdc_PP}
 Under mandatory data collection, when both sellers purchase the data service, consumers located in $x\in \Omega^{PPM}$ enter the platform, where
$$\Omega^{PPM} = 
 \begin{cases} \emptyset, & \text{if } v\le \tfrac{1}{2} \text{ or if } c> v-\tfrac{1}{2}\\
 [ \max\{0,1-(v-c)\}, \min\{1,v-c\} ], & \text{if } v>\tfrac{1}{2}, c\le v-\tfrac{1}{2}
 \end{cases},$$
while others stay out.
Whenever $\Omega^{PPM}\neq\varnothing$, the equilibrium personalized prices for each $x\in\Omega^{PPM}$ are 
$ p_1^{PPM}(x) = \max\{1-2x,0\}$ and $ p_2^{PPM}(x) = \max\{2x-1,0\}$.
\end{lemma}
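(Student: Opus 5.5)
We solve the PP subgame under MDC by backward induction: first the personalized-pricing stage for a given entered consumer, then the entry decision anticipating those prices. Because every entering consumer's location is usable by both sellers, pricing is consumer-by-consumer Bertrand competition with asymmetric costs of serving. The privacy cost $c$ is sunk at the pricing stage, so it does not enter the purchase comparison: consumer $x$ buys from seller 1 when $v-p_1-x\ge\max\{v-p_2-(1-x),\,0\}$. The uniform prices are irrelevant because no entrant is anonymous under MDC.

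\textbf{Step 1 (pricing stage).} Fix an entered $x\le\tfrac12$; the case $x>\tfrac12$ is symmetric. Seller 1 has the mismatch advantage $(1-x)-x=1-2x\ge0$. In the standard asymmetric Bertrand argument, seller 2's price is driven to its marginal cost $0$, and seller 1 charges the largest price that still wins. That price is $\min\{1-2x,\;v-x\}$, where $v-x$ is the no-purchase (participation) bound. The tie rule (lower-mismatch seller, midpoint to seller 1) supports this as an equilibrium, and it is unique in undominated strategies. The binding constraint is the competitive one $1-2x$ exactly when $v-x\ge 1-2x$, i.e.\ $x\ge 1-v$. Step 2 will show that every entrant satisfies this. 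Hence $p_1^{PPM}(x)=\max\{1-2x,0\}$ and $p_2^{PPM}(x)=0$ for $x\le\tfrac12$, and symmetrically on the other half.

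\textbf{Step 2 (entry).} Under rational expectations, consumer $x\le\tfrac12$ anticipates net utility $v-(1-2x)-x-c=v-1+x-c$ if the competitive price binds. If instead the participation price binds, she anticipates $-c<0$. So she enters iff $x\ge 1-(v-c)$, using the rule that consumers enter when indifferent. Since $c>0$, we have $1-(v-c)>1-v$, so every entrant lies in the competitive-price region, which confirms the price formula used in Step 1. Combining the result with the symmetric condition $x\le v-c$ on $(\tfrac12,1]$ gives the entry set $[\max\{0,1-(v-c)\},\min\{1,v-c\}]$.

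\textbf{Step 3 (emptiness).} This set is nonempty iff $1-(v-c)\le v-c$, i.e.\ $c\le v-\tfrac12$; note that this already forces $v>\tfrac12$. Otherwise no consumer enters, giving the first case of $\Omega^{PPM}$.

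The main obstacle is Step 1's consistency between the participation constraint and the entry threshold. The natural worry is that the price could be $v-x$ for some consumers, which would deliver zero gross surplus to an entrant who has paid $c$. The inequality $1-(v-c)>1-v$ in Step 2 rules this out, and this closes the fixed point. Pricing is also pointwise in $x$ and independent of the entry decisions of other consumers, so entry never feeds back into prices.
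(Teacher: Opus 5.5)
Your proposal is correct and follows essentially the same route as the paper's proof. Both use pointwise Bertrand pricing with a sunk privacy cost, which gives continuation utility $\max\{v-(1-x),0\}-c$ for $x\le\tfrac12$ and the entry threshold $x\ge 1-(v-c)$; both then use $c>0$ to place every entrant in the region where the competitive price $1-2x$ binds, and the nonemptiness condition $c\le v-\tfrac12$ matches the paper's case split.
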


Lemma \ref{app:lem_mdc_PP} shows that consumers closer to the market center are more likely to enter under MDC, while those near the endpoints may stay out. Personalized pricing allows the better-matched seller to charge its location advantage over the rival, which is largest at the endpoints and decreases toward the center. Personalized prices therefore fall toward the center, and then consumers enter only when this surplus covers the privacy cost $c$. Hence, no consumer enters when $v-c$ is low; as $v$ increases or $c$ decreases, entry begins around the center and expands toward the endpoints until the entire market is covered.

\textbf{3. The ``PN" and ``NP" Subgames.} In the ``PN" subgame, seller 1 sets the personalized price $p_1(x)$ and seller 2 the uniform price $p_{2,\varnothing}$. Let $\Omega^{PNM}$ denote the equilibrium entry set and $p_1^{PNM}(x)$ and $p_{2,\varnothing}^{PNM}$ the equilibrium prices. If multiple rational-expectations equilibria exist, we select the one maximizing the uninformed seller's sales revenue, capturing optimistic common expectations about its pricing response.

\begin{lemma}\label{app:lem_mdc_PN}
Under MDC, when only seller 1 purchases the data service, consumers in $x\in\Omega^{PNM}$ enter the platform, where 
$$\Omega^{PNM}=
\begin{cases}
\emptyset,& \text{if } v-c<\tfrac{3}{4}\\
\left[\max\{0,\tfrac{3}{2}-(v-c)\},1\right],& \text{if } v-c\ge \tfrac{3}{4}
\end{cases},$$
and others stay out.
Whenever $\Omega^{PNM}\neq\emptyset$, equilibrium prices are $p_1^{PNM}(x)=\max\{\tfrac{3}{2}-2x,0\}$ and $p_{2,\varnothing}^{PNM}=\tfrac{1}{2}$.
\end{lemma}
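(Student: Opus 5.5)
The plan is to solve the PN subgame by backward induction, following the timing in Figure~\ref{fig:sequence}. First I pin down seller 1's personalized best response to an arbitrary realized uniform price $p_{2,\varnothing}=p$. Then I characterize the entry set consumers would choose given anticipated prices. Finally I impose seller 2's optimality and rational-expectations consistency, and apply the selection rule that maximizes the uninformed seller's revenue.

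\textbf{Step 1 (seller 1's personalized offers).} At Stage 5 the privacy cost is sunk, so an entered consumer at $x$ buys from seller 1 if $v-p_1(x)-x\ge \max\{v-p-(1-x),0\}$, with ties going to the informed seller. Seller 1 therefore charges
\[
p_1(x)=\max\{0,\min\{v-x,\;p+1-2x\}\}.
\]
It wins every $x$ for which this is positive. Seller 2 serves entrants with $p+1-2x\le 0$, i.e.\ $x\ge (1+p)/2$, subject to its own ex post participation constraint $v-p-(1-x)\ge 0$. The binding term in the $\min$ is the rival's offer whenever seller 2's offer is individually rational, since $p+1-2x\le v-x$ exactly when $v-p-(1-x)\ge 0$.

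\textbf{Step 2 (entry).} A consumer at $x$ enters iff her anticipated net utility is at least zero. For a seller-1 customer facing the rival-matching price this utility is $v-(p+1-2x)-x-c=x-(p+1-(v-c))$. For a seller-2 customer it is $v-p-(1-x)-c$, which gives the same cutoff. If seller 1 instead extracts $v-x$, the consumer's utility is $-c<0$, so she does not enter. Hence any entry set is either empty or an upper interval $[a,1]$ with $a=\max\{0,\,p+1-(v-c)\}$.

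\textbf{Step 3 (seller 2's uniform price and fixed points).} Given an entry set $[a,1]$ that is fixed under hidden deviations, seller 2's demand is $1-\max\{a,(1+p)/2\}$.
\begin{itemize}
\item If $a\le 3/4$, the unconstrained optimum is $p=1/2$ with revenue $1/4$, which beats any price with $(1+p)/2\le a$. Substituting $p=1/2$ gives the consistent cutoff $a=3/2-(v-c)$, and $a\le 3/4$ is equivalent to $v-c\ge 3/4$. This yields the stated prices $p_1^{PNM}(x)=\max\{3/2-2x,0\}$ and $p_{2,\varnothing}^{PNM}=1/2$.
\item If $a>3/4$, seller 2 prices at $p=2a-1$ or higher to capture all entrants. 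Consistency then forces $a=v-c$, so this requires $v-c>3/4$, where it is dominated in revenue by the $p=1/2$ equilibrium. When $v-c<3/4$ the same candidate violates $a>3/4$, and seller 2 profitably deviates to $1/2$.
\end{itemize}
So for $v-c<3/4$ no equilibrium with positive-mass entry survives, and only the empty entry set remains. The no-entry outcome is always self-fulfilling. The selection picks the $p=1/2$ equilibrium when it exists, since it gives seller 2 positive revenue against zero. I would also check that the midpoint and tie conventions do not create degenerate single-point entry sets.

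The main obstacle is Step 3: enumerating every consistent fixed point, including corner cases where seller 2's participation cap $v-(1-a)$ binds or $a=0$. The goal there is to show that the selected revenue-maximizing equilibrium is exactly the one stated. I would handle this by first treating $a$ as given and proving that seller 2's best response is piecewise (interior $1/2$ versus corner $2a-1$), and only then solving the fixed-point equation in each piece.
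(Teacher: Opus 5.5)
Your plan is correct and is essentially the paper's proof: seller 1's rival-matching price, the entry cutoff $\max\{0,p+1-(v-c)\}$, seller 2's best response to an entry set held fixed under hidden deviations, the two positive-entry fixed points ($p=\tfrac12$, and the kink point $p=2(v-c)-1$ with entry set $[v-c,1]$ recorded in Remark~\ref{app:rem_mdc_PN}), and the revenue-maximizing selection. Your split on $a\le\tfrac34$ versus $a>\tfrac34$ is equivalent to the paper's split on $\hat x^{PNM}$ versus $v-c$.

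Fix three small slips. First, the revenue at $p=\tfrac12$ is $\tfrac18$; it is the demand that equals $\tfrac14$. Second, for $a>\tfrac34$ the best response is exactly $2a-1$, not ``or higher''. Third, you dropped seller 2's participation term $1-v+p$ from its demand, so the jump to $p=\tfrac12$ need not be profitable when $v<\tfrac34$. At $v=0.7$, $c=0.01$ it yields $0.1$, versus about $0.118$ at $p=2(v-c)-1$. For $v-c<\tfrac34$, rule that candidate out as the paper does, with a small upward price deviation. Its right derivative $(1-2p)/2$ is positive because $p=2(v-c)-1<\tfrac12$, and the participation term stays slack nearby.
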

\begin{remark}\label{app:rem_mdc_PN}
When $\frac{3}{4}<v-c\le1$, the ``PN'' subgame admits an alternative rational-expectations equilibrium with a smaller entry set: consumers in $[v-c,1]$ enter, and the uninformed seller charges $2(v-c)-1$. The threat effect remains strictly positive and is weakly larger than under the maximal-participation equilibrium. Thus, selecting the latter gives a conservative measure of the threat generated by a rival's data access.
\end{remark}

By symmetry, in the ``NP" subgame, seller 2 sets the personalized price $p_2(x)$ and seller 1 the uniform price $p_{1,\varnothing}$. Let $\Omega^{NPM}$, $p_2^{NPM}(x)$, and $p_{1,\varnothing}^{NPM}$ denote the corresponding equilibrium entry set and prices.
\begin{lemma}\label{app:lem_mdc_NP}
Under MDC, when seller 2 adopts the exclusive data access, consumers located in $x\in\Omega^{NPM}$ enter the platform while others stay out, where 
$$\Omega^{NPM}=
\begin{cases}
\emptyset,& \text{if } v-c<\tfrac{3}{4}\\
\left[0,\min\{v-c-\frac{1}{2},1\}\right],& \text{if } v-c\ge \tfrac{3}{4}
\end{cases}.$$
Whenever \(\Omega^{NPM}\neq\emptyset\), equilibrium prices are $p_2^{NPM}(x)=\max\{2x-\tfrac{1}{2},0\}$ and $p_{1,\varnothing}^{NPM} = \tfrac{1}{2}$.
\end{lemma}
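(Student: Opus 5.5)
The plan is to derive Lemma~\ref{app:lem_mdc_NP} from Lemma~\ref{app:lem_mdc_PN} by the reflection $x\mapsto 1-x$, which swaps the two sellers' locations. I would not redo the pricing analysis from scratch.

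\textbf{Step 1: set up the relabeling.} Let $\sigma(x)=1-x$ and relabel seller $1\leftrightarrow 2$. Under this map, the NP subgame becomes the PN subgame:
\begin{itemize}
\item the informed seller now sits at $0$ and the uninformed one at $1$;
\item mismatch costs $tx$ and $t(1-x)$ are exchanged;
\item the uniform distribution of $x$ is preserved;
\item the privacy cost $c$ is location-independent, so it is unchanged.
\end{itemize}
I would then check that the primitives in the statement map correctly. The equilibrium notion is pure-strategy rational expectations. The selection rule maximizes the uninformed seller's gross revenue, and it refers to the informed/uninformed role rather than to labels, so it maps to itself. The tie rule that a consumer picks the seller holding her usable information is likewise label-free.

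\textbf{Step 2: handle the one label-dependent convention.} The only asymmetric primitive is that the midpoint tie goes to seller 1. This affects at most the single consumer $x=\tfrac12$, a measure-zero set. Changing it therefore alters neither revenues nor best responses in uniform prices. It also cannot bind at the informed seller's personalized offer, because the ``informed seller wins ties'' rule takes precedence whenever exactly one seller is informed. So the reflection maps NP equilibria bijectively onto PN equilibria, up to this null set, and preserves the selected one.

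\textbf{Step 3: transport the conclusions.} Applying $\sigma$ to Lemma~\ref{app:lem_mdc_PN} gives:
\begin{itemize}
\item \emph{Empty case.} The entry set is empty iff $v-c<\tfrac34$, which is unchanged.
\item \emph{Entry set.} Otherwise $\{1-y: y\in[\max\{0,\tfrac32-(v-c)\},1]\}=[0,\min\{1,v-c-\tfrac12\}]$.
\item \emph{Personalized price.} It becomes $p_2^{NPM}(x)=p_1^{PNM}(1-x)=\max\{\tfrac32-2(1-x),0\}=\max\{2x-\tfrac12,0\}$.
\item \emph{Uniform price.} The uninformed price stays $p_{1,\varnothing}^{NPM}=\tfrac12$.
\end{itemize}

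The main obstacle is Step 2: making sure the selection criterion and tie-breaking conventions are genuinely symmetric. If any residual asymmetry affected a positive mass of consumers, I would instead verify the transported profile directly. That means checking four things:
\begin{itemize}
\item seller 1's uniform price $\tfrac12$ is a best response to seller 2's personalized schedule, given the entry set;
\item seller 2's personalized offers undercut $p_{1,\varnothing}+ (1-x)-x$ optimally;
\item entry is optimal at the anticipated prices, net of $c$;
\item no larger-revenue equilibrium exists for the uninformed seller.
\end{itemize}
Each check mirrors the corresponding step in the proof of Lemma~\ref{app:lem_mdc_PN}.
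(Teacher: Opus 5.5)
Your proposal is correct and follows the paper's proof, which derives Lemma~\ref{app:lem_mdc_NP} from Lemma~\ref{app:lem_mdc_PN} by interchanging the two sellers and replacing $x$ with $1-x$; your transported entry set, personalized price, and uniform price all match the statement. Your check that the label-dependent midpoint tie never binds is a detail the paper leaves implicit: under MDC exclusive access every entrant is known to exactly one seller, so the informed-seller-wins rule always applies.
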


Lemma \ref{app:lem_mdc_PN} and Lemma \ref{app:lem_mdc_NP} show that, when only one seller purchases the data service, partial coverage is concentrated on the side of the uninformed seller, and the entry region expands toward the informed seller as $v-c$ increases. Seller 2, the uninformed seller, charges the uniform price $\tfrac12$, which serves as an outside option to seller 1's personalized offer. Near seller 1, this alternative is relatively unattractive, allowing seller 1 to charge a high personalized price and leave little consumer surplus, so consumers may stay out. Toward seller 2, the uniform offer becomes more attractive, constraining seller 1's price and increasing consumer surplus and entry. Thus, the uninformed seller promotes participation by limiting the informed seller's surplus extraction. Seller 1's personalized price falls with $x$ as its location advantage declines, and sufficiently close to seller 2, it cannot profitably match the uniform offer even at zero price, so consumers buy from seller 2. As $v-c$ increases, more consumers can cover the privacy cost, and the entry region expands until the market is fully covered.

\subsection{VDD Pricing Subgame Equilibria}\label{app:sec_VDD_subgame}
{\bf 1. The ``NN" Subgame.} If neither seller adopts the data service, consumer disclosure does not affect the outcome because both sellers can only charge uniform prices.

{\bf 2. The ``PP" Subgame.}
For seller $i\in \{1,2\}$, denote the equilibrium personalized prices and the uniform prices as $p_i^{PPV}(x)$ and $p_{i,\varnothing}^{PPV}$, respectively. We focus on symmetric pure-strategy rational-expectations equilibria. Consumers enter and opt in when indifferent. When multiple such equilibria exist, we select an equilibrium that maximizes each seller's sales revenue. Remark \ref{app:rem_vdd_PP} describes the possible alternative equilibria and their implications for bilateral revenue and service provision.

\begin{lemma}\label{app:lem_vdd_pp}
Under voluntary regulation, when both sellers purchase the data service, all consumers enter the platform. Furthermore, the following results hold:
{\rm{(A)}} If
$(v,c)\in
\left\{(v,c)\,\middle|\,v\le\tfrac12\right\}
\cup
\left\{(v,c)\,\middle|\,
v>\tfrac12,\ 
c\ge\min\left\{v-\tfrac12,1\right\},\
(v,c)\notin\mathcal P_3
\right\}$,
then $p_{i,\varnothing}^{PPV}=p_{i,\varnothing}^{NDC}$
and
\[
\Omega^{PPV}=
\begin{cases}
\{1/2\},
&{\rm if \ } v>\tfrac12,\ 
c=\min\left\{v-\tfrac12,1\right\},\\
\varnothing,
& \text{otherwise}.
\end{cases}
\]
The consumers located at $\tfrac{1}{2}$ face personalized price at \(p_1^{PPV}(1/2)=p_2^{PPV}(1/2)=0\).
{\rm{(B)}} If
$(v,c)\in
\left\{(v,c)\,\middle|\,
v>\tfrac12,\ 
c<\min\left\{v-\tfrac12,1\right\}
\right\}
\cup\mathcal P_3$,
then $\Omega^{PPV}=\left[\max\{0,1-v+c\},\min\{1,v-c\}\right]$,
opt-in consumers face $p_1^{PPV}(x)=\max\{1-2x,0\}$ and $p_2^{PPV}(x)=\max\{2x-1,0\}$, and all other consumers opt-out. If $(v,c)\in\{(v,c)|v-c<1\}$, then
$p_{i,\varnothing}^{PPV}=\max\left\{\tfrac v2,2v-c-1\right\},i=1,2$.
If $(v,c)\in\{(v,c)|v>1,c\le v-1, c<1\}$, then all consumers disclose, and the uniform price satisfying $p_{1,\varnothing}^{PPV}=p_{2,\varnothing}^{PPV}\ge1+c$
supports the selected outcome; \(1+c\) is a representative price.
\end{lemma}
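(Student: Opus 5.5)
The plan is to solve the VDD ``PP'' subgame backward, in three steps: the personalized-pricing stage for disclosers, the consumers' opt-in decision given anticipated uniform prices, and the sellers' symmetric uniform-price equilibrium given the induced anonymous demand. Throughout, the selection rule picks the revenue-maximizing equilibrium.

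\textbf{Stage 5 (personalized prices).} For a disclosed consumer at $x$, both sellers see $x$. The privacy cost $c$ is sunk, so Bertrand competition over this single consumer applies. Each informed seller's relevant alternative for the consumer is the rival's personalized offer, not its uniform price: a seller with usable data can always undercut to a personalized price. The standard asymmetric-Bertrand argument then gives $p_1(x)=\max\{1-2x,0\}$ and $p_2(x)=\max\{2x-1,0\}$, with the tie rule assigning the consumer to the closer seller. The purchase utility after disclosure is $v-\max\{1-2x,0\}-x=v-\max\{1-x,x\}$, which I would verify is nonnegative on the candidate set. This is the same computation as Lemma~\ref{app:lem_mdc_PP}, so I would reuse it.

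\textbf{Stage 4 (opt-in).} Given anticipated uniform prices $p_\varnothing$, consumer $x\le\tfrac12$ opts in iff
\[
v-(1-x)-c\ \ge\ \max\{v-x-p_\varnothing,\ v-(1-x)-p_\varnothing,\ 0\}.
\]
Because anonymous prices are at least personalized ones under the candidate equilibrium, this reduces to two conditions. The first is $v-c\ge 1-x$, which is where the bound $\max\{0,1-v+c\}$ comes from. The second is that disclosing beats the best anonymous option. Since personalized prices make the disclosure payoff $v-c-(1-x)$ increasing in $x$ toward the center, while the anonymous payoff $v-x-p_\varnothing$ is decreasing in $x$, the opt-in set is an interval around $\tfrac12$. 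Opt-outs are endpoint consumers, who are served by the nearer seller's uniform price. Entry is free, so everyone enters.

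\textbf{Stage 3 (uniform prices).} Seller~1's anonymous demand comes only from $[0,x^*]$, where $x^*$ is the opt-in cutoff. Crucially, the demand is truncated by the disclosure boundary, not by competition with seller~2's uniform price. The timing matters here. Consumers fix disclosure before seeing realized uniform prices, so a unilateral deviation in $p_{1,\varnothing}$ cannot alter the opt-in set. It only moves the local participation margin $v-x-p_{1,\varnothing}\ge 0$, which is capped by the fixed cutoff.

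For a fixed anticipated cutoff $x^*=1-(v-c)$, seller~1's revenue is $p\min\{v-p,\,x^*\}$. The unconstrained optimum is $v/2$. When the cutoff binds, the highest price serving all of $[0,x^*]$ is $v-x^*=2v-c-1$ wait, more precisely it should come out as $v-x^*$ adjusted by the indifference condition of the cutoff consumer, so I would recompute it to obtain the stated $2v-c-1$. Taking the better of these gives $p_\varnothing=\max\{v/2,2v-c-1\}$ when $v-c<1$.

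The rational-expectations step is then a fixed-point check. I must confirm that the cutoff implied by $p_\varnothing$ through the Stage-4 indifference condition coincides with $1-v+c$, and that personalized prices lie below $p_\varnothing$ on the opt-in set, so that the disclosure condition in Stage 4 is slack or binding consistently.

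For $v-c\ge1$ with $c<1$, everyone can afford disclosure. Any uniform price of at least $1+c$ makes anonymity unattractive even at the endpoints: the comparison is $v-x-p_\varnothing\le v-(1-x)-c$ evaluated at $x=0$ with $p_1(0)=1$. Those prices support full disclosure, and uniform prices are then irrelevant to revenue, which explains the ``representative'' price. Region $\mathcal P_3$ enters through this same full-disclosure or high-anonymous-price branch, and I would define it by where the segmentation equilibrium beats NDC revenue under the selection rule.

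\textbf{Part (A).} When $c\ge\min\{v-\tfrac12,1\}$ and we are outside $\mathcal P_3$, the disclosure condition $v-c\ge\max\{x,1-x\}$ fails for every $x$ except possibly $x=\tfrac12$ at equality. So the opt-in set is empty or $\{\tfrac12\}$. That single consumer has measure zero, so it does not change anonymous demand, and the NDC prices of Proposition~\ref{prop:without_data} remain an equilibrium. For $v\le\tfrac12$ no one can afford disclosure at all.

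\textbf{Main obstacle.} The hard step is establishing equilibrium and the selection in Stage~3. Two things need checking. First, there must be no profitable deviation in which seller~1 undercuts to steal anonymous consumers across the midpoint or raises price above the cutoff bound, keeping in mind that informed rivals re-respond only through personalized prices. Second, I must verify that the revenue-maximizing symmetric equilibrium is the one stated, comparing the segmentation branch against the NDC-price nondisclosure equilibrium. The boundary between the two branches of the max and the description of $\mathcal P_3$ come from this comparison, and I expect it to require careful case work in $(v,c)$.
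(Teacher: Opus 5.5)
\textbf{Part (A) fails for $v>\tfrac32$.} Your skeleton matches the paper's: Bertrand prices for disclosers, opt-in cutoffs, uniform-price best responses with the disclosure set frozen, a rational-expectations fixed point, and the NDC outcome when disclosure is null. The problem is your argument for part (A) on the region $v>\tfrac32$. There $\min\{v-\tfrac12,1\}=1$, and for $1\le c<v-\tfrac12$ the participation condition $v-c\ge\max\{x,1-x\}$ holds on the nondegenerate interval $[1-v+c,v-c]$. So it does \emph{not} rule out disclosure. What empties the opt-in set is the comparison with the anonymous channel: at the NDC price $p=1$, the disclosure gain for $x\le\tfrac12$ is $2x-c\le 1-c\le 0$. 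That is a self-fulfilling expectation, and other symmetric equilibria can coexist with it. One is an interior equilibrium with $\hat x=1-v+c$, $p=2v-c-1$ and per-seller revenue $\tfrac14+(v-1)(1-v+c)$. Another, for $v\ge2$ and $c\le v-1$, is a full-disclosure equilibrium with revenue $\tfrac14$. Part (A) on this region is therefore a selection claim: you must show that outside $\mathcal P_3$ every alternative equilibrium either fails to exist or earns less than the NDC revenue $\tfrac12$. Note also that $\mathcal P_3$ is an interior-disclosure branch, not a full-disclosure one. Full disclosure earns $\tfrac14$ and is never selected when $c\ge1$.

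\textbf{The cross-midpoint deviation cannot be deferred.} This deviation, which you left as the ``main obstacle,'' supplies the existence half of that comparison. It is also needed to confirm the interior equilibrium in (B), and it produces most of the boundary of $\mathcal P_3$; that boundary is not just the revenue comparison $\hat x\ge 1/[4(v-1)]$. The paper considers a deviation that serves the seller's own anonymous tail plus part of the rival's, so total anonymous demand is $q\in[\hat x,2\hat x]$. Its revenue is at most $q[p+4\hat x-1-2q]$. Comparing this with $p\hat x$ gives $p\le2$ when $\hat x\le\tfrac14$, and $p\le1+\sqrt{8\hat x(1-2\hat x)}$ when $\hat x>\tfrac14$. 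These conditions hold automatically for $v\le\tfrac32$ or $c<1$. For $v>\tfrac32$ and $c\ge1$ they become $\hat x\ge v-2$ and $17\hat x^2-2(v+3)\hat x+(v-1)^2\le0$, respectively. Intersecting them with the selection condition yields the $2v-3$ and $h_\pm(v)$ bounds that define $\mathcal P_3$.

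\textbf{Two smaller points.} First, $v-x^*=v-(1-v+c)=2v-c-1$ exactly, so no ``adjustment'' is needed. The price is simply $\max\{v/2,v-\hat x\}$, from maximizing $q(v-q)$ over $q\le\hat x$. Second, rather than fixing $x^*=1-v+c$ and checking consistency, show it is the \emph{only} fixed point of $\hat x=\max\{0,1-v+c,(1+c-p)/2\}$ with $p=\max\{v/2,v-\hat x\}$. The selection claims rely on that uniqueness.
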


Lemma \ref{app:lem_vdd_pp} shows that all consumers enter under VDD, while disclosure depends on product valuation and privacy cost. When $v$ is very low or $c$ very high, all consumers opt out because personalized pricing cannot offset the privacy cost. For intermediate values, central consumers opt in while those near the endpoints opt out. Because central consumers are similarly matched with both sellers, disclosure intensifies personalized-price competition and lowers prices enough to offset the privacy cost; near the endpoints, the better-matched seller can extract more surplus, making disclosure less attractive. As $v$ increases or $c$ decreases, the opt-in region expands toward the endpoints until all consumers opt in. Consistently, personalized prices fall toward the market center as competition intensifies. Thus, VDD preserves full participation while inducing disclosure only when its competitive benefit exceeds the privacy cost.

\begin{remark}
\label{app:rem_vdd_PP}
The ``PP" subgame admits the following alternative equilibria: {\rm (i)} If $(v,c)\in\{(v,c)|v\ge2,1\le c\le v-1\}$, there is a full-disclosure equilibrium with $\Omega^{PPV}=[0,1]$ and $p_{1,\varnothing}^{PPV}=p_{2,\varnothing}^{PPV}\ge1+c$. {\rm (ii)} If $(v,c)\in\mathcal P_3$, there is an empty-disclosure equilibrium with $\Omega^{PPV}=\varnothing$ and $p_{1,\varnothing}^{PPV}=p_{2,\varnothing}^{PPV}=1$. {\rm (iii)} If $(v,c)\in\{(v,c)|\tfrac32<v<\tfrac{3+\sqrt2}{2},c\ge\max\{1,2v-3\},v-1<c<v-1+\tfrac1{4(v-1)}\}$, there is a partial-disclosure equilibrium with $\Omega^{PPV}=[1-v+c,v-c]$ and $p_{1,\varnothing}^{PPV}=p_{2,\varnothing}^{PPV}=2v-c-1$.

Anticipating any of these equilibria, the platform does not provide the data service. Relative to the selected equilibria, provision changes only on $\mathcal P_3$ and when $v>\tfrac32$ and $c=1$. On $\mathcal P_3$, the alternative equilibrium weakly lowers platform profit and seller sales revenues, leaves seller profits unchanged, and strictly increases consumer surplus. All alternative equilibria yield weakly lower seller sales revenues in the PP subgame. Since data-service fees are sunk, the equilibrium in Lemma \ref{app:lem_vdd_pp} weakly maximizes both sellers' continuation profits and is therefore seller-preferred among symmetric pure-strategy rational-expectations equilibria. The threat effect remains zero under VDD in all alternative equilibria. 
The two equilibria on $\mathcal P_3$ imply
$\Pi_p^{VDD}\in[\Pi_p^{NDC},\Pi_p^{NDC}+2G^{VDD}]$,
$CS^{VDD}\in[CS^{NDC}-2G^{VDD}-c(2v-2c-1),CS^{NDC}]$, and
$SW^{VDD}\in[SW^{NDC}-c(2v-2c-1),SW^{NDC}]$,
with seller net profits unchanged at NDC.
\end{remark}

{\bf 3. The ``PN" and ``NP" Subgames.} In both subgames, only one seller adopts data service.  

\begin{lemma}\label{app:lem_vdd_pn}
With voluntary data disclosure regulations, if only one seller purchases the data service, all consumers enter the platform and opt-out of data disclosure.
\end{lemma}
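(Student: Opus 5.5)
The plan is to work with the PN subgame and get PN and NP by symmetry. Seller 1 is informed and seller 2 uninformed. I will show that in every pure-strategy rational-expectations continuation, no consumer's opt-in set can contain a positive-utility disclosure choice. The argument runs backward from Stage 5 to Stage 4.

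\textbf{Step 1: seller 1's personalized price against a discloser.} Fix a realized uniform price $p_{2,\varnothing}$ and a consumer $x$ whose information is usable. At Stage 5, seller 1 observes $p_{2,\varnothing}$. The privacy cost $c$ is sunk, so the consumer's purchase-stage outside options are buying from seller 2 at $p_{2,\varnothing}$ or not buying.

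Seller 1's best response is therefore to extract up to the consumer's best alternative surplus $a(x):=\max\{v-(1-x)-p_{2,\varnothing},0\}$. It charges $p_1(x)=\max\{v-x-a(x),0\}$ whenever this is profitable. The tie rule sends an indifferent consumer to the informed seller, so the supremum is attained. If seller 1 cannot profitably serve $x$ even at price zero, the consumer buys from seller 2 or abstains.

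In every case, the discloser's purchase-stage gross surplus equals $\max\{v-x-p_{1,\varnothing}^{\ast},\,a(x),\,0\}$ at most. Here $p_{1,\varnothing}^{\ast}$ is whatever seller 1 charges anonymous consumers. Seller 1 will never price a discloser below what keeps it indifferent to its own anonymous price. More simply, I will argue directly that the discloser's surplus is exactly $\max\{a(x),0\}$, except possibly where seller 1 declines to serve. In that case the surplus is again $a(x)$, since seller 2's uniform offer is all that remains.

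\textbf{Step 2: compare disclosure with staying anonymous.} An anonymous consumer faces $p_{1,\varnothing}$ and $p_{2,\varnothing}$ and obtains surplus $\max\{v-x-p_{1,\varnothing},\,v-(1-x)-p_{2,\varnothing},\,0\}\ge a(x)$. Disclosing yields at most $a(x)-c<a(x)$, because $c>0$.

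This comparison is made under the anticipated equilibrium prices. Consumers cannot observe realized uniform prices before choosing disclosure, but rational expectations make anticipated and realized prices coincide. So every consumer strictly prefers to opt out. The indifference tie-breaking rule for opting in therefore never binds.

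Entry then follows immediately. Anonymous entry gives weakly nonnegative utility, since no privacy cost is incurred, and consumers enter when indifferent. So all consumers enter.

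\textbf{Step 3: existence and consistency.} With an empty opt-in set, the informed seller has no usable data. The uniform-price stage then coincides with NDC, and Proposition~\ref{prop:without_data} supplies the equilibrium prices.

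I also need to check that no hidden uniform-price deviation upsets the argument. Entry and disclosure are fixed at Stage 4, so a deviation leaves the disclosure set empty. Seller 1's personalized pricing then has no bite, and the NDC best-response analysis applies unchanged. This confirms the outcome is an RE equilibrium, and Step 2 shows it is the only one.

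\textbf{Main obstacle.} The hard step is Step 1 in degenerate cases, specifically where seller 1's personalized price hits the zero floor. I must make sure the discloser never receives strictly more than $a(x)$. The failure mode is seller 1 offering a price below $v-x-a(x)$ in order to beat seller 2.

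My first approach is to show that any such undercut is dominated: seller 1 can raise the price to $v-x-a(x)$ and still win the consumer under the tie rule. The zero floor binds only when $v-x<a(x)$. In that case seller 2 serves the consumer and her surplus is exactly $a(x)$. Either way, disclosure surplus is $a(x)-c$, which is strictly below her anonymous surplus.
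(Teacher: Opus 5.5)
Your proof is correct and follows essentially the same route as the paper: the informed seller extracts the discloser's surplus down to her outside option $a(x)=\max\{v-p_{2,\varnothing}-(1-x),0\}$, so opting in yields $a(x)-c$, strictly below the anonymous utility $\max\{v-p_{1,\varnothing}-x,\,a(x)\}$, and costless anonymous entry with the option not to buy gives full participation. The zero-floor check and the NDC consistency step are harmless additions; the paper leaves the latter to the remark after the lemma. The aside about $p_{1,\varnothing}^{\ast}$ in Step 1 is unnecessary and should be dropped, since a discloser faces seller 1's personalized price rather than its uniform price.
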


Lemma \ref{app:lem_vdd_pn} implies that if only one seller purchases the data service, the data market has no data and thus the equilibrium prices are identical to those under NDC.

\section{Supplementary Analysis of Platform Data Value}\label{app:sec_general_model}
\subsection{Credibility Microfoundation}\label{app:sec_commitment_microfoundation}
Section~\ref{subsec:value} treats commitment credibility $\eta^g$ as a reduced-form primitive. We now endogenize this credibility from the platform's continuation payoff and the cost of reneging on a subsidy-backed access commitment. 

We focus on the subsidy region, $R_I^g<R_N^g$. Suppose that providing the data service entails a per-seller operating cost $a\ge0$ and a fixed implementation cost $K\ge0$, incurred upon activation. Exclusive access costs $a+K$, bilateral access costs $2a+K$, and no access incurs neither cost. If the announced exclusive-access continuation is implemented after seller 1's rejection, seller 2 obtains $(1-\gamma)R_I^g-F_{2\mid RE}^g$, while the platform obtains $\gamma(R_I^g+R_U^g)+F_{2\mid RE}^g-a-K$. If the platform instead revises the continuation to no data access, seller 2 obtains $(1-\gamma)R_N^g$, while the platform earns $2\gamma R_N^g$ before any reneging penalty. At the committed fee $F_{2\mid RE}^g=(1-\gamma)(R_I^g-R_N^g)$, seller 2 is indifferent between accepting exclusive access and receiving its no-data payoff. The platform's continuation-profit loss from honoring the subsidy is therefore
$\Delta L^g\equiv R_N^g-R_I^g+\gamma T^g+a+K$.
Let $\kappa\ge0$ denote the contractual, legal, or reputational loss from reneging on the announced subsidy. Its realization occurs after seller 1's adoption decision and is privately observed by the platform, while seller 1 knows only its distribution. We assume that $\kappa$ has a continuous cumulative distribution function $\Phi_\kappa(\cdot)$ with density $\phi_\kappa(\cdot)$. The platform honors the subsidy whenever $\kappa\ge\Delta L^g$, thus the credibility is given by
$\eta^g=\Pr(\kappa\ge\Delta L^g)=1-\Phi_\kappa(\Delta L^g)$.
When $R_I^g\ge R_N^g$, no subsidy is required and the rejection-contingent continuation remains fully credible.

\begin{proposition}[Comparative statics of data value]
\label{app:prop_cred_micro}
In the subsidy region:
{\rm(A)} If $T^g>0$, the platform's data value decreases with $\gamma$, $a$ and $K$, and weakly increases under an upward shift in the distribution of the reneging penalty $\kappa$.
{\rm(B)} If $T^g<0$, data value weakly increases with $a$ if and only if
$\phi_\kappa(\Delta L^g)\ge \frac{2}{(1-\gamma)|T^g|}$, with $K$ if and only if
$\phi_\kappa(\Delta L^g)\ge \frac{1}{(1-\gamma)|T^g|}$, and with $\gamma$ if and only if
$\phi_\kappa(\Delta L^g)\le \frac{1+\eta^g}{(1-\gamma)|T^g|}$. An upward shift in the distribution of $\kappa$ weakly decreases data value.
\end{proposition}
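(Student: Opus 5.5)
The plan is to write the data value in closed form as a function of the primitives $(\gamma,a,K,\Phi_\kappa)$ and then differentiate, holding the revenue states $R_N^g,R_B^g,R_I^g,R_U^g$ (and hence $G^g,T^g$) fixed, since they do not depend on these parameters. In the subsidy region $R_I^g<R_N^g$, so Proposition~\ref{prop:limited_commitment} gives $\xi^g=\eta^g$. The endogenous credibility is $\eta^g=1-\Phi_\kappa(\Delta L^g)$ with $\Delta L^g=R_N^g-R_I^g+\gamma T^g+a+K$.

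With operating costs, the equilibrium path is bilateral access, which incurs $2a+K$. The fee schedule of Proposition~\ref{prop:limited_commitment}(A) still exhausts the sellers' participation constraints, because the costs fall on the platform and do not enter the sellers' payoffs. The first step is therefore to restate Proposition~\ref{prop:limited_commitment}(B) net of costs:
\[
V^g_{\data}=2G^g+\bigl[2-\Phi_\kappa(\Delta L^g)\bigr](1-\gamma)T^g-2a-K .
\]
I would check explicitly that the rejection-branch costs $a+K$ affect $V^g_{\data}$ only through $\eta^g$. Seller 1's off-path rejection payoff $(1-\gamma)[R_N^g-\eta^g T^g]$ depends on them only through credibility, and off-path costs never enter the platform's realized equilibrium profit.

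The second step is to differentiate, using $\partial\eta^g/\partial a=\partial\eta^g/\partial K=-\phi_\kappa(\Delta L^g)$ and $\partial\eta^g/\partial\gamma=-\phi_\kappa(\Delta L^g)\,T^g$. This yields
\[
\frac{\partial V^g_{\data}}{\partial a}=-2-(1-\gamma)T^g\phi_\kappa,\qquad
\frac{\partial V^g_{\data}}{\partial K}=-1-(1-\gamma)T^g\phi_\kappa,\qquad
\frac{\partial V^g_{\data}}{\partial \gamma}=-(1+\eta^g)T^g-(1-\gamma)(T^g)^2\phi_\kappa .
\]
When $T^g>0$, all three derivatives are strictly negative. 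This gives the monotone decrease in $\gamma$, $a$, and $K$ in part (A).

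When $T^g<0$, write $T^g=-|T^g|$. Each derivative is then weakly positive exactly when the stated density condition holds: $\phi_\kappa\ge 2/((1-\gamma)|T^g|)$ for $a$, $\phi_\kappa\ge1/((1-\gamma)|T^g|)$ for $K$, and, after dividing by $|T^g|>0$, $\phi_\kappa\le(1+\eta^g)/((1-\gamma)|T^g|)$ for $\gamma$. These are the local, pointwise conditions in part (B).

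For the shift in the reneging penalty, I interpret ``upward shift'' as first-order stochastic dominance, so that $\Phi_\kappa(z)$ weakly falls at every $z$. Since $\Delta L^g$ does not depend on the distribution of $\kappa$, $\eta^g$ weakly rises. Hence $V^g_{\data}$ moves weakly in the direction of the sign of $T^g$: upward when $T^g>0$ and downward when $T^g<0$. This matches Proposition~\ref{prop:limited_commitment}(C).

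The main obstacle is the cost-inclusive restatement of the data value in the first step, not the calculus. I expect to handle it by recomputing the backward induction of Lemma~\ref{lem:fees} with explicit costs. The key point is that seller 2's rejection-branch fee $F_{2\mid RE}^g=(1-\gamma)(R_I^g-R_N^g)$ is unchanged, so the only channel through which $a$, $K$, and $\gamma$ enter seller 1's fee is $\eta^g$. Differentiability also requires $\phi_\kappa$ to exist at $\Delta L^g$, which is assumed. At kinks where $\phi_\kappa$ is discontinuous, the comparisons should be read with one-sided derivatives.
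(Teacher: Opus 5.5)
Your proposal is correct and matches the paper's proof essentially step for step. The paper also writes the subsidy-region net data value as $2G^g+(1+\eta^g)(1-\gamma)T^g-2a-K$; your factor $2-\Phi_\kappa(\Delta L^g)$ is just $1+\eta^g$. It then differentiates through $\eta^g=1-\Phi_\kappa(\Delta L^g)$ and obtains the same three derivatives and density thresholds. Your explicit check that the rejection-branch costs $a+K$ affect the data value only through credibility is sound, and the paper leaves that step implicit.
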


Proposition~\ref{app:prop_cred_micro} endogenizes continuation credibility by linking it to the platform's incentive to honor the subsidy. When $T^g>0$, exclusive access disciplines the rejecting seller, so higher implementation costs or commission weaken credibility and threat-based fee extraction, while a larger reneging penalty strengthens them. Hence, escrow arrangements, third-party verification, and auditable access commitments can reinforce subsidy-backed threats when implementation costs are sufficiently small. When $T^g<0$, exclusive access instead improves the rejecting seller's outside option, so weaker credibility can benefit the platform. Higher $a$ or $K$ may then raise data value if the resulting fee-extraction gain exceeds their direct cost, while the effect of $\gamma$ depends on its direct and credibility effects. The thresholds in part~{\rm(B)} identify when these indirect effects dominate. This case is included for generality and does not arise in an active data-service equilibrium under the focal MDC and VDD regimes.

\subsection{Conditional Bound on Threat Value}
\label{sec:threat_elimination}
The baseline model studies how contractual credibility affects the extraction of an existing threat. We now examine when the threat could be eliminated. Consider a general consumer type space $\Theta$ endowed with an atomless probability measure $\mu$. For any regime $g$, let $\widehat{\Omega}^g\subseteq\Theta$ denote the set of consumers whose information can become usable when only one seller has data access, and let $q(\theta)\in[0,1]$ denote the quality of the corresponding signal. Define the quality-weighted usable unilateral information as $\beta_{\widehat{\Omega}^g,q}=\int_{\widehat{\Omega}^g}q(\theta)d\mu(\theta)$, denoted $\beta^g$ for simplicity.  Let $R_U^g(\widehat{\Omega}^g,q)$ denote the equilibrium sales revenue of the uninformed seller when only its rival can use these signals. Without usable unilateral information, the continuation coincides with the no-data benchmark and $R_U^g(\varnothing,0)=R_N^g$. We assume that unilateral information weakly disadvantages the uninformed seller, $R_U^g(\widehat{\Omega}^g,q)\le R_N^g$, and that its equilibrium revenue is Lipschitz continuous in the usable-signal profile: 
\[|R_U^g(\Omega,q)-R_U^g(\Omega',q')|\le L\int_\Theta|q(\theta)\mathbf{1}_{\Omega}(\theta)-q'(\theta)\mathbf{1}_{\Omega'}(\theta)|d\mu(\theta), \quad \mbox{for $L\ge0$.} \]

\begin{proposition}[Threat bound and elimination]
\label{prop:threat_discipline}
Under the preceding assumptions, the threat effect satisfies $0\le T^g\le L\beta^g$. Hence, $\beta^g=0$ is sufficient to eliminate the threat effect. If, in addition, $R_U^g(\widehat{\Omega}^g,q)<R_N^g$ whenever $\beta^g>0$, then $\beta^g=0$ is also necessary for threat elimination.
\end{proposition}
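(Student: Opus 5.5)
The argument uses only the definition $T^g=R_N^g-R_U^g$ from \eqref{eq:GT}, applied to the realized usable-signal profile $(\widehat{\Omega}^g,q)$, together with the three maintained assumptions: the no-information normalization $R_U^g(\varnothing,0)=R_N^g$, the weak-disadvantage condition, and the Lipschitz bound. The plan has three steps: lower bound, upper bound, then the sufficiency and necessity claims.

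\textbf{Lower bound.} The weak-disadvantage assumption $R_U^g(\widehat{\Omega}^g,q)\le R_N^g$ gives $T^g\ge 0$ immediately.

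\textbf{Upper bound.} First rewrite the threat as
\[
T^g=R_U^g(\varnothing,0)-R_U^g(\widehat{\Omega}^g,q).
\]
Then apply the Lipschitz condition with $(\Omega,q)=(\widehat{\Omega}^g,q)$ and $(\Omega',q')=(\varnothing,0)$. The integrand becomes $|q(\theta)\mathbf 1_{\widehat{\Omega}^g}(\theta)-0|=q(\theta)\mathbf 1_{\widehat{\Omega}^g}(\theta)$, because $q\ge0$. Integrating over $\Theta$ gives exactly $\beta^g$, so $T^g=|T^g|\le L\beta^g$.

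One technical point needs care here. The pair $(\varnothing,0)$ should be read as an admissible profile: either the zero quality function, or any $q'$ paired with the empty set. The indicator $\mathbf 1_{\varnothing}$ kills the term in both readings, so the bound holds either way. Measurability of $\widehat{\Omega}^g$ and $q$ is implicit in the definition of $\beta^g$.

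\textbf{Sufficiency.} If $\beta^g=0$, then $0\le T^g\le L\cdot 0$, so $T^g=0$. No further argument is needed.

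\textbf{Necessity.} Under the strict condition that $R_U^g<R_N^g$ whenever $\beta^g>0$, I argue by contrapositive. If $\beta^g>0$, then $R_U^g(\widehat{\Omega}^g,q)<R_N^g$, so $T^g>0$. Hence $T^g=0$ forces $\beta^g=0$.

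\textbf{Main obstacle.} There is no real analytic difficulty; everything is a one-line consequence of the assumptions. The only delicate point is the one already flagged: the no-data continuation must be evaluable within the Lipschitz class, which is what lets the no-information revenue serve as the comparison point. I would state that identification explicitly. I would also note that $\beta^g=0$ includes positive-measure sets carrying zero-quality signals, and such sets also yield $R_U^g=R_N^g$ by the Lipschitz bound. This links the result to Lemma~\ref{app:lem_vdd_pn}, where $\widehat{\Omega}^{\VDD}=\varnothing$.
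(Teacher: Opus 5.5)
Your proposal is correct and follows the paper's proof essentially line for line. Weak disadvantage gives $T^g\ge0$, the Lipschitz bound applied between $(\widehat{\Omega}^g,q)$ and the no-data profile $(\varnothing,0)$ gives $T^g\le L\beta^g$, and sufficiency and necessity follow directly (the paper phrases necessity as a contradiction rather than a contrapositive, and leaves implicit your clarification that $(\varnothing,0)$ is an admissible profile in the Lipschitz class).
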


Proposition \ref{prop:threat_discipline} shows that eliminating threat-based monetization depends not on disclosure volume per se, but on \emph{usable information under unilateral access}. The condition that eliminating the usable information also restores the no-data participation outcome is essential: zero realized records alone need not eliminate a participation-driven threat.

\section{Proofs of Main Results}\label{app:sec_proof_main}
\noindent\textbf{Proof of Lemma \ref{lem:fees}.}
Fix $g\in\{\MDC,\VDD\}$. If seller 1 accepts, seller 2 obtains $(1-\gamma)R_B^g-F_{2\mid AC}^g$ from accepting and $(1-\gamma)R_U^g$ from rejecting. Hence, the maximal fee inducing acceptance is
$F_{2\mid AC}^g=(1-\gamma)(R_B^g-R_U^g)=(1-\gamma)(G^g+T^g)$. If seller 1 rejects, seller 2 obtains $(1-\gamma)R_I^g-F_{2\mid RE}^g$ from accepting and $(1-\gamma)R_N^g$ from rejecting. Thus, $F_{2\mid RE}^g=(1-\gamma)(R_I^g-R_N^g)$. Anticipating these continuation decisions, seller 1 obtains $(1-\gamma)R_B^g-F_1^g$ from accepting and $(1-\gamma)R_U^g$ from rejecting. Therefore, the maximum first-stage fee that induces acceptance is $F_1^g=(1-\gamma)(R_B^g-R_U^g)=(1-\gamma)(G^g+T^g)$. Thus, both sellers adopt, and each earns $(1-\gamma)R_U^g$ on the equilibrium path. \qed

\vspace{0.3cm}
\noindent\textbf{Proof of Proposition \ref{prop:data_value_decomposition}.}
If the platform withholds the data service, its profit is $\Pi_{p,\mathrm{no}}^g=2\gamma R_N^g$. Under provision, Lemma \ref{lem:fees} gives total data-service fees $2(1-\gamma)(G^g+T^g)$, so $\Pi_p^g=2\gamma R_B^g+2(1-\gamma)(G^g+T^g)$. Hence,
$V_{\mathrm{data}}^g=\Pi_p^g-\Pi_{p,\mathrm{no}}^g=2\gamma G^g+2(1-\gamma)(G^g+T^g)=2G^g+2(1-\gamma)T^g$.
Therefore, the platform provides the data service if and only if $V_{\mathrm{data}}^g\ge0$, equivalently, $G^g+(1-\gamma)T^g\ge0$. \qed

\vspace{0.3cm}
\noindent\textbf{Proof of Proposition \ref{prop:limited_commitment}.}
After seller 1 accepts, seller 2's binding participation constraint gives $F_{2\mid AC}^g=(1-\gamma)(R_B^g-R_U^g)=(1-\gamma)(G^g+T^g)$. Following seller 1's rejection, seller 2 accepts exclusive access when $F_{2\mid RE}^g=(1-\gamma)(R_I^g-R_N^g)$. If $R_I^g\ge R_N^g$, this continuation requires no subsidy and is implemented with probability one; if $R_I^g<R_N^g$, it is implemented with probability $\eta^g$. Hence, seller 1's expected rejection payoff is $(1-\gamma)[\xi^g(\eta^g)R_U^g+(1-\xi^g(\eta^g))R_N^g]$. Equating this with its acceptance payoff $(1-\gamma)R_B^g-F_1^g$ yields
$F_1^g=(1-\gamma)[G^g+\xi^g(\eta^g)T^g]$,
proving part {\rm(A)}. On the equilibrium path, fee revenue is
$F_1^g+F_{2\mid AC}^g=2(1-\gamma)G^g+[1+\xi^g(\eta^g)](1-\gamma)T^g$, while commission revenue increases by $2\gamma G^g$. Therefore, $V_{data}^g(\eta^g)=2G^g+[1+\xi^g(\eta^g)](1-\gamma)T^g$, proving part {\rm(B)}. When $R_I^g<R_N^g$, $\xi^g(\eta^g)=\eta^g$, so $\partial V_{data}^g(\eta^g)/\partial\eta^g=(1-\gamma)T^g$. Thus, data value weakly increases with credibility when $T^g\ge0$ and decreases when $T^g<0$. \qed

\vspace{0.3cm}
\noindent\textbf{Proof of Proposition \ref{prop:without_data}.}\label{app:NDC_proof}
Since entering and subsequently not purchasing yields zero utility, all consumers enter. Given $p_1,p_2\ge0$, seller 1's demand is determined by the participation boundary $v-p_1$ and the indifferent consumer $(1-p_1+p_2)/2$; seller 2 is symmetric.

(1) If $p_1+p_2\ge2v-1$, the participation constraints bind before the two demands meet, so each seller locally solves $\max_{p_i\ge0}p_i(v-p_i)$ subject to $p_i+p_j\ge2v-1$. The symmetric candidate is $p_i=v/2$ when $v<1$ and the boundary price $p_i=v-\tfrac12$ when $v\ge1$.

(2) If $p_1+p_2\le2v-1$, the market is covered and seller 1 solves $\max_{p_1\ge0}p_1(1-p_1+p_2)/2$ subject to $p_1+p_2\le2v-1$, with the symmetric problem for seller 2. The symmetric candidate is the boundary price $p_i=v-\tfrac12$ when $v<\tfrac32$ and the unconstrained price $p_i=1$ when $v\ge\tfrac32$. Comparing the feasible candidates across the two regions gives
$p_{i,\varnothing}^{\NDC}=v/2$ for $v<1$, $p_{i,\varnothing}^{\NDC}=v-\tfrac12$ for $1\le v<\tfrac32$, and $p_{i,\varnothing}^{\NDC}=1$ for $v\ge\tfrac32$.

Let $p_{\varnothing}^{\NDC}$ denote this common equilibrium price and $d^{\NDC}=\min\{v-p_{\varnothing}^{\NDC},\tfrac12\}$ each seller's demand. Then $r_i^{\NDC}=p_{\varnothing}^{\NDC}d^{\NDC}$, $\pi_i^{\NDC}=(1-\gamma)r_i^{\NDC}$, $\Pi_p^{\NDC}=2\gamma r_i^{\NDC}$, $CS^{\NDC}=2\int_0^{d^{\NDC}}(v-p_{\varnothing}^{\NDC}-x)dx$, and $SW^{\NDC}=CS^{\NDC}+2r_i^{\NDC}$. Moreover, $D^{\NDC}=2d^{\NDC}=\min\{v,1\}$. 
\qed

\vspace{0.3cm}
\noindent\textbf{Proof of Proposition \ref{prop:equilibrium_mdc}.}\label{app:subsec_proof_mdc}
The MDC subgame equilibria in Online Appendix \ref{app:sec_MDC_subgame} give $R_B^{\MDC}=r^{PPM}$, $R_I^{\MDC}=r_1^{PNM}$, $R_U^{\MDC}=r_2^{PNM}$, and $R_N^{\MDC}=r^{\NDC}$. Substituting the corresponding subgame revenues yields Table \ref{tab:data_fee_mdc}. 

\begin{table}[!ht]
\vspace{-0.25in}
    \small
    \centering
    \caption{$F_1^{\MDC}$ and Its Components in Different Market Conditions}
    \label{tab:data_fee_mdc}
    \renewcommand{\arraystretch}{1.05}
    \resizebox{0.8\textwidth}{!}{
    \begin{tabular}{l c c c c}
        \toprule
        \textbf{Conditions}
        & $v-c\le \tfrac{1}{2}$
        & $\tfrac{1}{2}<v-c<\tfrac{3}{4}$
        & $\tfrac{3}{4}\le v-c\le1$
        & $v-c>1$ \\
        \midrule

        \makecell[l]{$G^{\MDC}$}
        & $-r^{\NDC}$
        & $\tfrac{(2v-2c-1)^2}{4}-r^{\NDC}$
        & $\tfrac{(2v-2c-1)^2}{4}-r^{\NDC}$
        & $\tfrac{1}{4}-r^{\NDC}$ \\

        \makecell[l]{$T^{\MDC}$}
        & $r^{\NDC}$
        & $r^{\NDC}$
        & $r^{\NDC}-\tfrac{1}{8}$
        & $r^{\NDC}-\tfrac{1}{8}$ \\

        \midrule

        $F_1^{\MDC}=(1-\gamma)(G^{\MDC}+T^{\MDC})$
        & $0$
        & $(1-\gamma)\tfrac{(2v-2c-1)^2}{4}$
        & $(1-\gamma)\left[\tfrac{(2v-2c-1)^2}{4}
          -\tfrac{1}{8}\right]$
        & $\tfrac{1-\gamma}{8}$ \\
        \bottomrule
    \end{tabular}
    }

    \vspace{1pt}
    \parbox{\textwidth}{%
        \footnotesize
        \emph{Notes:} Seller revenue under NDC is
        $r^{\NDC}=\tfrac{v^2}{4}$ if $v\le1$,
        $r^{\NDC}=\tfrac{2v-1}{4}$ if $1<v<\tfrac32$,
        and $r^{\NDC}=\tfrac12$ if $v\ge\tfrac32$.
    }
    \vspace{-0.45in}
\end{table}

Hence,
$G^{\MDC}=r^{PPM}-r^{\NDC}<0$ and $T^{\MDC}=r^{\NDC}-r_2^{PNM}>0$. By Lemma \ref{lem:fees}, bilateral adoption is implemented with $F_1^{\MDC}=F_{2\mid AC}^{\MDC}=(1-\gamma)(r^{PPM}-r_2^{PNM})=(1-\gamma)(G^{\MDC}+T^{\MDC})$
and $F_{2\mid RE}^{\MDC}=(1-\gamma)(r_1^{PNM}-r^{\NDC})$, proving part {\rm(A)}. By Proposition \ref{prop:data_value_decomposition}, the platform's incremental profit from provision is $2[G^{\MDC}+(1-\gamma)T^{\MDC}]$. Since $T^{\MDC}>0$, provision occurs if and only if $\gamma\le1+G^{\MDC}/T^{\MDC}$, proving part {\rm(B)}. Conditional on provision, both sellers adopt and the ``PP'' subgame applies. Lemma \ref{app:lem_mdc_PP} then gives the equilibrium entry set and the personalized prices.\qed 

\vspace{0.3cm}
\noindent\textbf{Proof of Proposition \ref{prop:profit_campare_M_N}.}
Based on Proposition \ref{prop:equilibrium_mdc}, when the platform does not monetize data, the equilibrium coincides with NDC. When it monetizes data, the resulting market outcomes are reported in Table \ref{tab:mandatory_equilibrium}.

\begin{table}[!ht]
    \small
    \centering
    \renewcommand{\arraystretch}{1.2}
    \caption{Equilibrium Outcomes when the Platform Provides Data Service under MDC}
    \label{tab:mandatory_equilibrium}
    \resizebox{\textwidth}{!}{
    \begin{tabular}{l c c c c}
        \toprule
        \textbf{Conditions} & $\pi_1^{\MDC}=\pi_2^{\MDC}$ & $\Pi_p^{\MDC}$ & $CS^{\MDC}$ & $SW^{\MDC}$ \\
        \midrule
        $\tfrac{1}{2} < v \le \tfrac{3}{4}, c < v - \tfrac{1}{2}, \gamma \le \tfrac{(2v-2c-1)^2}{v^2}$ 
        & $0$ 
        & $\tfrac{(2v-2c-1)^2}{2}$ 
        & $\tfrac{(2v-2c-1)^2}{4}$ 
        & $\tfrac{3(2v-2c-1)^2}{4}$ \\
        
        $\tfrac{2+\sqrt2}{4} < v \le 1, c \le v - \tfrac{2+\sqrt{2}}{4}, \gamma \le \tfrac{2(2v-2c-1)^2 - 1}{2v^2 - 1}$
        & $\tfrac{1-\gamma}{8}$ 
        & $\tfrac{(2v-2c-1)^2}{2} - \tfrac{1-\gamma}{4}$ 
        & $\tfrac{(2v-2c-1)^2}{4}$ 
        & $\tfrac{3(2v-2c-1)^2}{4}$ \\

        $\tfrac{3}{4} < v \le 1, v - \tfrac{3}{4} < c < v - \tfrac{1}{2}, \gamma \le \tfrac{(2v-2c-1)^2}{v^2}$
        & $0$ 
        & $\tfrac{(2v-2c-1)^2}{2}$ 
        & $\tfrac{(2v-2c-1)^2}{4}$ 
        & $\tfrac{3(2v-2c-1)^2}{4}$ \\

        $1<v<\tfrac{3}{2}, 0 <c <v-1, \gamma \le \tfrac{1}{4v-3}$
        & $\tfrac{1-\gamma}{8}$ 
        & $\tfrac{1+\gamma}{4}$ 
        & $v-c-\tfrac{3}{4}$ 
        & $v-c-\tfrac{1}{4}$ \\

        $1<v<\tfrac{3}{2}, v - \tfrac{3}{4} < c < v - \tfrac{1}{2}, \gamma \le \tfrac{(2v-2c-1)^2}{2v-1}$
        & $0$ 
        & $\tfrac{(2v-2c-1)^2}{2}$ 
        & $\tfrac{(2v-2c-1)^2}{4}$ 
        & $\tfrac{3(2v-2c-1)^2}{4}$ \\

        $1<v<\tfrac{3}{2}, v - 1\le c \le v - \tfrac{2+\sqrt{2}}{4}, \gamma \le \tfrac{2(2v-2c-1)^2 - 1}{4v-3}$
        & $\tfrac{1-\gamma}{8}$ 
        & $\tfrac{(2v-2c-1)^2}{2} - \tfrac{1-\gamma}{4}$ 
        & $\tfrac{(2v-2c-1)^2}{4}$ 
        & $\tfrac{3(2v-2c-1)^2}{4}$ \\

        $v\ge \tfrac{3}{2}, 0 <c <v-1, \gamma\le \tfrac{1}{3}$
        & $\tfrac{1-\gamma}{8}$ 
        & $\tfrac{1+\gamma}{4}$ 
        & $v-c-\tfrac{3}{4}$ 
        & $v-c-\tfrac{1}{4}$ \\

        $v\ge \tfrac{3}{2}, v - 1 \le c \le v - \tfrac{2+\sqrt{2}}{4}, \gamma \le \tfrac{2(2v-2c-1)^2 - 1}{3}$
        & $\tfrac{1-\gamma}{8}$ 
        & $\tfrac{(2v-2c-1)^2}{2} - \tfrac{1-\gamma}{4}$ 
        & $\tfrac{(2v-2c-1)^2}{4}$ 
        & $\tfrac{3(2v-2c-1)^2}{4}$ \\

        $v\ge \tfrac{3}{2}, v-\tfrac{3}{4} < c < v-\tfrac{1}{2}, \gamma \le \tfrac{(2v-2c-1)^2}{2}$
        & $0$ 
        & $\tfrac{(2v-2c-1)^2}{2}$ 
        & $\tfrac{(2v-2c-1)^2}{4}$ 
        & $\tfrac{3(2v-2c-1)^2}{4}$ \\
        \bottomrule
    \end{tabular}
    }
    \vspace{-0.15in}
\end{table}

\noindent\textbullet{\bf Market coverage.} Under NDC, $D^{\NDC}=\min\{v,1\}$. Conditional on data monetization under MDC, Lemma \ref{app:lem_mdc_PP} gives $D^{\MDC}=[\min\{2(v-c)-1,1\}]_+$. If $v<1$, then $D^{\MDC}<v=D^{\NDC}$. If $v\ge1$, then $D^{\NDC}=1$ and $D^{\MDC}\le1$, with equality if and only if $v-c\ge1$. Hence, $D^{\MDC}\le D^{\NDC}$, with equality if and only if $v-c\ge1$.

\noindent\textbullet{\bf Profits of the sellers and the platform.} For seller $i\in\{1,2\}$, under MDC with data monetization, we have $\pi_i^{\NDC}-\pi_i^{\MDC}=(1-\gamma)(r^{\NDC}-r_2^{PNM})=(1-\gamma)T^{\MDC}\ge 0$. Hence, $\pi_i^{\NDC}>\pi_i^{\MDC}$.
The platform monetizes data only if $\Pi_p^{\MDC}\ge\Pi_p^{\NDC}$. Therefore, $\Pi_p^{\MDC}\ge\Pi_p^{\NDC}$ holds.

\noindent\textbullet{\bf Consumer surplus and social welfare.} 
If $v\le1$, $CS^{\MDC}-CS^{\NDC}=[(2v-2c-1)^2-v^2]/4<0$ and $SW^{\MDC}-SW^{\NDC}=3[(2v-2c-1)^2-v^2]/4<0$. If $1<v<\tfrac32$ and $c<v-1$, then $CS^{\MDC}-CS^{\NDC}=v-c-1>0$ and $SW^{\MDC}-SW^{\NDC}=-c<0$; if $v-1\le c<v-\tfrac12$, then $2v-2c-1\le1$, so $CS^{\MDC}\le CS^{\NDC}$ and $SW^{\MDC}-SW^{\NDC}\le1-v<0$. Finally, if $v\ge\tfrac32$ and $c<v-1$, then $CS^{\MDC}-CS^{\NDC}=\tfrac12-c$, while $SW^{\MDC}-SW^{\NDC}=-c<0$; if $v-1\le c<v-\tfrac12$, then $CS^{\MDC}-CS^{\NDC}\le\tfrac32-v\le0$ and $SW^{\MDC}-SW^{\NDC}\le1-v<0$.
Therefore, $CS^{\MDC}>CS^{\NDC}$ exactly when $1<v<\tfrac32$ and $v-c>1$, or $v\ge\tfrac32$ and $c<\tfrac12$, while $SW^{\MDC}<SW^{\NDC}$ throughout the monetization regions. 
\qed

\vspace{0.3cm}
\noindent\textbf{Proof of Proposition \ref{prop:equilibrium_vdd}.}
\label{app:subsec_equilibrium_vdd_proof}
Online Appendix \ref{app:sec_VDD_subgame} gives $R_B^{\VDD}=r^{PPV}$ and $R_I^{\VDD}=R_U^{\VDD}=R_N^{\VDD}=r^{\NDC}$ because unilateral access induces no disclosure. Hence, $T^{\VDD}=0$. By Lemma \ref{lem:fees}, $F_1^{\VDD}=F_{2\mid AC}^{\VDD}=(1-\gamma)(r^{PPV}-r^{\NDC})=(1-\gamma)G^{\VDD}$ and $F_{2\mid RE}^{\VDD}=0$, proving part {\rm(A)}. The platform's incremental profit is therefore $2G^{\VDD}$, so it provides the service if and only if $G^{\VDD}\ge0$ and the selected opt-in set is nonempty.

Define $\mathcal P_1=\{(v,c):\tfrac12<v\le1,\ c\le v-\tfrac12\}$ and $\mathcal P_2=\{(v,c):v>1,\ c=\min\{v-\tfrac12,1\}\}$. Let $h_\pm(v)=[v+3\pm\sqrt{-16v^2+40v-8}]/17$, $\bar v_1=(3+\sqrt2)/2$, and $\bar v_2=(5+\sqrt{17})/4$, and define
\[
\begin{aligned}
\mathcal P_3={}&\left\{(v,c):\tfrac32<v\le\bar v_1,\ 
v-1+\tfrac1{4(v-1)}\le c\le v-1+h_+(v)\right\}\\
&\cup\left\{(v,c):\bar v_1<v\le\tfrac94,\ 
2v-3\le c\le v-1+h_+(v)\right\}\\
&\cup\left\{(v,c):\tfrac94<v\le\bar v_2,\ 
v-1+h_-(v)\le c\le v-1+h_+(v)\right\}.
\end{aligned}
\]

(1) For $\tfrac12<v\le1$ and $0<c\le v-\tfrac12$, $G^{\VDD}=(v-c-\tfrac12)^2\ge0$ when $c\ge\tfrac{3v}{2}-1$, and $G^{\VDD}=(5v-4c-3)(1-v)/4\ge0$ otherwise. The opt-in set is nonempty, including $\{1/2\}$ at $c=v-\tfrac12$, giving $\mathcal P_1$.

(2) For $1<v\le\tfrac32$, $G^{\VDD}=(1-v)/2<0$ when $c\le v-1$ and $G^{\VDD}=(v-1)(1-2v+2c)/2<0$ when $v-1<c<v-\tfrac12$. At $c=v-\tfrac12$, $G^{\VDD}=0$ and $\Omega^{PPV}=\{1/2\}$, giving the corresponding part of $\mathcal P_2$.

(3) For $v>\tfrac32$ and $c<1$, Lemma \ref{app:lem_vdd_pp} gives $G^{\VDD}=-1/4<0$ under full disclosure. If $v-1<c<1$, then $G^{\VDD}=-1/4+(v-1)(1-v+c)<-(v-\tfrac32)^2<0$. Hence, provision does not occur for $c<1$. On $\mathcal P_3$, instead, $\Omega^{PPV}=[1-v+c,v-c]$ and $r^{PPV}=1/4+(v-1)(1-v+c)$; the defining bounds imply $c\ge v-1+1/[4(v-1)]$, so $G^{\VDD}=-1/4+(v-1)(1-v+c)\ge0$. Finally, when $c\ge1$ and $(v,c)\notin\mathcal P_3$, the selected PP equilibrium has $G^{\VDD}=0$, with $\Omega^{PPV}=\{1/2\}$ at $c=1$ and $\Omega^{PPV}=\varnothing$ for $c>1$. Thus, provision occurs exactly on $\mathcal P_1\cup\mathcal P_2\cup\mathcal P_3$, proving part {\rm(B)}.

Conditional on provision, Lemma \ref{app:lem_vdd_pp} gives, for $(v,c)\in\mathcal P_1\cup\mathcal P_3$, $p_{i,\varnothing}^{\VDD}=\max\{v/2,2v-c-1\}$ and $\Omega^{\VDD}=[\max\{0,1-v+c\},\min\{1,v-c\}]$; for $(v,c)\in\mathcal P_2$, $p_{i,\varnothing}^{\VDD}=p_{i,\varnothing}^{\NDC}$ and $\Omega^{\VDD}=\{1/2\}$. Opt-in consumers face $p_1^{\VDD}(x)=\max\{1-2x,0\}$ and $p_2^{\VDD}(x)=\max\{2x-1,0\}$, while all others opt out.

It remains to compare personalized and uniform prices. For $(v,c)\in\mathcal P_1\cup\mathcal P_3$, the largest personalized price charged by seller 1 to an opt-in consumer is $2v-2c-1$, while $p_{1,\varnothing}^{\VDD}\ge2v-c-1>2v-2c-1$ because $c>0$; the argument for seller 2 is symmetric. For $(v,c)\in\mathcal P_2$, the only opt-in consumer is at $x=1/2$, where $p_1^{\VDD}(1/2)=p_2^{\VDD}(1/2)=0<p_{i,\varnothing}^{\NDC}$. Hence, $p_i^{\VDD}(x)<p_{i,\varnothing}^{\VDD}$ for every $x\in\Omega^{\VDD}$ and $i=1,2$.
\qed

\vspace{0.3cm}
\noindent\textbf{Proof of Proposition \ref{prop:price_coverage_cmp}.}
Based on Proposition \ref{prop:equilibrium_vdd}, when the platform does not monetize data, the equilibrium coincides with NDC. When it monetizes data, the resulting market outcomes are reported in Table \ref{tab:voluntary_equilibrium}.
\begin{table}[htbp]
\vspace{-0.2in}
\small
\centering
\renewcommand{\arraystretch}{1.15}
\caption{Equilibrium Outcomes when the Platform Provides Data Service under VDD}
\label{tab:voluntary_equilibrium}
\resizebox{\textwidth}{!}{
\begin{tabular}{l c c c c}
\toprule
\textbf{Condition}
& $\pi_1^{VDD}=\pi_2^{VDD}$
& $\Pi_p^{VDD}$
& $CS^{VDD}$
& $SW^{VDD}$ \\
\midrule

$(v,c)\in\mathcal P_1, c\ge \tfrac{3v}{2}-1$
& $(1-\gamma)\tfrac{v^2}{4}$
& $\tfrac{\gamma v^2+(2v-2c-1)^2}{2}$
& $\tfrac{v^2+(2v-2c-1)^2}{4}$
& $\tfrac{3v^2+3(2v-2c-1)^2}{4}$ \\

$(v,c)\in\mathcal P_1, c<\tfrac{3v}{2}-1$
& $(1-\gamma)\tfrac{v^2}{4}$
& $\tfrac{\gamma v^2+(1-v)(5v-4c-3)}{2}$
& $\tfrac{4(1-v+c)^2+(2v-2c-1)^2}{4}$
& $\tfrac{4v^2+12c^2-(2v-1+2c)^2}{4}$ \\

$(v,c)\in\mathcal P_3$
& $(1-\gamma)\tfrac{1}{2}$
& $\gamma-\tfrac12+2(v-1)(1-v+c)$
& $\tfrac{4(1-v+c)^2+(2v-2c-1)^2}{4}$
& $\tfrac{4v^2+12c^2-(2v-1+2c)^2}{4}$ \\

$(v,c)\in\mathcal P_2$
& $(1-\gamma)r^{NDC}$
& $2\gamma r^{NDC}$
& $CS^{NDC}$
& $SW^{NDC}$ \\

\bottomrule
\end{tabular}
}
\vspace{-0.2in}
\end{table}

{\bf 1. Impact on prices.} 
\emph{\textbullet\ VDD vs. MDC.} By Propositions \ref{prop:equilibrium_mdc} and \ref{prop:equilibrium_vdd}, on $\mathcal P_1\cup\mathcal P_3$, $\Omega^{\VDD}=\Omega^{\MDC}=[\max{0,1-v+c},\min{1,v-c}]$, and opt-in consumers face the same personalized prices $p_1(x)=\max{1-2x,0}$ and $p_2(x)=\max{2x-1,0}$. On $\mathcal P_2$ with $v\le\tfrac32$, $c=v-\tfrac12$ and both entry sets equal ${1/2}$. If $v>\tfrac32$, then $c=1$, $\Omega^{\VDD}={1/2}$, and $\Omega^{\MDC}=[\max{0,2-v},\min{1,v-1}]$. 

\emph{\noindent\textbullet\ VDD vs. NDC.} consider seller 1; seller 2 follows symmetrically. On $\mathcal P_1$, $p_{1,\varnothing}^{\NDC}=v/2$ and $p_{1,\varnothing}^{\VDD}=\max{v/2,2v-c-1}$, so the inequality is strict if and only if $c<3v/2-1$, which requires $v>2/3$. On $\mathcal P_2$, the two uniform prices coincide. On $\mathcal P_3$, $p_{1,\varnothing}^{\VDD}=2v-c-1>1=p_{1,\varnothing}^{\NDC}$. Hence, the stated uniform-price comparison follows. For an opt-in consumer with $x\le1/2$, $p_1^{\VDD}(x)=1-2x$. On $\mathcal P_1$, this exceeds $p_{1,\varnothing}^{\NDC}=v/2$ if and only if $x<(2-v)/4$. Since the left disclosure boundary is $1-v+c$, such consumers exist if and only if $1-v+c<(2-v)/4$, or $c<3v/4-1/2$, which requires $v>2/3$. Thus, the relevant interval is $[1-(v-c),(2-v)/4)$. On $\mathcal P_2$, the only opt-in consumer is at $x=1/2$ and pays zero, while on $\mathcal P_3$, $p_1^{\VDD}(x)\le1=p_{1,\varnothing}^{\NDC}$. Symmetry gives seller 2's interval $((2+v)/4,v-c]$.

{\bf 2. Impact on market coverages.}
The comparisons of the market coverage across alternative privacy regulations are summarized in Table \ref{tab:market_coverage}. It follows that when the platform monetizes data under both MDC and VDD, $D^{\VDD}=\min\{D^{\MDC}+D^{\NDC},1\}$ is satisfied.\qed
\begin{table}[!ht]
\vspace{-0.2in}
\small
\centering
\renewcommand{\arraystretch}{1.2}
\setlength{\tabcolsep}{6pt}
\caption{Market Coverage Across Alternative Privacy Regulations}
\label{tab:market_coverage}
\begin{tabular}{ll}
\toprule
\textbf{Regime and condition}
& \textbf{Market coverage} \\
\midrule

NDC
& $D^{NDC}=\min\{v,1\}$ \\

MDC, service provided
& $D^{MDC}=[\min\{2v-2c-1,1\}]_+$ \\

MDC, service withheld
& $D^{MDC}=\min\{v,1\}$ \\

VDD, service provided
& $D^{VDD}=\min\{1,v+[2v-2c-1]_+\}$ \\

\bottomrule
\vspace{-0.4in}
\end{tabular}
\end{table}

\vspace{0.3cm}
\noindent\textbf{Proof of Proposition \ref{prop:profit_campare_vdd}.}\label{app:proof_cmp_profit}
Conditional on provision under VDD, Lemma \ref{lem:fees} gives $\pi_i^{\VDD}=(1-\gamma)r^{\NDC}=\pi_i^{\NDC}$. Under MDC, if the service is withheld, $\pi_i^{\MDC}=\pi_i^{\NDC}$; if it is provided, $\pi_i^{\MDC}=(1-\gamma)r_2^{PNM}\le(1-\gamma)r^{\NDC}$ because $T^{\MDC}=r^{\NDC}-r_2^{PNM}>0$. Hence, $\pi_i^{\VDD}=\pi_i^{\NDC}\ge\pi_i^{\MDC}$.

Since VDD provision requires $G^{\VDD}\ge0$, $\Pi_p^{\VDD}=\Pi_p^{\NDC}+2G^{\VDD}\ge\Pi_p^{\NDC}$. Under MDC, $\Pi_p^{\MDC}=\Pi_p^{\NDC}+2\max\{G^{\MDC}+(1-\gamma)T^{\MDC},0\}$. Because $G^{\VDD}\ge0$, $\Pi_p^{\VDD}\ge\Pi_p^{\MDC}$ if and only if
$G^{\VDD}\ge G^{\MDC}+(1-\gamma)T^{\MDC}$.
Using $G^{\VDD}-G^{\MDC}=r^{PPV}-r^{PPM}$ and $T^{\MDC}>0$, this is equivalent to
$\gamma\ge1-\frac{r^{PPV}-r^{PPM}}{T^{\MDC}}$.
Since $\gamma\ge0$, the condition can be written as $\gamma\ge\gamma^{VM}(v,c):=\max\{0,1-\frac{r^{PPV}-r^{PPM}}{T^{\MDC}}\}$. 
\qed

\vspace{0.3cm}
\noindent\textbf{Proof of Proposition \ref{prop:cs_sw_cmp}.}

\emph{\noindent\textbullet\ VDD vs. NDC.}
Define $\Delta_{VN}^{CS}=CS^{\VDD}-CS^{\NDC}$ and
$\Delta_{VN}^{SW}=SW^{\VDD}-SW^{\NDC}$. We consider the regions where the platform provides the service under VDD.

(1) If $(v,c)\in\mathcal P_1$ and
$c\ge\tfrac{3v}{2}-1$, then $\Delta_{VN}^{CS}=\frac{(2v-2c-1)^2}{4}\ge0,\Delta_{VN}^{SW}=\frac{3(2v-2c-1)^2}{4}\ge0.$

(2) If $(v,c)\in\mathcal P_1$ and $c<\tfrac{3v}{2}-1$, then
$\Delta_{VN}^{CS}=\frac{h_2(c)}{4}$, where $h_2(c)=8c^2+(12-16v)c+5-12v+7v^2$.
For $\tfrac23<v\le\tfrac{\sqrt2}{2}$, $h_2(c)\ge0$ throughout this region. For
$\tfrac{\sqrt2}{2}<v\le1$, $h_2(c)\ge0$ if and only if
$c\ge\tfrac{4v-3+\sqrt{2v^2-1}}{4}.$
Likewise,
$\Delta_{VN}^{SW}=\frac{h_3(c)}{4}$, where $h_3(c)=8c^2+(4-8v)c-1+4v-3v^2$. For $\tfrac23<v\le\tfrac{6+\sqrt6}{10}$, $h_3(c)\ge0$ throughout this region. For $\tfrac{6+\sqrt6}{10}<v\le1$, $h_3(c)\ge0$ if
$c\le\tfrac{2v-1-\sqrt{10v^2-12v+3}}{4}$ or if $c\ge\frac{2v-1+\sqrt{10v^2-12v+3}}{4}.$ At $v=1$, both differences equal $-c(1-2c)<0$ for $c<\tfrac12$.

(3) If $(v,c)\in\mathcal P_3$, then
$\Delta_{VN}^{CS}=-c(2v-2c-1)-2G^{\VDD}<0$ and 
$\Delta_{VN}^{SW}=c(2c+1-2v)=-c(2v-2c-1)<0.$

(4) If $(v,c)\in\mathcal P_2$, only the midpoint discloses and aggregate outcomes equal those under NDC. 

\emph{\noindent\textbullet\ VDD vs. MDC.}
Suppose the platform provides the service under both regimes. Define
$\Delta_{VM}^{CS}=CS^{\VDD}-CS^{\MDC}$ and
$\Delta_{VM}^{SW}=SW^{\VDD}-SW^{\MDC}$.

(1) If $(v,c)\in\mathcal P_1$ and $c\ge\tfrac{3v}{2}-1$, then
$\Delta_{VM}^{CS}=\frac{v^2}{4}>0$ and $\Delta_{VM}^{SW}=\frac{3v^2}{4}>0.$

(2) If $(v,c)\in\mathcal P_1$ and $c<\tfrac{3v}{2}-1$, or if $(v,c)\in\mathcal P_3$, let
$y=1-v+c$, then $y>0$ and $2v-y>0$. Thus,
$\Delta_{VM}^{CS}=y^2>0$ and $\Delta_{VM}^{SW}=y(2v-y)>0$.

(3) If $(v,c)\in\mathcal P_2$, the outcomes under VDD equal those under NDC. Proposition \ref{prop:profit_campare_M_N} implies that consumer surplus and social welfare under MDC are strictly below their NDC levels whenever MDC data provision. 

Therefore, whenever both regimes provide the service,
$CS^{\VDD}>CS^{\MDC}$ and $SW^{\VDD}>SW^{\MDC}$.
\qed

\vspace{0.3cm}
\noindent\textbf{Proof of Proposition \ref{prop:3win}.}
A Pareto improvement over both NDC and MDC requires $\pi_i^{\VDD}\ge\max\{\pi_i^{\NDC},\pi_i^{\MDC}\}$ for $i=1,2$, $\Pi_p^{\VDD}\ge\max\{\Pi_p^{\NDC},\Pi_p^{\MDC}\}$, and $CS^{\VDD}\ge\max\{CS^{\NDC},CS^{\MDC}\}$.
 Under VDD, $\pi_i^{\VDD}=\pi_i^{\NDC}$. Proposition \ref{prop:profit_campare_vdd} further gives $\pi_i^{\VDD}\ge\pi_i^{\MDC}$. Hence, the seller condition always holds whenever VDD monetizes data.
By Proposition \ref{prop:cs_sw_cmp}, apart from the zero-measure disclosure cases treated below, $CS^{\VDD}\ge CS^{\NDC}$ under VDD provision if and only if
$\tfrac{1}{2}<v\le\tfrac{\sqrt{2}}{2}$ and $c<v-\tfrac{1}{2}$, or
$\tfrac{\sqrt{2}}{2}<v<1$ and
$\tfrac{4v-3+\sqrt{2v^2-1}}{4}\le c<v-\tfrac{1}{2}$. The remaining weak-comparison cases are the upper boundary
$c=v-\tfrac12$ of $\mathcal P_1$ and $\mathcal P_2$. In these cases only the midpoint discloses and all aggregate VDD outcomes equal NDC, so VDD does not yield a Pareto improvement over NDC.
Moreover, $CS^{\VDD}>CS^{\MDC}$ whenever MDC also monetizes data, while $CS^{\MDC}=CS^{\NDC}$ otherwise. Therefore, the
consumer condition is exactly the set of conditions above \((v,c)\).
The Proposition \ref{prop:profit_campare_vdd} shows that $\Pi_p^{\VDD}\ge\Pi_p^{\NDC}$. Comparing with MDC, if MDC does not monetize data, $\Pi_p^{\MDC}=\Pi_p^{\NDC}$ and the platform condition holds automatically. If MDC also monetizes data, Proposition \ref{prop:profit_campare_vdd} implies that, within the consumer conditions above, $\Pi_p^{\VDD}\ge\Pi_p^{\MDC}$ if and only if $\gamma\ge\gamma^{\mathrm{win}}(v,c)$, where
\[
\gamma^{\mathrm{win}}(v,c)=
\begin{cases}
\tfrac{(3v-2c-2)^2}{v^2},
& \tfrac{2}{3}<v<1,\quad c<\tfrac{3v}{2}-1,\\
0,
& \text{otherwise within the above $CS^{\VDD}\ge\max\{CS^{\NDC},CS^{\MDC}\}$ regions}.
\end{cases}
\]

Therefore, VDD yields a Pareto improvement over both NDC and MDC if
$\tfrac{1}{2}<v\le\tfrac{\sqrt{2}}{2}$ and $c<v-\tfrac{1}{2}$, or
$\tfrac{\sqrt{2}}{2}<v<1$ and
$\tfrac{4v-3+\sqrt{2v^2-1}}{4}\le c<v-\tfrac{1}{2}$,
with $\gamma\ge \gamma^{\mathrm{win}}(v,c)$.
\qed

\vspace{0.3cm}
\noindent\textbf{Proof of Proposition \ref{cor:up}.}
At $p_{i,\varnothing}=p_{i,\varnothing}^{\NDC}$, the disclosure conditions in Lemma \ref{app:lem_vdd_pp} give positive-mass disclosure if $c<\min\{v-\tfrac12,1\}$, with $\Omega^{\UP}=[\hat{x}^{\UP},1-\hat{x}^{\UP}]$, where $\hat{x}^{\UP}=\max\{1-v+c,(1+c-p_{\varnothing}^{\NDC})/2\}$. For simplicity, denote $p_{\varnothing}^{\NDC}=p_{i,\varnothing}^{\NDC},i\in\{1,2\}$. These prices are optimal under the cap. If $\hat{x}^{\UP}\ge\min\{v-p_{\varnothing}^{\NDC},\tfrac12\}$, then $p_{\varnothing}^{\NDC}=v/2$, and any feasible price $p$ yields anonymous revenue at most $p(v-p)\le v^2/4$. Otherwise, $\hat{x}^{\UP}=(1+c-p_{\varnothing}^{\NDC})/2\ge c/2$. A deviation with $p\ge c$ yields at most $p\hat{x}^{\UP}\le p_{\varnothing}^{\NDC}\hat{x}^{\UP}$, while one with $p<c$ yields at most $p[\hat{x}^{\UP}+(c-p)/2]\le c\hat{x}^{\UP}<p_{\varnothing}^{\NDC}\hat{x}^{\UP}$. Any symmetric price below $p_{\varnothing}^{\NDC}$ admits a profitable small increase. Hence, $p_{i,\varnothing}^{\UP}=p_{i,\varnothing}^{\NDC}$. Each seller then earns
$r^{\UP}=p_{\varnothing}^{\NDC}\min\{\hat{x}^{\UP},v-p_{\varnothing}^{\NDC},\tfrac12\}+(\tfrac12-\hat{x}^{\UP})^2$.
For $\tfrac12<v<1$ and $0<c<v-\tfrac12$, this gives $G^{\UP}=(v-c-\tfrac12)^2$ if $c\ge\tfrac{3v}{2}-1$, and $G^{\UP}=(4v-5v^2+4c^2)/16$ otherwise. Thus, $G^{\UP}\ge0$ exactly when $4c^2\ge5v^2-4v$. For $v\ge1$, positive-mass disclosure gives $G^{\UP}=[c^2-(p_{\varnothing}^{\NDC})^2]/4<0$. At $v>\tfrac12$ and $c=\min\{v-\tfrac12,1\}$, only $x=\tfrac12$ discloses and $G^{\UP}=0$; otherwise the disclosure set is empty. Since exclusive access still induces no disclosure, the threat effect is zero. The maintained provision rule gives part {\rm(A)}.

The cap preserves every consumer's NDC anonymous-purchase option, while disclosure is voluntary. Hence, $U^{\UP}(x)\ge U^{\NDC}(x)$ for all $x$ and $CS^{\UP}\ge CS^{\NDC}$. Under provision, $SW^{\UP}-SW^{\NDC}=CS^{\UP}-CS^{\NDC}+2G^{\UP}\ge0$; without provision, UP coincides with NDC. This proves part {\rm(B)}.

If MDC also provides, its personalized offers remain available through disclosure under UP, while UP additionally permits anonymous purchasing without privacy cost. Thus, $U^{\UP}(x)\ge U^{\MDC}(x)$ and $CS^{\UP}\ge CS^{\MDC}$. Moreover, Lemma \ref{lem:fees} gives $\pi_i^{\UP}=\pi_i^{\NDC}$ and $\pi_i^{\MDC}=(1-\gamma)(r^{\NDC}-T^{\MDC})\le\pi_i^{\NDC}$, proving the first statement in part {\rm(C)}. For platform profit, endogenous MDC provision gives $\Pi_p^{\MDC}=\Pi_p^{\NDC}+2\max\{G^{\MDC}+(1-\gamma)T^{\MDC},0\}$, whereas UP provision gives $\Pi_p^{\UP}=\Pi_p^{\NDC}+2G^{\UP}$. Since $G^{\UP}\ge0$, $\Pi_p^{\UP}\ge\Pi_p^{\MDC}$ if and only if $r^{\UP}-r^{\MDC}\ge(1-\gamma)T^{\MDC}$, equivalently, $\gamma\ge\gamma^{UM}(v,c)$. Positive-mass UP provision requires $v<1$ and $c<v-\tfrac12$. A positive measure of central consumers then obtains positive utility under UP while being unserved under NDC, so $CS^{\UP}>CS^{\NDC}$. If MDC provides, $\pi_i^{\UP}>\pi_i^{\MDC}$ because $T^{\MDC}>0$; otherwise MDC coincides with NDC. Thus, positive-mass disclosure together with $\gamma\ge\gamma^{UM}(v,c)$ yields a Pareto improvement over both benchmarks. Conversely, without positive-mass disclosure, UP reproduces NDC aggregate payoffs, while if $\gamma<\gamma^{UM}(v,c)$ the platform is worse off than under MDC. The conditions are necessary and sufficient. \qed

\noindent\textbf{Proof of Proposition \ref{app:prop_endogenous_exclusive_access}.}
(1) Optimal data access.
Under no access, platform profit is $\Pi_N^g=2\gamma R_N^g$. Under bilateral access, Lemma \ref{lem:fees} gives $\Pi_B^g=2R_B^g-2(1-\gamma)R_U^g$, so $\Pi_B^g-\Pi_N^g=2[G^g+(1-\gamma)T^g]$. To implement exclusive access for seller 1, set $F_{2\mid AC}^g>(1-\gamma)(R_B^g-R_U^g)$, $F_1^g=(1-\gamma)(R_I^g-R_U^g)$, and $F_{2\mid RE}^g=(1-\gamma)(R_I^g-R_N^g)$. The resulting profit is $\Pi_E^g=R_I^g+(2\gamma-1)R_U^g$, so $\Pi_E^g-\Pi_N^g=R_I^g-R_U^g-2\gamma T^g$. Exclusive access for seller 2 yields weakly less, since the profit difference is $(1-\gamma)(R_N^g-R_U^g)=(1-\gamma)T^g\ge0$. Finally, $\Pi_B^g-\Pi_E^g=2R_B^g-R_I^g-R_U^g$. Comparing $\Pi_N^g,\Pi_B^g,\Pi_E^g$ proves part {\rm(A)}.

(2) MDC.
Under unilateral MDC,
$R_I^{MDC}+R_U^{MDC}=0$ if $v-c<\tfrac34$, $(v-c-\tfrac34)^2+\tfrac18$ if $\tfrac34\le v-c<\tfrac32$, and $\tfrac{11}{16}$ if $v-c\ge\tfrac32$. If $v\le1$, positive unilateral revenue requires $v>\tfrac34$, and $R_I^{MDC}+R_U^{MDC}<\tfrac3{16}<\tfrac{v^2}{2}=2R_N^{MDC}$. If $1<v<\tfrac32$, then $v-c<v$ and $2R_N^{MDC}-R_I^{MDC}-R_U^{MDC}>\tfrac38-(v-\tfrac54)^2>0$. If $v\ge\tfrac32$, $R_I^{MDC}+R_U^{MDC}\le\tfrac{11}{16}<1=2R_N^{MDC}$. Hence, $R_I^{MDC}+R_U^{MDC}<2R_N^{MDC}$. Together with $G^{MDC}<0$, neither bilateral nor exclusive access increases total seller sales revenue relative to NDC. Direct substitution gives $2R_B^{MDC}\ge R_I^{MDC}+R_U^{MDC}$ if and only if $v-c\le(3+\sqrt6)/4$. Thus, exclusive access can be optimal only if $v-c>(3+\sqrt6)/4>1$. Under exclusive access, $F_1^{MDC}=(1-\gamma)(R_I^{MDC}-R_U^{MDC})$, so both sellers earn $\pi_i^{MDC}=(1-\gamma)R_U^{MDC}<(1-\gamma)R_N^{MDC}=\pi_i^{NDC}$ because $T^{MDC}>0$. When $v-c>1$, NDC covers the entire market and assigns each consumer to the closer seller. Exclusive MDC introduces privacy costs and allocates a positive measure of consumers to the more distant seller, while possibly also reducing coverage. Hence, $SW^{MDC}<SW^{NDC}$.
Under exclusive MDC, the entry set is $[\max\{0,\tfrac32-v+c\},1]$, and each entrant obtains utility $v-c-\tfrac32+x$. Therefore, $CS^{MDC}=\tfrac12(v-c-\tfrac12)^2$ if $v-c<\tfrac32$, and $CS^{MDC}=v-c-1$ otherwise. For $1<v<\tfrac32$, the exclusive-access condition $v-c>(3+\sqrt6)/4$ implies $CS^{MDC}>\tfrac14=CS^{NDC}$. For $v\ge\tfrac32$ and $v-c<\tfrac32$, $CS^{MDC}>CS^{NDC}$ is equivalent to $c<v-\tfrac12-\sqrt{2v-\tfrac52}$; for $v-c\ge\tfrac32$, it is equivalent to $c<\tfrac14$. Combining these cases gives the regions in part (B).

(3) VDD.
Under exclusive VDD, no consumer discloses, so $R_I^{\VDD}=R_U^{\VDD}=R_N^{\VDD}$ and $T^{\VDD}=0$. Exclusive access therefore yields $\Pi_E^{\VDD}=2\gamma R_N^{\VDD}$, identical to NDC, while generating no usable information. Under the maintained provision convention, the platform therefore does not implement exclusive access.
\qed

\vspace{0.3cm}
\noindent\textbf{Proof of Proposition \ref{prop:heterogeneous_privacy}.}
When neither seller adopts, consumers incur no privacy cost, so $R_N^{g-h}=r^{\NDC}$ for $g\in\{\MDC,\VDD\}$. The contracting argument in Lemma~\ref{lem:fees} is unchanged and gives $F_1^{g-h}=F_{2\mid AC}^{g-h}=(1-\gamma)(R_B^{g-h}-R_U^{g-h})$. Hence, conditional on bilateral provision, $\pi_i^{g-h}=(1-\gamma)R_U^{g-h}$ and $\Pi_p^{g-h}-\Pi_p^{\NDC}=V_{\mathrm{data}}^{g-h}=2G^{g-h}+2(1-\gamma)T^{g-h}$. Write $[z]_+\triangleq\max\{z,0\}$ and $[z]_a^b\triangleq \min\{\max\{z,a\},b\}$

{(1) MDC.}
The bilateral MDC argument in Proposition~\ref{prop:equilibrium_mdc} and Lemma~\ref{app:lem_mdc_PP} applies separately to each privacy type after replacing $c$ with $c_k$. Privacy costs are sunk at the pricing stage, so personalized prices do not depend on the distribution of privacy types. Because each type is uniformly distributed over locations, aggregate bilateral outcomes are the weighted averages of the corresponding homogeneous-cost outcomes. This aggregation holds conditional on bilateral access, not after the platform's endogenous provision decision.

In particular, $R_B^{\MDC-h}=\rho\left.r^{PPM}\right|_{c=c_H}+(1-\rho)\left.r^{PPM}\right|_{c=c_L}$. Proposition~\ref{prop:equilibrium_mdc} establishes $r^{PPM}<r^{\NDC}$ for every $c>0$, so $R_B^{\MDC-h}<r^{\NDC}$ and $G^{\MDC-h}<0$. The same weighting applies to coverage and welfare. For $v\le1$, the baseline bilateral entry intervals give coverage $2[v-c_k-\tfrac12]_+<v$ and welfare $3[v-c_k-\tfrac12]_+^2<3v^2/4$ per unit mass of type-$k$ consumers. For $v>1$, NDC serves every consumer through the closer seller without privacy costs. Bilateral MDC either preserves a transaction while adding a positive privacy cost or eliminates a transaction with positive social surplus. Thus, each type has weakly lower coverage and strictly lower welfare than NDC. Weighting these comparisons yields $D^{\MDC-h}\le D^{\NDC}$ and $SW^{\MDC-h}<SW^{\NDC}$ whenever the service is provided.

The unilateral revenue comparison requires a separate argument because the uninformed seller charges a common uniform price. Suppose only seller 1 has access, and let $p=p_{2,\varnothing}^{PNM-h}$. A type-$k$ consumer purchasing from seller 2 must satisfy $v-p-(1-x)-c_k\ge0$ by voluntary entry and rational expectations. Since $c_k\ge c_L$, seller 2's demand is at most $[v-c_L-p]_+$. Thus, for $v\le1$, $R_U^{\MDC-h}\le\max_{p\ge0}p[v-c_L-p]_+=[v-c_L]_+^2/4<v^2/4=r^{\NDC}$. For $v>1$, a consumer purchasing from seller 2 must also satisfy $x\ge(1+p)/2$; otherwise, seller 1 can profitably attract her with a nonnegative personalized price. Hence, $R_U^{\MDC-h}\le\max_{p\ge0}p[(1-p)/2]_+=1/8<r^{\NDC}$. Therefore, $T^{\MDC-h}>0$, and the data-value identity gives the stated MDC provision condition.

For consumer surplus, let $CS_k^{\MDC-h}$ denote surplus per unit mass of type-$k$ consumers under bilateral provision. Applying the baseline entry utilities type by type gives $CS_k^{\MDC-h}=\int_0^1[v-\max\{x,1-x\}-c_k]_+\,dx$. Consequently, $CS_H^{\MDC-h}\le CS_L^{\MDC-h}$ and $CS^{\MDC-h}=\rho CS_H^{\MDC-h}+(1-\rho)CS_L^{\MDC-h}$. Define $\rho'=1$ if $CS_H^{\MDC-h}\ge CS^{\NDC}$, $\rho'=0$ if $CS_L^{\MDC-h}<CS^{\NDC}$, and $\rho'=(CS_L^{\MDC-h}-CS^{\NDC})/(CS_L^{\MDC-h}-CS_H^{\MDC-h})$ otherwise. The last case satisfies $CS_H^{\MDC-h}<CS^{\NDC}\le CS_L^{\MDC-h}$, so its denominator is strictly positive. Rearranging the weighted-average inequality proves the stated condition, including cases with equal type-specific surplus.

{(2) VDD.}
The nondisclosure argument in Lemma~\ref{app:lem_vdd_pn} applies separately to each privacy type because it holds for any anticipated uniform prices and requires only $c>0$. Replacing $c$ with $c_k$, anonymity improves each consumer's utility over disclosure by at least $c_k>0$. Hence, both types opt out under unilateral access, and all consumers enter because anonymous entry allows nonpurchase at zero utility. The continuation therefore coincides with NDC, giving $R_I^{\VDD-h}=R_U^{\VDD-h}=R_N^{\VDD-h}=r^{\NDC}$ and $T^{\VDD-h}=0$. By Proposition~\ref{prop:data_value_decomposition}, $V_{\mathrm{data}}^{\VDD-h}=2G^{\VDD-h}$. Thus, the platform provides the service if and only if $G^{\VDD-h}\ge0$ and the opt-in set is nonempty. It remains to establish the uniform-price comparison. Suppose a symmetric bilateral equilibrium satisfies $p_{i,\varnothing}^{PPV-h}<p_{i,\varnothing}^{\NDC}$. For a type-$k$ consumer at $x\le\tfrac12$, disclosure yields $[v-1+x]_+-c_k$, whereas anonymity yields $[v-p_{i,\varnothing}^{PPV-h}-x]_+$. The disclosure gain is nondecreasing in $x$, so anonymous consumers form intervals adjacent to the endpoints. Since $p_{i,\varnothing}^{PPV-h}<p_{i,\varnothing}^{\NDC}\le1$ and $c_k>0$, these intervals have positive length. A small upward deviation in the uniform price leaves disclosure and personalized-price revenue unchanged. For a type whose anonymous interval ends strictly before $\min\{v-p_{i,\varnothing}^{PPV-h},\tfrac12\}$, a sufficiently small increase preserves demand and raises revenue. Otherwise, the local demand response coincides with NDC. The corresponding revenue derivative is $v-2p_{i,\varnothing}^{PPV-h}$ if $v-p_{i,\varnothing}^{PPV-h}<\tfrac12$, $(1-p_{i,\varnothing}^{PPV-h})/2$ if $v-p_{i,\varnothing}^{PPV-h}>\tfrac12$, and $\tfrac12-p_{i,\varnothing}^{PPV-h}$ at equality, using the right derivative. By Proposition~\ref{prop:without_data}, each derivative is strictly positive when $p_{i,\varnothing}^{PPV-h}<p_{i,\varnothing}^{\NDC}$. Thus, a small price increase raises revenue for both types, contradicting equilibrium. Therefore, $p_{i,\varnothing}^{PPV-h}\ge p_{i,\varnothing}^{\NDC}$. This proves part~(B).

The personalized-pricing argument underlying Proposition~\ref{prop:price_coverage_cmp} applies separately to each privacy type. Under bilateral access, disclosure offers the same personalized prices under MDC and VDD, yielding utility $[v-\max\{x,1-x\}]_+-c_k$. VDD preserves this purchasing opportunity while allowing consumers to remain anonymous. Thus, every consumer obtains weakly higher utility under VDD, and every consumer purchasing under MDC continues purchasing under VDD, apart from measure-zero indifference boundaries. Aggregating across types gives $CS^{\VDD-h}\ge CS^{\MDC-h}$ and $D^{\VDD-h}\ge D^{\MDC-h}$.

For social welfare, consumers purchase from the closer seller under either bilateral regime. A consumer purchasing under MDC who continues disclosing under VDD retains the same allocation and privacy cost, whereas one who switches to anonymity avoids the privacy cost without changing the allocation. Any additional VDD buyer purchases anonymously and generates nonnegative social surplus, since voluntary purchase at a nonnegative price implies $v-\min\{x,1-x\}\ge0$. Therefore, $SW^{\VDD-h}\ge SW^{\MDC-h}$. Finally, the contracting argument in Proposition~\ref{prop:profit_campare_vdd} applies unchanged to the heterogeneous revenue states. By Lemma~\ref{lem:fees} and parts {\rm(A)}--{\rm(B)}, $\pi_i^{\VDD-h}=(1-\gamma)r^{\NDC}=\pi_i^{\NDC}$ and $\pi_i^{\MDC-h}=(1-\gamma)(r^{\NDC}-T^{\MDC-h})\le\pi_i^{\NDC}$. When both regimes provide the service, $\Pi_p^{\VDD-h}-\Pi_p^{\MDC-h}=2[G^{\VDD-h}-G^{\MDC-h}-(1-\gamma)T^{\MDC-h}]$. Since $T^{\MDC-h}>0$ and $\gamma\ge0$, the difference is nonnegative if and only if $\gamma\ge\max\{0,1-(G^{\VDD-h}-G^{\MDC-h})/T^{\MDC-h}\}$. \qed

\vspace{0.3cm}
\noindent\textbf{Proof of Proposition \ref{prop:imperfect_quality}.}
{(1) MDC. }Under MDC, an exclusive adopter can identify consumers with positive probability and exclusive access may remain informative. By Lemma \ref{lem:fees}, $F_1^{\MDC}(q)=F_{2\mid AC}^{\MDC}(q)=(1-\gamma)[G^{\MDC}(q)+T^{\MDC}(q)]$. Hence, the platform's profit gain from bilateral provision is $2[G^{\MDC}(q)+(1-\gamma)T^{\MDC}(q)]$, so it provides the service if and only if $G^{\MDC}(q)+(1-\gamma)T^{\MDC}(q)\ge0$.

{(2) VDD. }Under VDD, we first characterize the bilateral VDD-IQ pricing equilibrium, which characterizes the bilateral revenue
$G^{\VDD}(q)$ and also supports the counterfactual results in Figures~\ref{fig:data_provision_imp}--\ref{fig:data_quality}. 

\begin{lemma}[Bilateral VDD equilibrium under imperfect identification]
\label{app:lem_iq_pp_vdd}
Under VDD-IQ with both sellers purchase data service, all consumers enter, and the following results hold: 

{\rm(A)} For a candidate common uniform price $p$, let $\hat x=\max\{0,1-v+c,(1+c-p)/2\}$ and $d=\min\{\hat x,1/2\}$. The opt-in set is $[\hat x,1-\hat x]$ if $\hat x\le1/2$, and is empty otherwise. Identified consumers face $p_1^{PPV}(x)=[1-2x]_+$ and $p_2^{PPV}(x)=[2x-1]_+$.

{\rm(B)} Define $y(z,p)=[\min\{v-z,(1+p-z)/2\}]_0^1$, $A(y;d)=\min\{y,d\}+[y-(1-d)]_+$, and $H(z;p,d,q)=z[(1-q)y(z,p)+qA(y(z,p);d)]$. Then $p$ is a symmetric pure-strategy rational-expectations equilibrium price if and only if $p\in\arg\max_{z\ge0}H(z;p,d,q)$, holding $d$ fixed. Global optimality can be checked at the breakpoints and interior stationary points of the quadratic branches.

{\rm(C)} If $\tfrac12<v<1$ and $0<c<v-\tfrac12$, a symmetric equilibrium exists with a unique common uniform price and disclosure set. Let $a=1-v+c$ and $b_q=v(1-q)/(2-q)$. If $a<b_q$, then $d=(1-q)(2+2c-v)/(4-3q)$ and $p=v/2+q(2+2c-v)/[2(4-3q)]$. If $b_q\le a<v/2$, then $(d,p)=(a,v-a)$. If $a\ge v/2$, then $(d,p)=(a,v/2)$.
\end{lemma}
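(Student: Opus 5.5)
Work backward through the stages of the bilateral VDD-IQ subgame: first personalized pricing after identification, then consumers' entry and disclosure given an anticipated common uniform price $p$, then a seller's choice of uniform price holding disclosure fixed (consumers do not see realized uniform prices).

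\emph{Part (A).} Take an identified consumer at $x\le\tfrac12$. Both sellers know her location, so Bertrand competition in personalized offers gives seller 1 the price $[1-2x]_+$. She buys from the nearer seller with utility $[v-\max\{x,1-x\}]_+$, because $v>\tfrac12$ keeps this nonnegative for entrants. I will justify these prices exactly as in Lemma~\ref{app:lem_mdc_PP}. The identification step is not needed, since both sellers observe the same realized signal.

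\emph{Disclosure condition.} Disclosure succeeds with probability $q$. On success the consumer pays $c$ and gets $v-(1-x)-c$; on failure she stays anonymous and gets the anonymous payoff at no cost. The disclosure gain is therefore $q\big[(v-1+x-c)-[v-p-x]_+\big]$. This is increasing in $x$ on $[0,\tfrac12]$, so the opt-in set is an interval centred at the midpoint. Its boundary is $\max\{1-v+c,(1+c-p)/2\}$: it is the first expression when the anonymous option gives zero, and it solves $v-1+x-c=v-p-x$ otherwise. Since $q>0$ cancels, $\hat x$ does not depend on $q$. Entry is weakly optimal because nonpurchase after anonymous entry yields zero.

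\emph{Part (B).} Fix $d$ and consider seller 1 deviating to a hidden uniform price $z$.
\begin{itemize}
\item Its anonymous buyers are the consumers on $[0,y]$ satisfying both participation ($x\le v-z$) and the comparison with the rival's anonymous price ($x\le(1+p-z)/2$); call this cutoff $y(z,p)$.
\item Among these buyers, the nondisclosers (mass $1-q$ of each location's weight among opt-ins, plus all opt-outs) buy anonymously. I need to count them carefully.
\begin{itemize}
\item Opt-outs on $[0,d]$ and $[1-d,1]$ contribute $A(y;d)$ with full weight.
\item Opt-ins whose identification failed, on $[d,1-d]$, contribute with weight $1-q$.
\end{itemize}
\item Combining gives $(1-q)y+qA$, which yields $H$.
\end{itemize}
I have to argue that personalized revenue is unaffected by $z$. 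Seller 1 earns nothing on identified consumers at $x\ge\tfrac12$. For $x<\tfrac12$, the rival's personalized offer is the binding competitor there, not $z$, so the personalized revenue is fixed. Hence $z$ is a best response if and only if it maximizes $H$, and a symmetric equilibrium requires $p\in\arg\max H$. Because $H$ is piecewise quadratic in $z$ with kinks where $y$ crosses $d$, $1-d$, $0$, $1$, or switches between branches, global optimality reduces to comparing finitely many candidates.

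\emph{Part (C).} For $\tfrac12<v<1$, $v-z<(1+p-z)/2$ near symmetric prices, so the relevant branch is $y=v-z$. There are three cases.
\begin{itemize}
\item \textbf{Case $y>d$ locally.} Then $H=z[(1-q)(v-z)+qd]$. The first-order condition, combined with the fixed point $d=(1+c-p)/2$, gives the formulas involving $4-3q$. This case is consistent exactly when that $d$ exceeds $a$ and $v-p>d$, which I expect reduces to $a<b_q$.
\item \textbf{Case $v-p=d=a$.} The kink in $H$ pins $p=v-a$. I then check that the left and right derivatives bracket zero; this should give $b_q\le a<v/2$.
\item \textbf{Case $a\ge v/2$.} The anonymous demand $v-z$ lies inside $[0,d]$, so $H=z(v-z)$ and $p=v/2$.
\end{itemize}

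\textbf{Main obstacle.} The hard step is part (C): establishing global optimality and uniqueness across all the kinks. For global optimality, I will rule out the poaching branch $(1+p-z)/2$ and deviations beyond $1-d$, using $v<1$ to show those candidates yield less. For uniqueness, I will show the fixed-point map in $p$ is monotone across the three cases.
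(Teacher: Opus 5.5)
Your route is the paper's. You use the disclosure cutoff $\max\{0,1-v+c,(1+c-p)/2\}$ with $q$ cancelling, and the anonymous-demand count $(1-q)y+qA(y;d)$, with personalized revenue frozen because disclosure precedes realized uniform prices. In (C) you split into the interior branch, the kink $p=v-d$, and $d\ge v/2$, and then check poaching and opposite-tail deviations. Your expectation that the first case reduces to $a<b_q$ is correct, since $v-p>d$ is equivalent to $d<b_q$.

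The gap is in the uniqueness half of (C). You assert that $v-z<(1+p-z)/2$ near symmetric prices, so that $y=v-z$ is the relevant branch. At $z=p$ this says $p>v-\tfrac12$, which is a property of equilibrium prices, not of an arbitrary candidate. You also never exclude $p\ge v$. Your case analysis and fixed-point argument only range over candidates on the participation-constrained branch. So a symmetric equilibrium on the competition-constrained branch (where $y(p,p)=\tfrac12$), or one with zero anonymous demand, is never ruled out. The claimed uniqueness of the common price and disclosure set therefore does not yet follow. Existence is unaffected, since it needs only your candidate plus the global check.

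The paper closes this before the case split by showing every symmetric equilibrium has $v/2\le p<v$:
\begin{itemize}
\item For $p<v/2$, the right derivative of $H(\cdot;p,d,q)$ at $z=p$ is $v-2p$ or $(1-q)(v-2p)+qd$ on the participation branch, and $(1-p)/2$ or $(1-q)(1-p)/2+qd$ on the competition branch. All are strictly positive.
\item For $p\ge v$, anonymous revenue is zero, while a price slightly below $v$ earns a positive amount because $d\ge a>0$.
\end{itemize}
This puts every symmetric equilibrium on the participation branch and also gives $p>c$ and $0<d<\tfrac12$.

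Two smaller repairs are needed.
\begin{itemize}
\item In the kink case, $d=a$ must be derived. If $d=(1+c-p)/2>a$ at $p=v-d$, then $d=(a+d)/2$, a contradiction.
\item The fixed-point map need not be monotone. Let $\phi(d)$ be the locally optimal price: $v/2+qd/[2(1-q)]$ for $d<b_q$, $v-d$ for $b_q\le d<v/2$, and $v/2$ otherwise. Let $\Psi(p)=\phi(\max\{a,(1+c-p)/2\})$. Then $\Psi$ rises with slope $\tfrac12$ where $b_q\le d<v/2$ and falls with slope $-q/[4(1-q)]$ where $a<d<b_q$. Uniqueness instead comes from $\Psi$ being continuous with slope everywhere below one, so $p-\Psi(p)$ is strictly increasing.
\end{itemize}
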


When multiple symmetric equilibria satisfy part {\rm(B)}, we apply the maintained selection rule, maximizing each seller's sales revenue and breaking ties in favor of greater disclosure. Next, we suppose only seller 1 has data access. If a consumer opts in, she is identified with probability $q$, in which case her utility is $U^{in}(x)=\max\{0,v-p_{2,\varnothing}-(1-x)\}-c$; with probability $1-q$, identification fails and she obtains $U^{out}(x)$, where $U^{out}(x)=\max\{0,v-p_{1,\varnothing}-x,v-p_{2,\varnothing}-(1-x)\}$. At the anticipated equilibrium prices, opting in yields expected utility $q[\max\{0,v-p_{2,\varnothing}-(1-x)\}-c]+(1-q)U^{out}(x)\le U^{out}(x)-qc<U^{out}(x)$ for every $q>0$. Hence, no consumer discloses under unilateral access. Therefore, $R_I^{\VDD}(q)=R_U^{\VDD}(q)=R_N^{\VDD}(q)=r^{\NDC}$ and $T^{\VDD}(q)=0$ for $q\in(0,1]$. At $q=0$, no usable information is generated, so $T^{\VDD}(0)=0$ as well. By Lemma \ref{lem:fees}, $F_1^{\VDD}(q)=F_{2\mid AC}^{\VDD}(q)=(1-\gamma)G^{\VDD}(q)$ and $F_{2\mid RE}^{\VDD}(q)=0$. Hence, $V_{data}^{\VDD}(q)=2G^{\VDD}(q)$. Under the maintained convention that the service is not activated when the equilibrium opt-in set is empty, the platform provides the service if and only if $G^{\VDD}(q)\ge0$ and the opt-in set is nonempty. 
\qed

\section{Proofs of Results in Appendix}\label{app:sec_proof_app}
\noindent{\bf Proof of Lemma \ref{app:lem_mdc_PP}.}
We solve the subgame by backward induction. Conditional on entry, the privacy cost $c$ is sunk, so not purchasing yields utility $-c$. Consider $x\le\tfrac12$. Seller 1 has mismatch-cost advantage $1-2x$. If seller 2 can generate positive gross surplus at zero price, Bertrand competition yields $p_2(x)=0$ and $p_1(x)=1-2x$, leaving gross surplus $v-(1-x)$; otherwise, seller 1 extracts the full gross surplus $v-x$. Thus, continuation utility from entry is $\max{v-(1-x),0}-c$. Since $c>0$, entry occurs if and only if $x\ge1-(v-c)$. Symmetrically, for $x\ge\tfrac12$, entry requires $x\le v-c$. Hence, the entry set is empty if $v-c<\tfrac12$, $[1-(v-c),v-c]$ if $\tfrac12\le v-c<1$, and $[0,1]$ if $v-c\ge1$, equivalent to the regions in the lemma. For every entering consumer with $x\le\tfrac12$, $v-(1-x)\ge c>0$, so both sellers can profitably serve the consumer and Bertrand competition gives $p_1^{PPM}(x)=1-2x$ and $p_2^{PPM}(x)=0$. Symmetrically, for $x\ge\tfrac12$, $p_1^{PPM}(x)=0$ and $p_2^{PPM}(x)=2x-1$. Therefore, for all $x\in\Omega^{PPM}$, $p_1^{PPM}(x)=\max\{1-2x,0\}$ and $p_2^{PPM}(x)=\max\{2x-1,0\}$.  \qed

\vspace{0.3cm}
{\bf Proof of Lemma \ref{app:lem_mdc_PN}.}
Fix an anticipated entry set $[\hat{x}^{PNM},1]$. Under the maintained tie-breaking rule, the informed seller can attain the highest personalized price that leaves the consumer weakly preferring its offer to the uninformed alternative. Given seller 2's uniform price $p_{2,\varnothing}$, seller 1's optimal personalized price is
$\hat p_1(x,p_{2,\varnothing})=\max\{0,\min\{v-x,1+p_{2,\varnothing}-2x\}\}$.
Hence, a consumer enters if and only if $x\ge1-(v-c-p_{2,\varnothing})$. For an interior REE, rational expectations therefore require
$p_{2,\varnothing}=\hat{x}^{PNM}+v-c-1$.

Conditional on $\hat{x}^{PNM}$, seller 2's demand at price $p$ is
$D_2(p,\hat{x}^{PNM})=[1-\max{\hat{x}^{PNM},(1+p)/2,1-v+p}]_+$.
At an interior fixed point, $1-v+p=\hat{x}^{PNM}-c<\hat{x}^{PNM}$, so the participation constraint is locally slack. If $\hat{x}^{PNM}<v-c$, seller 2's local revenue is $p(1-p)/2$, which requires $p=\tfrac12$ and hence $\hat{x}^{PNM}=\tfrac32-(v-c)$. This fixed point exists if and only if $v-c\ge\tfrac34$. Moreover, when $\hat{x}^{PNM}\le\tfrac34$, $pD_2(p,\hat{x}^{PNM})\le p(1-p)/2\le\tfrac18$, and $p=\tfrac12$ attains the bound, so it is globally optimal. If $\hat{x}^{PNM}>v-c$, seller 2's revenue is locally increasing in $p$, so no fixed point exists. If $\hat{x}^{PNM}=v-c$, the fixed-point price is $p=2(v-c)-1$, which is optimal only when $v-c\ge\tfrac34$. Thus, for $\tfrac34<v-c\le1$, there is also an equilibrium with entry set $[v-c,1]$; the largest-entry selection chooses $\hat{x}^{PNM}=\tfrac32-(v-c)$. If $v-c\ge\tfrac32$, this threshold is nonpositive and all consumers enter. Therefore, $\Omega^{PNM}=[\max{0,\tfrac32-(v-c)},1]$ whenever $v-c\ge\tfrac34$, and is empty otherwise.

Finally, substituting $p_{2,\varnothing}^{PNM}=\tfrac12$ into seller 1's best response gives $p_1^{PNM}(x)=\max\{0,\min\{v-x,\tfrac32-2x\}\}$. For every entrant with $x\le\tfrac34$, $x\ge\max\{0,\tfrac32-(v-c)\}>\tfrac32-v$, so $\tfrac32-2x<v-x$; for $x\ge\tfrac34$, the optimal price is zero. Hence, $p_1^{PNM}(x)=\max\{\tfrac32-2x,0\}$. \qed

\vspace{0.3cm}
\noindent{\bf Proof of Remark \ref{app:rem_mdc_PN}.}
Consider the candidate entry threshold $\hat{x}^{PNM}=v-c$. Since $v-c>\tfrac34$, seller 2's best response is $\widetilde p_{2,\varnothing}^{PNM}=2(v-c)-1$. Substituting this price into the entry threshold $1-(v-c-p_{2,\varnothing})$ returns $v-c$, establishing rational-expectations consistency. Seller 2 then serves consumers in $(v-c,1]$ and earns $\widetilde r_2^{PNM}=[2(v-c)-1][1-(v-c)]$. Under the maximal-participation equilibrium, $p_{2,\varnothing}^{PNM}=\tfrac12$ and $r_2^{PNM}=\tfrac18$. Hence,
$r_2^{PNM}-\widetilde r_2^{PNM}=2(v-c-\tfrac34)^2>0$. Moreover, $c>0$ and $v-c>\tfrac34$ imply $v>\tfrac34$. Thus, $r^{\NDC}>\tfrac18\ge\widetilde r_2^{PNM}$: if $v\le1$, $r^{\NDC}=v^2/4>9/64>1/8$; if $1<v<\tfrac32$, $r^{\NDC}=(2v-1)/4>1/4$; and if $v\ge\tfrac32$, $r^{\NDC}=1/2$. Therefore, $\widetilde T^{\MDC}=r^{\NDC}-\widetilde r_2^{PNM}>0$ and
$\widetilde T^{\MDC}-T^{\MDC}=r_2^{PNM}-\widetilde r_2^{PNM}=2(v-c-\tfrac34)^2>0$.
Since provision requires $G^{\MDC}+(1-\gamma)T^{\MDC}\ge0$, the maximal-participation equilibrium yields the more demanding monetization condition. \qed

\vspace{0.3cm}
\noindent{\bf Proof of Lemma \ref{app:lem_mdc_NP}.}
This subgame is symmetric to the ``PN'' subgame in Lemma \ref{app:lem_mdc_PN} after interchanging the two sellers and replacing $x$ by $1-x$. Hence, $\Omega^{NPM}=\emptyset$ if $v-c<\tfrac34$ and $\Omega^{NPM}=[0,\min\{v-c-\tfrac12,1\}]$ otherwise, with $p_2^{NPM}(x)=\max\{2x-\tfrac12,0\}$ and $p_{1,\varnothing}^{NPM}=\tfrac12$. \qed

\vspace{0.3cm}
\noindent{\bf Proof of Lemma \ref{app:lem_vdd_pp}.}
We restrict attention to symmetric pure-strategy rational-expectations equilibria and select the equilibrium maximizing each seller's sales revenue, breaking ties in favor of more disclosure. Since entry yields at least zero utility, all consumers enter.

Let $p_{i,\varnothing}^e$ be seller $i$'s anticipated uniform price. Consider an opt-in consumer with $x\le\tfrac12$. Once she discloses, $c$ is sunk. For $x\le \tfrac{1}{2}$, if $v\ge1-x$, seller 2 can serve her at price zero, so Bertrand competition gives $p_1(x)=1-2x$ and $p_2(x)=0$, yielding utility $v-(1-x)-c$. If $v<1-x$, seller 2 cannot profitably serve her, so seller 1 charges $v-x$ whenever a purchase occurs, yielding utility $-c$. Thus, $U^{opt-in,e}(x)=[v-(1-x)]_+-c$. Symmetrically, $U^{opt-in,e}(x)=[v-x]_+-c$ for $x\ge\tfrac12$, so $U^{opt-in,e}(x)=[v-\max\{x,1-x\}]_+-c$. Opting out yields $U^{opt-out,e}(x)=\max\{0,v-p_{1,\varnothing}^e-x,v-p_{2,\varnothing}^e-(1-x)\}$. For $x\le\tfrac12$, comparing these utilities shows that disclosure is optimal if and only if $x\ge1-v+c$, $x\ge(1+c-p_{1,\varnothing}^e)/2$, and $p_{2,\varnothing}^e\ge c$. The conditions for $x\ge\tfrac12$ are symmetric. Hence, when $p_{1,\varnothing}^e,p_{2,\varnothing}^e\ge c$, define $\hat x_i^{PPV,e}=\max\{0,1-v+c,(1+c-p_{i,\varnothing}^e)/2\}$. The anticipated disclosure set is $[\hat x_1^{PPV,e},1-\hat x_2^{PPV,e}]$ if $\hat x_1^{PPV,e}+\hat x_2^{PPV,e}\le1$, and is empty otherwise.

Fix an interior disclosure set $[\hat x_1^{PPV},1-\hat x_2^{PPV}]$. Since realized uniform prices are unobserved when disclosure is chosen, a uniform-price deviation leaves this set and personalized-price revenue fixed. Let $q_i\le\hat x_i^{PPV}$ be seller $i$'s anonymous demand adjacent to its endpoint. Its price cannot exceed $v-q_i$ or $1+p_{-i,\varnothing}-2q_i$, so anonymous revenue is at most $q_i\min\{v-q_i,1+p_{-i,\varnothing}-2q_i\}$. The competition constraint cannot bind in equilibrium. If it bound for both sellers, their two equalities would imply $q_1+q_2=1$; if it bound only for seller $i$ while the rival's valuation constraint bound, $p_{i,\varnothing}=1+v-q_{-i}-2q_i\le v-q_i$ would again imply $q_i+q_{-i}\ge1$. Both contradict a positive disclosure interval. Thus, seller $i$ maximizes $q_i(v-q_i)$ over $q_i\le\hat x_i^{PPV}$, giving $q_i=\min\{\hat x_i^{PPV},v/2\}$ and $p_{i,\varnothing}^{PPV}=\max\{v/2,v-\hat x_i^{PPV}\}$.

Rational expectations require $p_{i,\varnothing}^e=p_{i,\varnothing}^{PPV}$. Substituting the preceding price into $\hat x_i^{PPV}=\max\{0,1-v+c,(1+c-p_{i,\varnothing}^{PPV})/2\}$ yields $\hat x_i^{PPV}=1-v+c$. To see this, if $\hat x_i^{PPV}\ge v/2$, then $p_{i,\varnothing}^{PPV}=v/2$; if the price-dependent boundary exceeded $1-v+c$, it would require $c<3v/2-1$, whereas $\hat x_i^{PPV}\ge v/2$ would require $c\ge3v/2-1$, a contradiction. If $\hat x_i^{PPV}<v/2$, then $p_{i,\varnothing}^{PPV}=v-\hat x_i^{PPV}$ and consistency becomes $\hat x_i^{PPV}=\max\{1-v+c,(1-v+c+\hat x_i^{PPV})/2\}$, whose only solution is $\hat x_i^{PPV}=1-v+c$. Therefore, $\hat x_1^{PPV}=\hat x_2^{PPV}=1-v+c$ and $p_{i,\varnothing}^{PPV}=\max\{v/2,2v-c-1\}$.

For an opt-in consumer with $x\le\tfrac12$, $x\ge1-v+c$ implies $v-x-(1-2x)=v-1+x\ge c>0$, so the competition constraint binds and $p_1^{PPV}(x)=1-2x$; symmetrically, $p_2^{PPV}(x)=2x-1$ for $x\ge\tfrac12$. Hence, $p_1^{PPV}(x)=\max\{1-2x,0\}$ and $p_2^{PPV}(x)=\max\{2x-1,0\}$.

Let $\hat x^{PPV}=1-v+c$. It remains to rule out price reductions attracting anonymous consumers from the opposite side. If $\hat x^{PPV}\ge v/2$, any deviation yields anonymous revenue at most $\max_{p\ge0}p(v-p)=v^2/4$, which is attained in equilibrium. If $0<\hat x^{PPV}<\min\{v/2,1/2\}$, then $p_{\varnothing}^{PPV}=v-\hat x^{PPV}$ and revenue from consumers on the seller's own side cannot increase because $q(v-q)$ is increasing for $q\le\hat x^{PPV}$. A deviation crossing the disclosure interval and serving $q\in[\hat x^{PPV},2\hat x^{PPV}]$ yields at most $q[p_{\varnothing}^{PPV}+4\hat x^{PPV}-1-2q]$. Maximizing this quadratic and comparing it with $p_{\varnothing}^{PPV}\hat x^{PPV}$ gives $p_{\varnothing}^{PPV}\le2$ if $\hat x^{PPV}\le\tfrac14$, and $p_{\varnothing}^{PPV}\le1+\sqrt{8\hat x^{PPV}(1-2\hat x^{PPV})}$ if $\hat x^{PPV}>\tfrac14$. These conditions always hold for $v\le\tfrac32$: the first follows from $p_{\varnothing}^{PPV}<2$, while for $\hat x^{PPV}>\tfrac14$ and $p_{\varnothing}^{PPV}>1$, $p_{\varnothing}^{PPV}-1\le\tfrac12-\hat x^{PPV}$ and $8\hat x^{PPV}(1-2\hat x^{PPV})-(\tfrac12-\hat x^{PPV})^2=(\tfrac12-\hat x^{PPV})(17\hat x^{PPV}-\tfrac12)\ge0$. They also hold for $v>\tfrac32$ and $c<1$, because $p_{\varnothing}^{PPV}=1+c-2\hat x^{PPV}<2$ and, when $\hat x^{PPV}>\tfrac14$ and $p_{\varnothing}^{PPV}>1$, $p_{\varnothing}^{PPV}-1=c-2\hat x^{PPV}<1-2\hat x^{PPV}\le\sqrt{8\hat x^{PPV}(1-2\hat x^{PPV})}$.

If $\hat x_1^{PPV}=\hat x_2^{PPV}=0$, the disclosure conditions require $c\le v-1$ and $p_{i,\varnothing}^{PPV}\ge1+c$. Conversely, these conditions induce full disclosure. Since a realized uniform-price deviation cannot create anonymous demand once all consumers have disclosed, any symmetric price weakly above $1+c$ supports the outcome; we take $1+c$ as a representative price.

If the disclosure set is empty or has zero measure, the uniform-price game coincides with NDC, with $p_{\varnothing}^{NDC}=v/2$ for $v\le1$, $p_{\varnothing}^{NDC}=v-\tfrac12$ for $1<v\le\tfrac32$, and $p_{\varnothing}^{NDC}=1$ for $v>\tfrac32$. At these prices, the maximal gain from disclosure is attained at $x=\tfrac12$ and equals $\min\{v-\tfrac12,1\}-c$. Hence, no consumer discloses if $v\le\tfrac12$; for $v>\tfrac12$, the disclosure set is $\{1/2\}$ when $c=\min\{v-\tfrac12,1\}$ and is empty when $c>\min\{v-\tfrac12,1\}$. At $x=\tfrac12$, $p_1^{PPV}(1/2)=p_2^{PPV}(1/2)=0$.

For $v\le\tfrac32$, these conditions yield full disclosure when $v>1$ and $c\le v-1$, interval disclosure $[\max\{0,1-v+c\},\min\{1,v-c\}]$ when $v>\tfrac12$ and $\max\{v-1,0\}<c<v-\tfrac12$, singleton disclosure at $c=v-\tfrac12$, and no disclosure otherwise. For $v>\tfrac32$ and $c<1$, full disclosure occurs when $c\le v-1$, while the same interval disclosure occurs when $v-1<c<1$.

Finally, let $v>\tfrac32$ and $c\ge1$. The NDC-like equilibrium gives each seller revenue $1/2$, while any full-disclosure equilibrium gives $1/4$ and is not selected. An interior-disclosure equilibrium has $\hat x^{PPV}=1-v+c\in(0,1/2)$ and revenue $(\tfrac12-\hat x^{PPV})^2+(v-\hat x^{PPV})\hat x^{PPV}=\tfrac14+(v-1)\hat x^{PPV}$, so it is selected if and only if $\hat x^{PPV}\ge1/[4(v-1)]$, with equality selected because it induces more disclosure. The no-deviation condition additionally requires $\hat x^{PPV}\ge v-2$ when $\hat x^{PPV}\le\tfrac14$, and $17(\hat x^{PPV})^2-2(v+3)\hat x^{PPV}+(v-1)^2\le0$ when $\hat x^{PPV}>\tfrac14$. The latter has roots $(v+3\pm\sqrt{-16v^2+40v-8})/17$. Since $1/[4(v-1)]=v-2$ at $v=(3+\sqrt2)/2$, the smaller root equals $1/4$ at $v=9/4$, and the roots are real up to $v=(5+\sqrt{17})/4$, substituting $\hat x^{PPV}=1-v+c$ gives exactly $\mathcal P_3$. Thus, on $\mathcal P_3$, $\Omega^{PPV}=[1-v+c,v-c]$ and $p_{i,\varnothing}^{PPV}=2v-c-1$; otherwise the selected outcome is that in part {\rm(A)}. 
\qed

\vspace{0.3cm}
\noindent{\bf Proof of Remark \ref{app:rem_vdd_PP}.}
All three regions satisfy $v>\tfrac32$, so $p_{\varnothing}^{NDC}=1$ and $r^{NDC}=1/2$.

For {\rm(i)}, $v\ge2$, $1\le c\le v-1$, and $p_{i,\varnothing}^{PPV}\ge1+c$ imply $1-v+c\le0$ and $(1+c-p_{i,\varnothing}^{PPV})/2\le0$, so the disclosure boundaries are zero and $\Omega^{PPV}=[0,1]$. A realized uniform-price deviation cannot change disclosure and hence cannot generate anonymous demand. Thus, each seller earns $r^{PPV}=\int_0^{1/2}(1-2x)dx=1/4<r^{NDC}$, so $G^{\VDD}<0$ and the platform does not provide the service.

For {\rm(ii)}, let $(v,c)\in\mathcal P_3$, where $c>1$. At $p_{1,\varnothing}^{PPV}=p_{2,\varnothing}^{PPV}=1$, for $x\le\tfrac12$ the gain from disclosure is $2x-c\le1-c<0$; the case $x>\tfrac12$ follows symmetrically. Hence, $\Omega^{PPV}=\varnothing$. The uniform-price game then coincides with NDC, so $p_{i,\varnothing}^{PPV}=1$ and $r^{PPV}=1/2$. Thus, $G^{\VDD}=0$, but the platform does not provide because the opt-in set is empty.

For {\rm(iii)}, the stated restrictions imply $0<1-v+c<1/2$. Letting $\hat x^{PPV}=1-v+c$, the interior-equilibrium conditions established in Lemma \ref{app:lem_vdd_pp} give $p_{i,\varnothing}^{PPV}=v-\hat x^{PPV}=2v-c-1$. The no-deviation condition is $\hat x^{PPV}\ge v-2$ when $\hat x^{PPV}\le1/4$, which follows from $c\ge2v-3$; when $\hat x^{PPV}>1/4$, the stated parameter bounds place $\hat x^{PPV}$ between the roots of $17(\hat x^{PPV})^2-2(v+3)\hat x^{PPV}+(v-1)^2\le0$. Hence, $\Omega^{PPV}=[1-v+c,v-c]$ is a rational-expectations equilibrium. Each seller earns $r^{PPV}=1/4+(v-1)(1-v+c)<1/2$, where the inequality follows from $c<v-1+1/[4(v-1)]$. Thus, $G^{\VDD}<0$ and the platform again does not provide.

Relative to the selected equilibrium, cases {\rm(i)} and {\rm(iii)} change provision only when $c=1$: the selected equilibrium then has $\Omega^{PPV}=\{1/2\}$ and $G^{\VDD}=0$, so provision changes but all payoffs remain unchanged. For $c>1$, both selections induce non-provision.

On $\mathcal P_3$, the selected equilibrium has $r^{PPV}=1/4+(v-1)(1-v+c)\ge1/2$, so the platform provides and earns $\Pi_P^{VDD}=\gamma+2G^{\VDD}\ge\gamma$. Under the alternative empty-disclosure equilibrium it does not provide and earns $\gamma$, so platform profit and seller sales revenues are weakly lower. Seller profits are unchanged because the selected-equilibrium fee leaves each seller with $(1-\gamma)r^{PPV}-(1-\gamma)(r^{PPV}-r^{NDC})=(1-\gamma)r^{NDC}$. Moreover, the selected equilibrium gives $CS^{PPV}=(1-v+c)^2+(v-c-\tfrac12)^2<1/4$, whereas non-provision gives $CS^{NDC}=v-\tfrac54>1/4$; hence consumer surplus is strictly higher under the alternative equilibrium.

In the PP subgame, the equilibria in {\rm(i)} and {\rm(iii)} yield sales revenues $1/4$ and $1/4+(v-1)(1-v+c)<1/2$, respectively, while {\rm(ii)} yields $1/2\le r^{PPV}$ under the selected equilibrium. Since the preceding characterization exhausts the symmetric pure-strategy rational-expectations equilibria and data-service fees are sunk in this subgame, the equilibrium selected in Lemma \ref{app:lem_vdd_pp} weakly maximizes both sellers' continuation profits. Finally, these alternative PP equilibria do not affect the exclusive access subgames, so $T^{\VDD}=0$ holds. \qed

\vspace{0.3cm}
\noindent{\bf Proof of Lemma \ref{app:lem_vdd_pn}.}
By symmetry, we assume without loss of generality that seller 1 makes the purchase. Let $p_{i,\varnothing}^e$ denote the anticipated uniform price of seller $i$. If a consumer at $x$ opts in, seller 1 observes $x$ and can extract any surplus above the consumer's outside option from seller 2 or no purchase. Hence, under rational expectations, her utility from opting in is
$U^{in,e}(x)=\max\{v-p_{2,\varnothing}^e-(1-x),0\}-c$.
If she opts out, both sellers charge uniform prices, giving
$U^{out,e}(x)=\max\{v-p_{1,\varnothing}^e-x,v-p_{2,\varnothing}^e-(1-x),0\}$.
Therefore, $U^{out,e}(x)\ge U^{in,e}(x)+c>U^{in,e}(x)$ for every $x$, so no consumer discloses. Since entering, opting out, and not purchasing yields zero utility, all consumers enter.
\qed

\vspace{0.3cm}
\noindent\textbf{Proof of Proposition \ref{app:prop_cred_micro}.}
In the subsidy region, the platform's net data value is $\Delta\Pi_P^g=2G^g+(1+\eta^g)(1-\gamma)T^g-2a-K$. Thus, $\tfrac{\partial\Delta\Pi_P^g}{\partial\eta^g}=(1-\gamma)T^g$. Since $\eta^g=1-\Phi_\kappa(\Delta L^g)$ and $\Delta L^g=R_N^g-R_I^g+\gamma T^g+a+K$, $\tfrac{\partial\eta^g}{\partial a}=\tfrac{\partial\eta^g}{\partial K}=-\phi_\kappa(\Delta L^g)$ and $\tfrac{\partial\eta^g}{\partial\gamma}=-T^g\phi_\kappa(\Delta L^g)$. If $T^g>0$, higher $a$ or $K$ lowers credibility and data value, while $\tfrac{\partial\Delta\Pi_P^g}{\partial\gamma}=-(1+\eta^g)T^g-(1-\gamma)(T^g)^2\phi_\kappa(\Delta L^g)<0$. An upward shift in the distribution of $\kappa$ raises $\eta^g$ and thus weakly increases data value. If $T^g<0$, $\tfrac{\partial\Delta\Pi_P^g}{\partial a}=(1-\gamma)|T^g|\phi_\kappa(\Delta L^g)-2$, $\tfrac{\partial\Delta\Pi_P^g}{\partial K}=(1-\gamma)|T^g|\phi_\kappa(\Delta L^g)-1$, and $\tfrac{\partial\Delta\Pi_P^g}{\partial\gamma}=(1+\eta^g)|T^g|-(1-\gamma)|T^g|^2\phi_\kappa(\Delta L^g)$. Hence, these derivatives are nonnegative if and only if $\phi_\kappa(\Delta L^g)\ge\tfrac{2}{(1-\gamma)|T^g|}$, $\phi_\kappa(\Delta L^g)\ge\tfrac{1}{(1-\gamma)|T^g|}$, and $\phi_\kappa(\Delta L^g)\le\tfrac{1+\eta^g}{(1-\gamma)|T^g|}$, respectively. An upward shift in $\kappa$ raises $\eta^g$ and therefore weakly decreases data value when $T^g<0$.
\qed

\vspace{0.3cm}
\noindent\textbf{Proof of Proposition \ref{prop:threat_discipline}.}
By definition, $T^g=R_N^g-R_U^g(\widehat{\Omega}^g,q)$ and $R_N^g=\mathcal R_U^g(\varnothing,0)$. Since unilateral information weakly disadvantages the uninformed seller, $T^g\ge0$ holds. The $L$-Lipschitz continuity assumption further implies $T^g=\left|\mathcal R_U^g(\varnothing,0)-\mathcal R_U^g(\widehat{\Omega}^g,q)\right|\le L\int_\Theta q(\theta)\mathbf 1_{\widehat{\Omega}^g}(\theta)d\mu(\theta)=L\beta^g$. Hence, $0\le T^g\le L\beta^g$. Therefore, $\beta^g=0$ implies $T^g=0$, establishing sufficiency. For necessity, suppose $T^g=0$ and $R_U^g(\widehat{\Omega}^g,q)<R_N^g$ whenever $\beta^g>0$. Then $\beta^g>0$ would imply $T^g=R_N^g-R_U^g(\widehat{\Omega}^g,q)>0$, a contradiction. Hence, under the additional condition, threat elimination requires $\beta^g=0$.
\qed

\vspace{0.3cm}
\noindent{\bf Proof of Table \ref{tab:mandatory_equilibrium}.}
Conditional on data provision, $F_1^{\MDC}=F_{2\mid AC}^{\MDC}=(1-\gamma)(r^{PPM}-r_2^{PNM})$. Hence, each seller earns $\pi_i^{\MDC}=(1-\gamma)r^{PPM}-F_1^{\MDC}=(1-\gamma)r_2^{PNM}$, which equals $0$ if $r_2^{PNM}=0$ and $(1-\gamma)/8$ if $r_2^{PNM}=1/8$. The platform earns $\Pi_p^{\MDC}=2\gamma r^{PPM}+2F_1^{\MDC}=2r^{PPM}-2(1-\gamma)r_2^{PNM}$, which reduces to $2r^{PPM}$ and $2r^{PPM}-(1-\gamma)/4$, respectively. If entry is partial, $CS^{\MDC}=2\int_{1-(v-c)}^{1/2}(v-1+x-c)dx=(2v-2c-1)^2/4$; if all consumers enter, $CS^{\MDC}=2\int_0^{1/2}(v-1+x-c)dx=v-c-\tfrac34$. Since prices and platform payments are transfers, $SW^{\MDC}=CS^{\MDC}+2r^{PPM}$. Substituting the corresponding $r^{PPM}$ and $r_2^{PNM}$ gives the outcomes in all provision regions. Whenever the platform does not provide the data service, the equilibrium coincides with NDC. This yields Table \ref{tab:mandatory_equilibrium}. \qed

\vspace{0.3cm}
{\bf Proof of Table \ref{tab:voluntary_equilibrium}.}
Conditional on provision, $F^{\VDD}=F_1^{\VDD}=F_{2\mid AC}^{\VDD}=(1-\gamma)(r^{PPV}-r^{\NDC})$. Hence, $\pi_i^{\VDD}=(1-\gamma)r^{\NDC}$, $\Pi_p^{\VDD}=2r^{PPV}-2(1-\gamma)r^{\NDC}$, and $SW^{\VDD}=CS^{\VDD}+2r^{PPV}$.

When $\tfrac12<v\le\tfrac23$ and $c<v-\tfrac12$, or $\tfrac23<v\le1$ and $\tfrac{3v}{2}-1\le c<v-\tfrac12$, $r^{\NDC}=v^2/4$. Substitution gives $\pi_i^{\VDD}=(1-\gamma)v^2/4$ and $\Pi_p^{\VDD}=[\gamma v^2+(2v-2c-1)^2]/2$. Moreover,
$CS^{\VDD}=2\int_0^{v/2}(v-v/2-x)dx+2\int_{1-(v-c)}^{1/2}(v-1+x-c)dx=[v^2+(2v-2c-1)^2]/4$,
so $SW^{\VDD}=3[v^2+(2v-2c-1)^2]/4$.

When $\tfrac23<v\le1$ and $c<\tfrac{3v}{2}-1$, $p_{i,\varnothing}^{PPV}=2v-c-1$, $r^{\NDC}=v^2/4$, and the market is fully covered. Hence, $\pi_i^{\VDD}=(1-\gamma)v^2/4$ and $\Pi_p^{\VDD}=[-(5-\gamma)v^2+4vc+8v-4c-3]/2$. Consumer surplus is
$CS^{\VDD}=2\int_0^{1-(v-c)}[v-(2v-c-1)-x]dx+2\int_{1-(v-c)}^{1/2}(v-1+x-c)dx=[4(1-v+c)^2+(2v-2c-1)^2]/4$,
and therefore $SW^{\VDD}=(4v-1-8vc+8c^2+4c)/4$.

For $(v,c)\in\mathcal P_3$, $r^{\NDC}=1/2$ and $r^{PPV}=[-4v^2+4vc+8v-4c-3]/4$. Thus, $\pi_i^{\VDD}=(1-\gamma)/2$ and $\Pi_p^{\VDD}=[-4v^2+4vc+8v-4c-(5-2\gamma)]/2$. Since the uniform price, disclosure allocation, and full coverage coincide with the preceding case, $CS^{\VDD}$ and $SW^{\VDD}$ are given by the same expressions. 

Whenever the platform does not provide the data service, the equilibrium coincides with NDC. Substituting these outcomes gives Table \ref{tab:voluntary_equilibrium}. \qed

\vspace{0.3cm}
\noindent{\bf Proof of Lemma \ref{app:lem_iq_pp_vdd}.}
Let $p_{i,\varnothing}^e$ denote seller $i$'s anticipated uniform price. All consumers enter because they can remain anonymous and decline to purchase. For $x\le1/2$, successful identification gives gross utility $u^I(x)=[v-1+x]_+$, whereas anonymity gives $u^A(x)=[v-p^e-x]_+$ under symmetric anticipated uniform prices. Since opting in yields $q[u^I(x)-c]+(1-q)u^A(x)$ and opting out yields $u^A(x)$, disclosure is optimal if and only if $u^I(x)-c\ge u^A(x)$ for $q>0$. This requires $x\ge1-v+c$ and $x\ge(1+c-p^e)/2$. The right half-market is symmetric. Conditional on identification, the privacy cost is sunk and both sellers observe $x$, giving the personalized prices in part {\rm(A)}.

Fix the symmetric anticipated disclosure set, but allow actual uniform prices to differ. If seller $i$ charges $z$ against the rival's price $p_j$, its potential anonymous demand has mass $y(z,p_j)$. Opt-out consumers remain anonymous with probability one, whereas opt-in consumers remain anonymous with probability $1-q$. Thus, anonymous demand is $(1-q)y(z,p_j)+qA(y(z,p_j);d)$. Personalized revenue is unaffected by this uniform-price deviation because the anticipated disclosure set remains fixed. Each seller therefore maximizes $H(z;p_j,d,q)$ holding $p_j$ and $d$ fixed. Imposing $p_1=p_2=p$ and rational-expectations consistency gives the necessary and sufficient condition in part {\rm(B)}.

For completeness, this global maximization requires only finitely many evaluations. Prices $z\ge v$ generate zero anonymous revenue. On $[0,v]$, form the breakpoint set $\mathcal Z_0=(\{0,v,2v-1-p\}\cup\{v-t,1+p-2t:t\in\{0,d,1-d,1\}\})\cap[0,v]$. Between consecutive breakpoints, anonymous demand is affine, say $\alpha-\beta z$ with $\beta\ge0$, and revenue is $z(\alpha-\beta z)$. Its maximum occurs at an endpoint or at $z=\alpha/(2\beta)$ when $\beta>0$ and this point lies inside the interval. Comparing these values establishes global optimality without assuming global concavity.

Now suppose $\tfrac12<v<1$ and $0<c<v-\tfrac12$. Every symmetric equilibrium has $v/2\le p<v$. Indeed, a price below $v/2$ admits a profitable small increase. On the participation-constrained branch, the right derivative is either $v-2p$ or $(1-q)(v-2p)+qd$, both positive. On the competition-constrained branch, it is either $(1-p)/2$ or $(1-q)(1-p)/2+qd$, also positive. A price at least $v$ generates zero anonymous revenue and is dominated by a positive price below $v$, since $d\ge a>0$. Consequently, anonymous demand at symmetry is participation-constrained, and $p\ge v/2>c$ ensures $0<d<1/2$.

For fixed $d$, define $\bar H(z;d)=z\{(v-z)_+-q[(v-z)_+-d]_+\}$. Near a symmetric candidate, this equals the actual anonymous revenue. For $q<1$, its derivatives on the branches $z<v-d$ and $z>v-d$ are $(1-q)(v-2z)+qd$ and $v-2z$, respectively. The derivatives immediately below and above the kink are $(2-q)d-(1-q)v$ and $2d-v$. Hence, a symmetric candidate must satisfy $p=v/2+qd/[2(1-q)]$ when $d<b_q$, $p=v-d$ when $b_q\le d<v/2$, and $p=v/2$ when $d\ge v/2$. At $q=1$, the first branch is absent. Combining these conditions with $d=\max\{a,(1+c-p)/2\}$ gives exactly the three cases in part {\rm(C)}.

It remains to verify global deviations. If $d\ge v/2$, the candidate $p=v/2$ earns anonymous revenue $v^2/4$, while $H(z;p,d,q)\le z(v-z)_+\le v^2/4$ for every $z\ge0$. Suppose instead that $d<v/2$. The candidate satisfies $p+d\le v<1$ and globally maximizes $\bar H$, whose derivative decreases on each branch and drops at the kink. For any deviation with $y(z,p)\le1-d$, monotonicity of $t-q[t-d]_+$ gives $H(z;p,d,q)\le\bar H(z;d)\le\bar H(p;d)=H(p;p,d,q)$. For a deviation reaching the opposite anonymous tail, $y(z,p)>1-d$ requires $0\le z<t:=p+2d-1<d$. Since $t<1/2<(1+p)/2$, we have $H(z;p,d,q)\le z(1+p-z)/2\le t(1-d)<pd\le H(p;p,d,q)$. The strict inequality follows from $pd-t(1-d)=(1-2d)(1-p-d)>0$. Thus, no deviation is profitable. The preceding local conditions exhaust symmetric candidates, establishing existence and uniqueness of the common uniform price and disclosure set in part {\rm(C)}.\qed

\end{APPENDICES}

\end{document}